\title{Faster Sublinear-Time Edit Distance}
\date{}
\author{%
    Karl Bringmann\footnote{Saarland University and Max Planck Institute for Informatics, Saarland Informatics Campus. This work is part of the project TIPEA that has received funding from the European Research Council (ERC) under the European Unions Horizon 2020 research and innovation programme (grant agreement No.~850979).}\and%
    Alejandro Cassis\footnote{Saarland University and Max Planck Institute for Informatics, Saarland Informatics Campus. Supported by TIPEA as above.}\and
    Nick Fischer\footnote{Weizmann Institute of Science. This work is part of the project CONJEXITY that has received funding from the European Research Council (ERC) under the European Union's Horizon Europe research and innovation programme (grant agreement No.~101078482).} \and
    Tomasz Kociumaka\footnote{Max Planck Institute for Informatics.}}
\begin{document}
\maketitle

\begin{abstract}
\noindent
We study the fundamental problem of approximating the edit distance of two strings. After an extensive line of research led to the development of a constant-factor approximation algorithm in almost-linear time, recent years have witnessed a notable shift in focus towards \emph{sublinear-time} algorithms. Here, the task is typically formalized as the \emph{$(k, K)$-gap edit distance} problem: Distinguish whether the edit distance of two strings is at most $k$ or more than $K$.

Surprisingly, it is still possible to compute meaningful approximations in this challenging regime.
Nevertheless, in almost all previous work, truly sublinear running time of $\Oh(n^{1-\varepsilon})$ (for a constant $\varepsilon > 0$) comes at the price of at least \emph{polynomial} gap $K \ge k \cdot n^{\Omega(\varepsilon)}$.
Only recently, [Bringmann, Cassis, Fischer, and Nakos; STOC~'22] broke through this barrier and solved the \emph{sub-polynomial} $(k, k^{1+o(1)})$-gap edit distance problem in time $\Order(n/k + k^{4+\order(1)})$, which is truly sublinear if \smash{$n^{\Omega(1)} \le k \le n^{\frac14-\Omega(1)}$}.
The $n/k$ term is inevitable (already for Hamming distance), but it remains an important task to optimize the $\poly(k)$ term and, in general, solve the $(k, k^{1+o(1)})$-gap edit distance problem in sublinear-time for larger values of~$k$.

In this work, we design an improved algorithm for the $(k, k^{1+\order(1)})$-gap edit distance problem in sublinear time $\Order(n/k + k^{2+\order(1)})$, yielding a significant quadratic speed-up over the previous \makebox{$\Order(n/k + k^{4+\order(1)})$}-time algorithm. Notably, our algorithm is unconditionally almost-optimal (up to subpolynomial factors) in the regime where \smash{$k \leq n^{\frac13}$} and improves upon the state of the art for~\smash{$k \leq n^{\frac12-\order(1)}$}. Similarly to previous results, our algorithm is based on the framework of [Andoni, Krauthgamer, and Onak; FOCS~'10], and thus we can further reduce the gap to polylogarithmic ($K = k\cdot (\log k)^{\Order(1/\varepsilon)}$) at the cost of increasing our running time by a factor $k^\varepsilon$.
\end{abstract}

\thispagestyle{empty}
\setcounter{page}{0}
\clearpage

\section{Introduction}
Comparing texts is an essential primitive in computer science, with countless applications in document processing, speech recognition, computational biology and many more domains. In these areas, one of the most popular and well-studied (dis)similarity measures of two strings $X$ and $Y$ is the \emph{edit distance}~$\ED(X, Y)$ (also known as Levenshtein distance~\cite{Levenshtein66}), which is defined as the minimum number of character insertions, deletions, and substitutions to transform $X$ into $Y$. A famous textbook dynamic programming algorithm computes the edit distance of two length-$n$ strings in time~$\Order(n^2)$~\cite{Vintsyuk68,NeedlemanW70,WagnerF74,Sellers74}, and, despite considerable effort, this running time could only be improved by log-factors~\cite{MasekP80,Grabowski16}. More than 50 years after the initial efforts, this quadratic barrier can be explained using the modern toolkit from fine-grained complexity theory~\cite{BackursI18,AbboudBW15,BringmannK15,AbboudHWW16}. At the time though, this barrier was bypassed by the elegant \emph{Landau--Vishkin} algorithm~\cite{LandauV88,LandauMS98}: It computes the edit distance of two strings in time $\Order(n + k^2)$, where the running time depends on the actual edit distance $k = \ED(X, Y)$. There is little hope to optimize the $\Order(n + k^2)$ time (beyond lower-order factors): Even restricted to instances with $k=\Theta(n^\kappa)$ for some $\kappa \in (\frac12, 1]$, a hypothetical $\Order(n+k^{2-\epsilon})$-time algorithm would directly yield an $\Order(n^{2-\epsilon})$-time algorithm for arbitrary edit distances, violating the fine-grained lower bound. Moreover, for $\kappa \in [0,\frac12]$, the dominating $\Oh(n)$ term is necessary simply to read the input strings.

In this paper, we propose a new algorithm that can be viewed as an alternative to the Landau-Vishkin algorithm, with two significant changes: Instead of computing the edit distance exactly, our algorithm settles for an \emph{approximation} of decent quality. In exchange, it avoids reading the entire input strings and runs in \emph{sublinear} time $\widehat\Order(n/k + k^2)$.\footnote{For readability in this introduction, we write $\widehat\Order(\cdot), \widehat\Omega(\cdot)$ and $\widehat\Theta(\cdot)$ to hide subpolynomial factors $n^{o(1)}$. We also write $\widetilde\Order(\cdot), \widetilde\Omega(\cdot)$ and $\widetilde\Theta(\cdot)$ to hide polylogarithmic factors $(\log n)^{O(1)}$. We do not use this notation in the technical parts to avoid ambiguities.}

\subsection{Previous Work}
Computing the edit distance has been a driving question in string algorithms for decades. Even before the aforementioned hardness results for exact edit distance were known, the community spent extensive efforts towards computing accurate \emph{approximations} of the edit distance in subquadratic time. The first major milestones of this long line of research include various polynomial-factor approximations~\cite{LandauMS98,BarYossefJKK04,BatuES06}, a subpolynomial $n^{\order(1)}$-factor approximation in almost-linear time $n^{1+\order(1)}$ by Andoni and Onak~\cite{AndoniO12} (based on the Ostrovsky-Rabani embedding~\cite{OstrovskyR07}), and a polylogarithmic $(\log n)^{\Order(1/\varepsilon)}$-factor approximation in almost-linear time $\Order(n^{1+\varepsilon})$ by Andoni, Krauthgamer and Onak~\cite{AndoniKO10}. This polylogarithmic approximation remained the state of the art for several years until, only recently, Chakraborty, Das, Goldenberg, Kouck{\'{y}} and Saks~\cite{ChakrabortyDGKS20} achieved a breakthrough. Inspired by~\cite{BoroujeniEGHS21}, they gave the first \emph{constant}-factor approximation in subquadratic time~\smash{$\widetilde\Order(n^{12/7})$}. Their work was later extended in two incomparable directions: On the one hand, subsuming~\cite{BrakensiekR20,KouckyS20}, Andoni and Nosatzki~\cite{AndoniN20} improved the running time and developed a constant-factor approximation in time $\Order(n^{1+\varepsilon})$ for any $\varepsilon > 0$. On the other hand, fine-tuning the approximation quality lead to $(3 + \varepsilon)$-factor approximations (for any constant~\makebox{$\varepsilon > 0$}) in truly subquadratic time $n^{1.6+o(1)}$~\cite{ChakrabortyDGKS20,GoldenbergRS20}.

These results form an impressive state of the art for edit distance approximations, and it might seem that a constant-factor approximation in almost-linear time is close to the best that could be hoped for. However, it is, in principle, possible to expect \emph{sublinear-time} approximation algorithms. In this regime, it is standard to study approximations in the guise of the $(k, K)$-\emph{gap edit distance} problem, where the goal is to distinguish whether two strings have edit distance at most $k$ or more than~$K$. An algorithm for this gap problem naturally extends to an approximation algorithm with multiplicative error $K / k$ (the ``gap''). The natural question is whether we can match the linear-time state of the art by sublinear-time algorithms. Specifically:
\begin{center}
    \emph{Question 1: Is $(k, k^{1+\order(1)})$-gap edit distance in truly sublinear time?}\\\emph{What about $(k, \Order(k))$-gap edit distance?}
\end{center}
Of course, for small $k$, we cannot expect any sublinear-time improvements (as already distinguishing the all-zeros string from a string with a single one unconditionally requires reading $\Omega(n)$ characters), so by ``truly sublinear'' we mean running in $n^{1-\Omega(1)}$ time for $k \geq n^{\Omega(1)}$.

Fueled by this driving question, a line of research developed progressively better sublinear-time algorithms; see also \cref{tab:comparison} for the following list of relevant previous work. The first result in this direction is due to Batu, Erg{\"{u}}n, Kilian, Magen, Raskhodnikova, Rubinfeld and Sami~\cite{BatuEKMRRS03} who solved the $(k, \Theta(n))$-gap edit distance problem in sublinear time \smash{$\widetilde\Order(k^2 / n + \sqrt k)$} (subject to the restriction that~\makebox{$k < n^{1-\varepsilon}$} for some constant $\varepsilon > 0$). The Andoni--Onak $n^{\order(1)}$\=/approximation algorithm can further be turned into an algorithm for the $(k, K)$-gap edit distance problem in time~\smash{$\widehat\Order(n^2 k / K^2)$} (provided that $K$ is polynomially larger than $k$)~\cite{AndoniO12}. More recently, based on the Landau--Vishkin algorithm, Goldenberg, Krauthgamer and Saha~\cite{GoldenbergKS19} and Kociumaka and Saha~\cite{KociumakaS20} solved the $(k, \Theta(k^2))$-gap problem in time \smash{$\widetilde\Order(n/k + k^2)$}.  Their algorithm allows for various other trade-offs between gap and running time (see \cref{tab:comparison}).
In a combined effort, they later developed a different algorithm that runs in $\Ohtilde(n\sqrt{k}/K + nk^2/K^2) \subseteq \Ohtilde(nk/K)$ time for $K\ge k^{1+\Omega(1)}$.
This solution is \emph{non-adaptive} (i.e., the queried positions in the string can be fixed in advance). Moreover, they provided a strong barrier and proved that any \emph{non-adaptive} algorithm for the~\makebox{$(k, K)$}-gap problem requires $\Omega(n\sqrt{k}/K)$ queries. In that sense, their algorithm is optimal
for $K\ge k^{3/2}$.

\begin{table}[t]
    \caption{A comparison of sublinear-time algorithms for the $(k, K)$-gap edit distance problem for different gap parameters $k$ and $K$. All algorithms in this table are randomized and succeed with high probability.}\label{tab:comparison}
    \small%
    \setlength{\extrarowheight}{.8ex}%
    \begin{tabular*}{\linewidth}[t]{@{\extracolsep{\fill}}>{\raggedright}p{.52\linewidth}>{\raggedright}p{.21\linewidth}p{.21\linewidth}<{\raggedright}}
        \toprule
        \thead[l]{Source} & \thead[l]{Running time} & \thead[l]{Assumptions} \\
        \midrule
        Batu, Erg{\"{u}}n, Kilian, Magen, Raskhodnikova, Rubinfeld, Sami~\cite{BatuEKMRRS03} & $\widetilde\Order(k^2 / n + \sqrt k)$ & ($k = n^{1-\Omega(1)}$ and~$K = \Omega(n)$) \\
        Andoni, Onak~\cite{AndoniO12} & $\widehat\Order(n^2 k / K^2)$ & ($K > k^{1+\Omega(1)}$) \\
        Goldenberg, Krauthgamer, Saha~\cite{GoldenbergKS19} & $\widetilde\Order(nk/K + k^3)$ & ($K > k^{1+\Omega(1)}$) \\
        Kociumaka, Saha~\cite{KociumakaS20} & $\widetilde\Order((nk+\sqrt{nk^5})/K+ k^2)$ & ($K > k^{1+\Omega(1)}$) \\
        Brakensiek, Charikar, Rubinstein~\cite{BrakensiekCR20} & $\widetilde\Order((n+k^2)\cdot k^{3/2}/K)$ & ($K \ge k^{3/2}$) \\
        Goldenberg, Kociumaka, Krauthgamer, Saha~\cite{GoldenbergKKS22} & $\widetilde\Order(n \sqrt k / K)$ & ($K \ge k^{3/2}$) \\
        Goldenberg, Kociumaka, Krauthgamer, Saha~\cite{GoldenbergKKS22} & $\widetilde\Order(n k^2 / K^2)$ & ($k^{3/2} > K > k^{1+\Omega(1)}$) \\
        Bringmann, Cassis, Fischer, Nakos~\cite{BringmannCFN22} & $\widehat\Order(n/k + k^4)$ & ($K = \widehat\Theta(k)$) \\
        \emph{This work (\cref{cor:subpoly})} & $\widehat\Order(n/k + k^2)$ & ($K = \widehat\Theta(k)$) \\
        \emph{This work (\cref{cor:poly})} & $\widehat\Order(n/K + \sqrt{nk} + k^2)$ & ($K > k^{1+\Omega(1)}$) \\
        \bottomrule
    \end{tabular*}
\end{table}

At the same time, finally, Bringmann, Cassis, Fischer and Nakos~\cite{BringmannCFN22} resolved Question~1 and developed the first truly sublinear algorithm with a \emph{subpolynomial} gap. For simplicity, from now on we will refer to their algorithm as the \emph{BCFN algorithm.} The BCFN algorithm solves the $(k, k^{1+\order(1)})$-gap edit distance problem in time $\Order(n / k + k^{4+\order(1)})$, which is indeed truly sublinear for the range of parameters \smash{$n^\varepsilon \leq k \leq n^{\frac14-\varepsilon}$}. The algorithm is adaptive, as is necessary by the previously mentioned lower bound~\cite{GoldenbergKKS22}. Moreover, it is based on the Andoni--Krauthgamer--Onak algorithm~\cite{AndoniKO10} and comes with the benefit that the approximation factor can even be polylogarithmic ($K = k (\log k)^{\Order(1/\varepsilon)}$) if we increase the running time by a factor $k^\varepsilon$.

In summary, the BCFN algorithm provides a satisfying and, perhaps, surprising answer to the question of whether we can expect accurate approximations in truly sublinear time. Nevertheless, the range of parameters for which it becomes effective is quite limited. For instance, for $k \geq n^{\frac14-\order(1)}$ the BCFN algorithm is even outperformed by the classic Landau--Vishkin algorithm that computes the edit distance \emph{exactly}. In this paper, we therefore press on and study the equally important follow-up question:
\begin{center}
    \medskip
    \emph{Question 2: What is the time complexity of the $(k, k^{1+\order(1)})$-gap edit distance problem?}
    \medskip
\end{center}
Let us remark that the strongest known unconditional lower bound is~\smash{$\widehat\Omega(n/k + \sqrt n)$}~\cite{BatuEKMRRS03,AndoniN10}, where the $n/k$ term is necessary already for the gap \emph{Hamming} distance problem. In light of this lower bound, the BCFN running time $\widehat\Order(n/k + \poly(k))$ has the right format, except that it remains to optimize the $\poly(k)$ term.

\subsection{Our Results}
Our contribution is that we drastically reduce the $\poly(k)$ term, thereby making significant progress towards our driving Question~2. Specifically, our main result is the following technical theorem that we will shortly instantiate for several interesting parameter settings:

\begin{restatable}[Main Theorem]{theorem}{thmmain} \label{thm:main}
    Let $2 \leq \Delta \leq n$ be a parameter.
    Then, there is a randomized algorithm that solves the $(k, K)$-gap edit distance problem in time $\Order(n/K + kK) \cdot \Delta^{3} \cdot (\log n)^{O(\log_\Delta n)}$ and succeeds with constant probability,
    provided that $K/k \geq (\log n)^{c \cdot \log_\Delta(n)}$ for a sufficiently large constant $c > 0$.
\end{restatable}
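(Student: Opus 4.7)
The plan is to refine the Andoni--Krauthgamer--Onak (AKO) recursive framework used in the BCFN algorithm, sharpening the accounting of ``active'' sub-alignments per level so as to replace the $k^4$ factor of BCFN by $k\cdot K$. I would partition the text $Y$ hierarchically into length-$\Delta^i$ blocks for $i = 0, 1, \dots, L$ with $L = \log_\Delta n$. AKO approximates the edit distance between a length-$\Delta^{i+1}$ block of $Y$ and a window in $X$ (up to a polylogarithmic factor per level) by combining analogous estimates for its $\Delta$ child blocks via a small min-plus / shortest-path dynamic program. Scheduled carefully, each internal node incurs $\Order(\Delta^3)$ work, accounting for the $\Delta^3$ factor in the statement, and each level multiplies the approximation error by $\Order(\log n)$, giving $(\log n)^{\Order(L)}$ overall error, precisely what the gap hypothesis $K/k \geq (\log n)^{c \log_\Delta n}$ is tailored to absorb.

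On top of this hierarchy I would layer the sublinear-sampling step. Rather than processing all $\Order(n)$ alignment windows at the top level, we sample $\Order(n/K)$ anchor shifts of $X$; the gap hypothesis ensures that with constant probability at least one sampled shift is compatible with an alignment of cost $\leq k$. Each sampled query triggers a recursive descent, but at each level only a few descendants need to be processed: the \emph{active} triples whose current AKO estimate is $\Order(K)$. The heart of the argument is a charging scheme that maps every active triple injectively to a distinct unit of the optimal edit script, thereby bounding the active count per level by $\Order(k)$. Since each triple at the ``critical'' level, where blocks have length $\Theta(K)$, can be resolved in $\Order(K)$ time (e.g.\ via a Landau--Vishkin base case or direct reads of characters inside the sampled window), the total work across the $L$ levels is $\Order(n/K + kK)\cdot \Delta^3 \cdot (\log n)^{\Order(L)}$.

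The main obstacle is precisely this charging. In BCFN's analysis, the active count per level is only bounded by a larger polynomial in $k$, which forces the $k^4$ term; to push it down to $\Order(k)$ I would argue that every active triple is ``shift-close'' to a block of $Y$ that is touched by one of the $k$ edits in the optimal transformation and, simultaneously, prevent double-counting of triples across sibling subtrees. The tools for this are randomized shifts and random colorings in the AKO style, combined with a geometric argument exploiting that an active triple must align within $\Order(K)$ positions of an edit. Verifying that all positions queried during the recursion can in fact be served from the initial $\Order(n/K)$ random sample plus $\Order(kK)$ adaptive reads along active paths, while simultaneously keeping the per-level error bounded by $\Order(\log n)$, is where the technical work of the proof concentrates.
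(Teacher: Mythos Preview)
Your plan has a real gap: it does not address the obstacle that the paper's machinery is built to overcome. In the AKO/BCFN tree, every node $v$ must output approximations of $\TD_{v,s}$ for \emph{all} shifts $s\in\intervalcc{-k}{k}$, not just one. A node containing no edits is ``matched'' at some shift $s^*$, but that alone does not let you stop: you still need the values at the other $2k$ shifts for the parent's recurrence. BCFN terminates early only at nodes that are both matched \emph{and} $p$-periodic, because precisely then all $2k+1$ distances can be produced cheaply. Your charging (``map each active triple to a unit of the optimal edit script'') would at best bound the \emph{unmatched} nodes by $\Order(k)$ per level---that is already in BCFN---but says nothing about matched, non-periodic nodes, which is where the $\BP_p(X)$ term in the active-node count enters. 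Controlling that term is the whole point of the block-periodicity machinery, and your proposal does not mention periodicity at all. Separately, a Landau--Vishkin base case on a length-$\Theta(K)$ block costs $\Omega(K+k^2)$, not $\Order(K)$, and you need it for all $\Order(k)$ shifts; this already pushes you past $kK$.

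The paper does not get $kK$ by sharper counting inside the AKO tree. It adds a new outer layer: define \emph{$k$-breaks} (multiples of $k$ where $X\intervalco{i}{i+3k}$ is not $k$-periodic), show their count approximates $\BP_k(X)$ within constant factors, and use them in a \textsc{Split} procedure that samples breaks at rate $\approx \tfrac{1}{K}\log n$, finds each one's \emph{unique} exact match in $Y$, and cuts $(X,Y)$ into pieces with $\BP_k(X_i)=\Order(K)$. An elementary precision-sampling step then selects which pieces to recurse on. The block-periodic pieces are fed to the AKO tree, where the matched periodic leaves are handled by a fast all-shifts approximate edit-distance routine for periodic strings (built from a dynamic edit-distance data structure). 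The $k^4\to k^2$ improvement comes jointly from (i) this break-based reduction to bounded block periodicity without the factor-$k$ overhead of BCFN's reduction, and (ii) a factor-$k$ speedup of the periodic base case. Neither ingredient appears in your plan.
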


The main consequence is a significant improvement for the $(k, k^{1+\order(1)})$-gap edit distance problem:

\begin{restatable}[Subpolynomial Gap]{corollary}{corsubpoly} \label{cor:subpoly}
The $(k, k \cdot 2^{\Theta(\sqrt{\log k \log\log k})})$-gap edit distance problem is in time $\Order(n/k + k^{2+\order(1)})$.
\end{restatable}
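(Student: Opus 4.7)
The plan is to instantiate \cref{thm:main} with a single carefully chosen value of $\Delta$. The overhead $\Delta^3 \cdot (\log n)^{\Order(\log_\Delta n)}$ in the running time equals $\Delta^3 \cdot 2^{\Order(\log n \log\log n / \log \Delta)}$, while the hypothesis $K/k \geq (\log n)^{c\log_\Delta n} = 2^{c \log n \log\log n / \log \Delta}$ imposes a lower bound on $\log(K/k)$ of the same order $\log n \log\log n / \log \Delta$. All three expressions are optimized simultaneously by setting $\log \Delta := \sqrt{\log n \log\log n}$, at which point each of them equals $2^{\Theta(\sqrt{\log n \log\log n})}$. This is the choice I would make.

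Given this $\Delta$, I would set $K := k \cdot 2^{C \sqrt{\log n \log\log n}}$ for a constant $C$ large enough to (i) satisfy the hypothesis $K/k \geq (\log n)^{c\log_\Delta n}$ of \cref{thm:main}, and (ii) ensure that the overhead on the $n/K$ term is absorbed, i.e., $n/K \cdot 2^{\Order(\sqrt{\log n \log\log n})} \leq n/k$. The remaining $kK$ term then contributes $k^2 \cdot 2^{\Theta(\sqrt{\log n \log\log n})}$ to the running time. In the main regime $k = n^{\Theta(1)}$ we have $\sqrt{\log n \log\log n} = \Theta(\sqrt{\log k \log\log k})$, so the total running time collapses to $\Order(n/k + k^{2+\order(1)})$ and the gap becomes $K/k = 2^{\Theta(\sqrt{\log k \log\log k})}$, exactly as the corollary demands.

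The main obstacle I expect is the regime of very small $k$, say $\log k = \Order(\sqrt{\log n \log\log n})$, where the identification of $\sqrt{\log n \log\log n}$ with $\sqrt{\log k \log\log k}$ breaks down and \cref{thm:main} alone cannot deliver both the target gap and the target time (a small-enough gap forces $\Delta$ close to $n$, which blows up the $\Delta^3$ factor). In this regime, however, $k^{2+\order(1)}$ is comfortably dominated by $n/k$, so the target bound reduces to $\Order(n/k)$, which is already within reach of the prior BCFN algorithm~\cite{BringmannCFN22}: its $\Order(n/k + k^{4+\order(1)})$ running time collapses to $\Order(n/k)$ whenever $k^{5+\order(1)} \leq n$, and its Andoni--Krauthgamer--Onak-style gap~\cite{AndoniKO10} can be tuned to meet the $2^{\Theta(\sqrt{\log k \log\log k})}$ target of the corollary. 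Combining the two regimes yields the stated corollary.
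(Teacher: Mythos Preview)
Your proposal is correct and follows essentially the same approach as the paper. The paper makes the case split explicit by testing whether $k^6 < n$: if so, it invokes the BCFN result (specifically \cite[Corollary~2]{BringmannCFN22}) to get time $\Order(n/k)$; otherwise $\log n = \Theta(\log k)$, and it instantiates \cref{thm:main} with $\Delta = 2^{\sqrt{\log k \log\log k}}$ and $K = k\cdot 2^{\mu\sqrt{\log k\log\log k}}$ for a large enough constant~$\mu$---the same balancing you describe, just parametrized directly in $k$ rather than in $n$. Your identification of the small-$k$ obstacle and the BCFN fallback is exactly right; the paper's concrete threshold $k^6<n$ simply makes the two regimes disjoint and exhaustive without appealing to the asymptotic phrase ``$k=n^{\Theta(1)}$''.
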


Our algorithm constitutes a \emph{quadratic} improvement over the BCFN algorithm~\cite{BringmannCFN22}. As a consequence, we extend the range of parameters for which we know truly sublinear-time algorithms for the $(k, k^{1+\order(1)})$-gap problem to $n^\varepsilon \leq k \leq n^{\frac12-\varepsilon}$. In a slightly narrower range of  $n^\varepsilon \leq k \leq n^{\frac13-\varepsilon}$, our algorithm matches the unconditional lower bound~\cite{BatuEKMRRS03,AndoniN10} and is therefore \emph{almost-optimal}, up to subpolynomial factors in the gap and in the running time.

Another notable consequence is that our running time is never out-performed by the Landau-Vishkin algorithm. In other words, as mentioned before, our algorithm can even be viewed as an alternative to the $\Order(n + k^2)$-time Landau--Vishkin algorithm: It runs in faster sublinear time $\widehat\Order(n/k + k^2)$ at the cost of returning an approximate result.

Similarly to the BCFN algorithm, we also build on the original framework of Andoni, Krauthgamer and Onak~\cite{AndoniKO10}. As a consequence, our Main~\cref{thm:main} allows for more precise approximations with \emph{polylogarithmic} gap, at the mild cost of increasing the running time by a small polynomial factor:

\begin{restatable}[Polylogarithmic Gap]{corollary}{corpolylog} \label{cor:polylog}
For any constant $\varepsilon > 0$, the $(k, k \cdot (\log k)^{\Theta(1/\varepsilon)})$-gap edit distance problem is in time \smash{$\Order(n/k^{1-\varepsilon} + k^{2 + \varepsilon})$}.
\end{restatable}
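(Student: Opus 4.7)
The plan is to apply the Main Theorem (\cref{thm:main}) with $\Delta := k^{\varepsilon/6}$ and $K := k \cdot (\log k)^{c'/\varepsilon}$, where $c' = c'(\varepsilon)$ is a sufficiently large constant to be determined. Under these choices, $\Delta^3 = k^{\varepsilon/2}$ and $\log_\Delta n = (6/\varepsilon) \cdot (\log n / \log k)$.

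First I would verify the hypothesis $K/k \geq (\log n)^{c \log_\Delta n}$ of \cref{thm:main}. Taking logarithms on both sides reduces this to $c' \log k \cdot \log\log k \geq 6c \cdot \log n \log\log n$, which holds (for $c'$ chosen large enough) in the ``interesting'' regime $\log k = \Theta(\log n)$, i.e., $k \geq n^{\Omega(1)}$. In that regime, $\log_\Delta n = O(1/\varepsilon)$ is a constant, so the polylogarithmic overhead satisfies $(\log n)^{O(\log_\Delta n)} = (\log k)^{O(1/\varepsilon)}$, which is polylogarithmic in~$k$.

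Next I would plug these values into the running time bound of \cref{thm:main}. The first term becomes
\[
    (n/K)\cdot \Delta^3 \cdot (\log n)^{O(\log_\Delta n)}
    \;=\; n \cdot k^{\varepsilon/2 - 1} \cdot (\log k)^{O(1/\varepsilon) - c'/\varepsilon}
    \;\leq\; n/k^{1-\varepsilon/2},
\]
where the residual polylog factor is absorbed by choosing $c'$ larger than the implicit constant in $O(1/\varepsilon)$. The second term becomes
\[
    kK\cdot \Delta^3 \cdot (\log n)^{O(\log_\Delta n)}
    \;=\; k^{2+\varepsilon/2}\cdot (\log k)^{O(1/\varepsilon)}
    \;\leq\; k^{2+\varepsilon},
\]
since $(\log k)^{O(1/\varepsilon)} \leq k^{\varepsilon/2}$ for $k$ sufficiently large. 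Summing the two bounds yields a total running time of $\Order(n/k^{1-\varepsilon} + k^{2+\varepsilon})$, matching the claim.

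The main obstacle I anticipate is handling the low-$k$ regime where $\log k = o(\log n)$: in this case the choice $\Delta = k^{\varepsilon/6}$ causes $\log_\Delta n$ to blow up super-constantly, so the hypothesis of \cref{thm:main} cannot be met with a polylog-in-$k$ gap. I expect this to be resolved by a case split. For $k$ below a threshold depending on $\varepsilon$, one can instead invoke \cref{thm:main} with $\Delta := n^{\varepsilon/6}$, which gives a polylog-in-$n$ gap; this coincides with polylog-in-$k$ when $k$ lies in the polynomial range and is otherwise absorbed by the $\widehat{\Order}$-style conventions. For still smaller $k$, the target time $n/k^{1-\varepsilon}$ is essentially linear and can be matched by Landau--Vishkin in $\Order(n + k^2)$ time. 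I would expect these edge cases to be handled by a routine case analysis rather than by any conceptually new ingredient.
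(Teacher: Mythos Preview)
Your main argument --- applying \cref{thm:main} with $\Delta = k^{\Theta(\varepsilon)}$ and $K = k \cdot (\log k)^{\Theta(1/\varepsilon)}$ in the regime $\log k = \Theta(\log n)$ --- is correct and essentially identical to the paper's (which sets $\Delta = k^{\delta/3}$ for some $\delta < \varepsilon$, after first reducing to the case $n \le k^6$ so that $\log n = \Theta(\log k)$ is automatic).

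The genuine gap is in your handling of the low-$k$ regime. Neither of your proposed fixes yields the stated bound. Landau--Vishkin runs in $\Order(n + k^2)$, which exceeds $\Order(n/k^{1-\varepsilon})$ whenever $k$ is super-constant; the target time is \emph{not} ``essentially linear'' for, say, $k = n^{0.01}$. And invoking \cref{thm:main} with $\Delta = n^{\varepsilon/6}$ forces a gap $K/k \ge (\log n)^{\Theta(1/\varepsilon)}$, which is polylogarithmic in $n$, not in $k$; when $k$ is subpolynomial in $n$ these quantities are genuinely different, and no $\widehat\Order$ convention rescues the corollary as stated. The paper resolves this cleanly by a case split at $k^6 < n$: in that regime it cites the prior BCFN result \cite[Corollary~3]{BringmannCFN22}, which already solves the polylogarithmic-gap problem in time $\Order(n/k^{1-\varepsilon})$; only for $k^6 \ge n$ does it apply \cref{thm:main}.
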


Finally, we also give new results in the regime where $K$ is polynomially larger than $k$.

\begin{restatable}[Polynomial Gap]{corollary}{corpoly}\label{cor:poly}
Let $k, K$ be such that $K > k^{1+\varepsilon}$ for some constant $\varepsilon > 0$.
Then the $(k, K)$-gap edit distance problem is in time $\Order((n/K + \sqrt{nk} + k^2) \cdot n^{o(1)})$.
\end{restatable}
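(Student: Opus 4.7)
The plan is to derive \cref{cor:poly} as a parameter-optimization consequence of \cref{thm:main}. The key observation is that any $(k,K')$-gap edit distance algorithm with $K'\le K$ is automatically valid for the $(k,K)$-gap problem, so I am free to invoke \cref{thm:main} with whatever admissible intermediate gap $K'$ minimizes the running time $\Order(n/K'+kK')\cdot n^{o(1)}$.

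Before the case analysis, I would fix the branching factor $\Delta$ so that both the multiplicative overhead $\Delta^{3}(\log n)^{O(\log_{\Delta} n)}$ and the minimum allowable gap factor $(\log n)^{c\log_{\Delta} n}$ collapse to $n^{o(1)}$. Setting $\log\Delta=\sqrt{\log n\log\log n}$ gives $\log_{\Delta} n=\sqrt{\log n/\log\log n}$, so both quantities become $2^{O(\sqrt{\log n\log\log n})}=n^{o(1)}$. After this calibration, \cref{thm:main} is applicable whenever $K'/k\ge n^{o(1)}$ and then runs in time $n^{o(1)}(n/K'+kK')$.

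I then split into three regimes and tune $K'$ in each. In Case (a), when $K\le\sqrt{n/k}$, I take $K'=K$; since $kK'\le\sqrt{nk}$ in this regime, the running time is $n^{o(1)}(n/K+\sqrt{nk})$. In Case (b), when $K>\sqrt{n/k}$ and $k\le n^{1/3-o(1)}$, I take $K'=\sqrt{n/k}$; the admissibility condition $K'/k=\sqrt{n/k^{3}}\ge n^{o(1)}$ is exactly the regime assumption on $k$, and both $n/K'$ and $kK'$ equal $\sqrt{nk}$, so the running time is $n^{o(1)}\sqrt{nk}$. In Case (c), when $k>n^{1/3-o(1)}$, I take $K'=k\cdot n^{o(1)}$ to be the smallest admissible gap; since $K>k^{1+\varepsilon}$ and $k$ is polynomially large in $n$, one has $k^{\varepsilon}\ge n^{o(1)}$, so $K'\le K$ is satisfied. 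The resulting time is $n^{o(1)}(n/k+k^{2})$, which equals $n^{o(1)}\cdot k^{2}$ because $k^{2}\ge n^{2/3-o(1)}\ge n/k$ in this regime. The three cases are exhaustive, and each bound is dominated by $\Order(n/K+\sqrt{nk}+k^{2})\cdot n^{o(1)}$, establishing the corollary.

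There is no substantive obstacle here: the argument is a pure reduction to \cref{thm:main}, requiring only the verifications that $K'\le K$ and $K'/k\ge n^{o(1)}$ in each case, both of which are immediate from the regime definitions once $\Delta$ has been tuned. Sub-polynomial values of $k$ (for which the admissibility condition could threaten to fail in Case (a)) are narrow corner cases that can be dispatched by falling back on earlier sublinear-time algorithms from the literature \cite{GoldenbergKS19,KociumakaS20,GoldenbergKKS22}, which already achieve the desired bound in that regime.
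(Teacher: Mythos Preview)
Your proposal is correct and follows essentially the same approach as the paper: fix $\Delta=2^{\sqrt{\log n\log\log n}}$ so that \cref{thm:main} yields an $n^{o(1)}(n/K'+kK')$-time algorithm whenever $K'/k\ge n^{o(1)}$, and then choose $K'\in\{K,\ \sqrt{n/k},\ k\cdot n^{o(1)}\}$ to balance the two terms. The only notable difference is the corner-case handling: where you defer subpolynomial $k$ to prior work, the paper instead observes that when $K<2^{(\log n)^{2/3}}$ one can run Landau--Vishkin in time $O(n+k^2)\subseteq O(n^{1+o(1)}/K+k^2)$, after which $K\ge 2^{(\log n)^{2/3}}$ combined with $K>k^{1+\varepsilon}$ forces $K/k\ge K^{\varepsilon/(1+\varepsilon)}\ge 2^{\Theta((\log n)^{2/3})}$, cleanly discharging the admissibility condition without external citations.
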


This polynomial gap regime was not explicitly considered by \cite{BringmannCFN22}.
However, using their results one can infer that they obtain an algorithm for this task in time $\widehat{O}(n/K + n^{0.8} + k^4)$.
Our result improves upon this in all parameter regimes.

\subsection{Technical Overview}
In this section, we give a high-level overview of our new ideas.
The starting point for our result is the BCFN algorithm.
It is based on two main ingredients: (1) The Andoni--Krauthgamer--Onak framework providing an efficient recursion scheme, based on the so-called \emph{precision sampling} technique, and (2) a \emph{structure-versus-randomness} dichotomy that is used to bound the number of recursive subproblems by $\poly(k)$.
Since the algorithm is quite complex, and since our improvements are only concerned with the second ingredient, we omit an extensive description of the BCFN algorithm here; we refer the interested reader to the original paper for a thorough overview (or to our technical \cref{sec:bcfn}).\footnote{In particular, throughout this overview, we will ignore the term ``$n/k$'' in all running times as dealing with these terms requires precision sampling which is not our focus here. More precisely, while we in this overview implicitly pretend that in each subproblem we query the strings at $\widehat\Order(n/k + \poly(k))$ positions, we in fact query at $\widehat\Order(pn + \poly(k))$ positions where $p$ is a precision parameter distributed with $\Ex[p] \leq \widehat\Order(1/k)$.}
Instead, we focus on highlighting our new key concepts.

One of the central ideas behind the BCFN algorithm is the notion of \emph{block periodicity}, formally defined as follows:

\begin{definition}[Block Periodicity]\label{def:block-periodicity}
Let $X$ be a string and $p \geq 1$ be an integer. The \emph{$p$-block periodicity $\BP_p(X)$} of $X$ is the smallest integer $L$ such that $X$ can be partitioned into $L$ substrings, $X = \bigodot_{\ell=1}^L X_\ell$, so that each substring $X_\ell$ is $p$-periodic, that is, the period of $X_\ell$ is at most $p$.
\end{definition}

The block periodicity is a natural measure of \emph{structure} in a string: On the one hand, strings with minimal block periodicity ($\BP_p(X) = 1$) are simply \emph{periodic}. On the other hand, strings with very large block periodicity are composed of many non-repetitive parts and often behave like \emph{random} strings. A frequent phenomenon in string algorithms is that specialized techniques for these two extreme cases ultimately lead to algorithms for the entire spectrum.

An important stepping stone towards the general BCFN algorithm for the $(k, k^{1+\order(1)})$-gap edit distance problem is the special case where $X$ has bounded block periodicity $\BP_p(X) \leq B$ whereas~$Y$ can be arbitrary.
Let us refer to this special case as the \emph{block-periodic case}. In our setting, $p$ and~$B$ are parameters subject to some technical conditions (such as $p, B \geq k$) that will not concern us here.
The BCFN algorithm can be interpreted in the following way:
\begin{enumerate}
    \item An efficient algorithm for the block-periodic case. Specifically, it follows from~\cite{BringmannCFN22} that the block-periodic case can be solved in time \smash{$\widehat\Order(n/k + Bpk)$}.
    \item A reduction from the general to the block-periodic case. This reduction picks as parameters $p, B = \widehat\Order(k)$ and requires an additional multiplicative overhead of $k$. In combination with the previous item, the running time thus becomes $\widehat\Order(n/k + k^4)$.
\end{enumerate}
We remark that this interpretation is not immediate from the original paper~\cite{BringmannCFN22} and requires rearranging the algorithm up to some degree.\footnote{In particular, in~\cite{BringmannCFN22} the authors did not explicitly study the block-periodic case as a subproblem, nor did they consider $p$ and $B$ as parameters on their own.}
In this paper, we achieve our results by separately improving both of these steps.

\paragraph{Improvement 1: Speed-Up for the Block-Periodic Case}
As our first major contribution, we achieve a factor-$k$ speed-up for the block-periodic case; see the following simplified statement:

\begin{lemma}[Simplified Version of \cref{lem:main-stoc}]\label{lem:informal-stoc}
Let $X$ be a string with block periodicity $\BP_p(X) \leq B$.
The $(k, k^{1+o(1)})$-gap edit distance problem for $X$ and an arbitrary string $Y$ is in time \smash{$\widehat\Order(n/k + B p)$}.
\end{lemma}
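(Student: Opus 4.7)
The plan is to follow the AKO-style recursion used in~\cite{BringmannCFN22} but to cut the recursion short at \emph{periodic subproblems}. Concretely, I would drive the recursion on the pair $(X,Y)$ as usual, but whenever the current piece $X[i..j]$ lies entirely inside a single $p$-periodic block $X_\ell$ of the given decomposition, I would invoke a specialized base-case routine instead of recursing further. Because $X$ has only $L \le B$ periodic blocks, at most $\Order(B \log n)$ nodes of the recursion tree cross a block boundary, so the ``non-periodic'' part of the recursion contributes only $\widetilde\Order(B)$ subproblems, each handled by the standard AKO/precision-sampling machinery at total query cost $\widehat\Order(n/k)$.

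The heart of the argument is then the new base case: given a $p$-periodic substring $P = X[i..j]$ and a candidate alignment window $Y[i'..j']$, compute (up to the desired subpolynomial approximation) $\ED(P, Y[i'..j'])$ in $\widetilde\Order(p)$ time. This is precisely where BCFN spent an extra factor of $k$, so the saving must come from exploiting the internal structure of $P$. Writing $P = Q^m \cdot Q[1..r]$ with $|Q|\le p$, one can encode a low-cost alignment of $P$ to $Y[i'..j']$ by its ``period-crossing'' sequence, i.e., the positions at which the alignment transitions from one copy of $Q$ to the next. Since we only need to distinguish edit distance $\le k$ from $> k^{1+\order(1)}$, we can ignore candidate alignments whose per-block cost exceeds $k^{1+\order(1)}/B$, and such a bounded-cost pattern admits a description of $\widetilde\Order(p)$ parameters. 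The optimal alignment is then the shortest path in a DAG of size $\widetilde\Order(p)$, which can be built and solved in $\widetilde\Order(p)$ time per candidate window.

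The main obstacle I anticipate is the structural claim underlying the base case: that an approximately optimal alignment of cost $c$ between a $p$-periodic pattern and an arbitrary string admits a combinatorial description of size $\widetilde\Order(p + c)$, rather than one proportional to the pattern's length. Establishing this requires showing that along an (approximately) optimal alignment the ``insertion/deletion offsets'' relative to $Q$ take only $\widetilde\Order(c/|Q|)$ distinct values---essentially a sparsification statement for alignments against periodic strings, proved by a standard exchange argument on periodic runs. Once this structural lemma is in place, combining $\widehat\Order(n/k)$ queries via precision sampling with $\widetilde\Order(Bp)$ total work across all periodic base cases yields the claimed $\widehat\Order(n/k + Bp)$ running time, saving the factor of $k$ relative to~\cite{BringmannCFN22}.
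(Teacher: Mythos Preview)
Your high-level recursion structure is close to the paper's, but the proposed base case has a genuine gap: you want to approximate $\ED(P, Y[i'\,..\,j'])$ in $\widetilde\Order(p)$ time where $P$ is $p$-periodic and $Y[i'\,..\,j']$ is \emph{arbitrary}. This cannot work as stated. The window $Y[i'\,..\,j']$ can have length far exceeding $p$, and an arbitrary string of that length simply cannot be processed---even approximately---in $\widetilde\Order(p)$ time: the structural sparsification you sketch concerns the \emph{description} of a good alignment, but building your DAG still requires edge weights of the form $\ED(Q, Y[\text{some window}])$ for many windows of $Y$, and those are not available without reading~$Y$. Relatedly, the AKO framework requires the base case to output approximations for \emph{all} $2k{+}1$ shifts simultaneously; your ``per candidate window'' cost would multiply by $k$ and reproduce exactly the BCFN bound you are trying to beat.

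The paper closes this gap by inserting a step you omit: before declaring a node a base case, it runs a \emph{Matching Test} and a \emph{Periodicity Test} on $Y_v$. If the Matching Test says $X_v \approx Y_{v,s^*}$ for some shift $s^*$, then $Y_v$ is itself approximately $p$-periodic with the \emph{same} period $Q$ as $X_v$; one then replaces both $X_v$ and $Y_v$ by their periodic idealizations (incurring only an additive $1/r_v$ error) and is left with computing $\ED$ between two strings that are \emph{both} $p$-periodic, for all shifts. At that point a clean structural lemma (\cref{lem:ed-approx-periodic}) reduces the problem to strings of length $\Oh(p)$, and a dynamic approximate edit-distance algorithm handles all shifts in total time $\widehat\Order(p+k)$. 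The ``both strings periodic'' reduction is what makes the $\widetilde\Order(p)$ base case possible, and it is exactly the missing idea in your plan. Nodes where the Matching Test fails are few (at most $k$ per level if $\ED(X,Y)\le k$), and nodes where the Periodicity Test fails are bounded by $\BP_p(X)\le B$ per level; together with the nodes you already accounted for, this gives the $\widehat\Order(B)$ active-node bound.
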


Our strategy refines the approach by~\cite{BringmannCFN22} as follows.
Intuitively, the algorithm keeps splitting the string $X$ into smaller substrings $X_1, \dots, X_N$ until all of them are $p$-periodic. Since the block periodicity of $X$ was initially bounded by $B$, we construct at most \smash{$N = \widehat\Order(B)$} many substrings in this way.
Moreover, for almost all such substrings $X_i$, we can assume that they perfectly match their respective substrings $Y_i$ up to some shift---if there are more than $k$ exceptions, we can immediately infer that the edit distance between $X$ and $Y$ exceeds $k$ and terminate the algorithm.
So far, there is no difference to the original BCFN algorithm. Our improvement lies in the treatment of the periodic pieces $(X_i, Y_i)$.
Specifically, we show the following lemma.

\begin{lemma}[Simplified Version of \cref{lem:approxed-periodic-PM}] \label{lem:informal-periodic}
Let $X, Y$ be strings of lengths $n$ and $n + k$, respectively. Given an integer $p$ that is a period of both strings, we can compute $\widehat\Order(1)$-approximations of the edit distances $\ED(X, Y \intervalco{i}{n+i})$ for all $i \in \intervalcc{0}{k}$ in overall time $\widehat\Order(p + k)$.
\end{lemma}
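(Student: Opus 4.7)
The plan is first to shrink the problem to at most $\min(p, k+1)$ distinct subproblems using the $p$-periodicity of $Y$, and then to solve each via a structural characterization of edit distance between two periodic strings. For the reduction, observe that since $p$ is a period of $Y$, the window $Y[i..i+n)$ (as a string) depends only on $i \bmod p$. Hence the sequence $(\ED(X, Y[i..i+n)))_{i \in [0, k]}$ is determined by its first $\min(p, k+1)$ entries and then copied periodically. Emitting the full list costs $O(k)$ extra time, leaving a budget of $\widehat\Order(p)$ for producing the distinct approximations.

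Next, I would set up a structural sandwich for the $p$-periodic case. Writing $u := X[0..p)$ and $y := Y[0..p)$, we have $X = u^\infty[0..n)$ and $Y[i..i+n) = (\sigma^i y)^\infty[0..n)$, where $\sigma$ denotes cyclic shift. A periodicity-exchange argument on an optimal alignment (forcing the middle portion to lock into a ``staircase'' driven by a fixed shift between the two period units, with perturbations confined to $O(p)$ boundary characters at each end) should yield
\[
  \bigl|\ED(X, Y[i..i+n)) - (n/p) \cdot \rho_i\bigr| = O(p),
\]
where $\rho_i$ is a per-period alignment cost between $u$ and $\sigma^i y$. Because the target approximation factor is $\widehat\Order(1)$, the additive $O(p)$ error is harmless whenever the true edit distance is $\Omega(p)$, so in that regime one can even output the same coarse value $(n/p) \cdot \mathrm{cyc}\text{-}\ED(u, y)$ for every shift.

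For the complementary small-distance regime $\ED = O(p)$, I would exploit that the strings $\sigma^i y$ for $i \in [0, p)$ are exactly the length-$p$ windows of the doubled word $yy$, reducing the task to approximate edit-distance pattern matching of a length-$p$ pattern $u$ against a length-$2p$ text $yy$. This can be handled by sharing work across consecutive shifts: after $O(p)$ preprocessing providing constant-time LCE queries on both periodic texts, a Landau--Vishkin-style refinement can be plugged into the Andoni--Krauthgamer--Onak recursive sketch at scale $p$ (the very framework the paper already uses at its outer level), so that after $\widehat\Order(p)$ preprocessing on $yy$ each of the $\min(p, k+1)$ shifts is answered in amortized $\widehat\Order(1)$ time.

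The main obstacle is precisely this small-distance regime: the structural sandwich is essentially shift-oblivious (since $\mathrm{cyc}\text{-}\ED(u, \sigma^i y)$ does not depend on $i$), so one must show that a per-shift refinement exists which is both faithful enough to distinguish true distances of order $O(p)$ and cheap enough to batch across all $\min(p, k+1)$ shifts within the $\widehat\Order(p+k)$ budget. Once the sandwich is proven and the small-distance refinement is in place, the periodic propagation glues the two regimes together into the claimed $\widehat\Order(p + k)$ total running time.
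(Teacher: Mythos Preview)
Your outline---reduce to $\min(p,k+1)$ distinct shifts by periodicity, then characterize the edit distance of two $p$-periodic equal-length strings in terms of their periods, then batch the resulting length-$O(p)$ computations---is exactly the paper's plan. But two steps are not yet proofs.

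First, your structural sandwich $|\ED(X,Y\intervalco{i}{i+n}) - (n/p)\cdot\rho_i| = O(p)$ is underspecified: you never pin down $\rho_i$, and you yourself flag that the natural candidate (cyclic edit distance) is shift-oblivious, leaving the small-distance regime as ``the main obstacle'' without a resolution. The paper's version (\cref{lem:ed-approx-periodic}) is a precise \emph{multiplicative} $3$-approximation: for $X=P^*\intervalco{0}{n}$ and $Y'=Q^*\intervalco{0}{n}$ with $d=\lfloor n/p\rfloor$ and $r=n\bmod p$,
\[
\ED(X,Y')\ \le\ \ED(P\intervalco{0}{r},Q\intervalco{0}{r}) + \min_{s\in\Int}\bigl(d\cdot\ED(P,Q^{\circlearrowright s}) + 2|s|\bigr)\ \le\ 3\,\ED(X,Y').
\]
The $+2|s|$ penalty is precisely what breaks shift-obliviousness: taking $Q$ to be the period of $Y\intervalco{i}{i+n}$, the inner minimum becomes $\min_j\bigl(d\cdot\ED(u,T\intervalco{j}{j+p})+2\min(|i-j|,p-|i-j|)\bigr)$ with $T=yy$, which does depend on $i$ and can be evaluated for all $i$ simultaneously by Dijkstra on an $O(p)$-vertex cycle graph once the values $\ED(u,T\intervalco{j}{j+p})$ are known. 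No large/small regime split is needed.

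Second, you correctly reduce the remaining work to approximate edit-distance pattern matching of $u$ against $yy$, but ``Landau--Vishkin plugged into the AKO sketch'' with $\widehat O(1)$ amortized per shift is asserted, not argued; it is not clear how to make the AKO recursion share work across sliding windows. The paper does not attempt this and instead invokes a black-box \emph{dynamic} approximate edit distance data structure (Andoni--Onak for subpolynomial approximation, or Kociumaka--Mukherjee--Saha for polylogarithmic): initialize with the pattern and one window, then slide the window one position at a time using two character edits, reading off an approximation after every update. This yields all the required $\ED(u,T\intervalco{j}{j+p})$ values, and the analogous length-$r$ values, in $\widehat O(p)$ total time.
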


In comparison, the analogous result in~\cite{BringmannCFN22} computes the edit distances exactly (capped with $\Oh(k)$),
but the running time increases by a multiplicative factor $k$.
The reason why we have to compute the edit distance for \emph{many shifts} is due to the intricate recursion scheme of the Andoni--Krauthgamer--Onak framework, which is applied throughout under the hood.

\cref{lem:informal-periodic} is proven by combining two ideas:
First, using structural insights on the edit distance of periodic strings, we can reduce the problem to strings of length $\Oh(p)$ (see \cref{lem:ed-approx-periodic}; here, we incur a factor-3 loss in the approximation ratio).
Second, we apply in a black-box fashion a known algorithm that computes the edit distance between two strings for many shifts.
The Andoni--Onak algorithm~\cite{AndoniO12} is suitable for this task but only attains a subpolynomial approximation.
To improve our approximation factor to polylogarithmic, we instead use the very recent \emph{dynamic} edit distance approximation algorithm by Kociumaka, Mukherjee, and Saha~\cite{KMS23}; see \cref{sec:shifts} for more details.

In summary, these ideas are sufficient to improve the BCFN algorithm to run in time $\widehat\Order(n/k + k^3)$.
The following insights further reduce the running time to $\widehat\Order(n/k + k^2)$.

\paragraph{Improvement 2: Block Periodicity Decomposition}
Our second and technically much more challenging contribution is to improve the reduction from the general case to the block-periodic case. Specifically, building on the $\widehat\Order(n/k + Bp)$-time algorithm for the block-periodic case, our goal is to develop an $\widehat\Order(n/k + k^2)$-time algorithm for the general $(k, k^{1+\order(1)})$-gap edit distance problem.

Our reduction diverges completely from the approach of \cite{BringmannCFN22}.
Instead, we rely on more structural insights related to the block periodicity, especially on the notion of \emph{breaks}.
For a string~$X$, we say that a position $i$ is a \emph{$p$-break} if $i$ is a multiple of~$p$ and $X \intervalco{i}{i+3p}$ is not $p$-periodic (see \cref{def:break}). We exploit breaks in two ways. First, we observe that the number of $p$-breaks approximates the $p$-block periodicity of a string within a constant factor (see \cref{lem:equivalence-blocks-bp}). Since we can test if $X\intervalco{i}{i+3p}$ is a $p$-break in time $O(p)$~\cite{KnuthMP77}, this insight yields an efficient sublinear-time subroutine estimating the block periodicity of a string.
In the BCFN algorithm, the block periodicity of a string was never explicitly computed, and this new ``Block Periodicity Test'' on its own already vastly streamlines the BCFN algorithm.

Second, we use breaks to solve the $(k, k^{1+o(1)})$-gap problem in a more direct way. Suppose that $\ED(X, Y) \leq k$ and fix an optimal edit distance alignment between $X$ and $Y$.
Let us sample indices~$i$ which are multiples of $k$ at rate $\approx 1/k^2$ and identify all $k$-breaks in $X$ among them.
Then, with constant probability, none of the sampled breaks contain edits performed by the optimal alignment.
In particular, we can infer how the optimal alignment matches each of the sampled breaks.
If the alignment matches a break at position $i$, then the break must have an exact occurrence in $Y$ at position $j\in \intervalcc{i-\floor{k/2}}{i+\floor{k/2}}$ (recall that we assume $|X|=|Y|$). Moreover, since the break is not $k$-periodic, it cannot have more than one such exact occurrence.
For each sampled break, we use exact pattern matching~\cite{KnuthMP77} to find the unique match in $Y$.
This allows us to split the instance into substrings $X_1, \dots, X_s$ and $Y_1, \dots, Y_s$ such that $\sum_i \ED(X_i, Y_i) = \ED(X, Y)$. Moreover, by the aforementioned correspondence between breaks and block periodicity, with good probability the $k$-block periodicity of each substring $X_i$ is bounded by~\smash{$\widetilde\Order(k)$}. This splitting procedure forms the heart of our algorithm (see \cref{lem:split}).

This suggests the following algorithm.
If the $k$-block periodicity of $X$ is $\widehat\Order(k)$, then the instance falls into the block-periodic case, and it can be solved directly using \cref{lem:informal-stoc} in time $\widehat\Order(n/k + Bp) = \widehat\Order(n/k + k^2)$.
Otherwise, we apply the splitting routine to partition the strings into substrings $X_1, \dots,X_s$ and $Y_1, \dots, Y_s$ in such a way that $\ED(X, Y) = \sum_i \ED(X_i, Y_i)$ and the block periodicity of each piece $X_i$ is bounded by \smash{$\BP_k(X_i) \leq \widetilde\Order(k)$}. It remains to distinguish whether $\sum_i \ED(X_i, Y_i) \leq k$ or $\sum_i \ED(X_i, Y_i) > K$ (for $K = k^{1+\order(1)}$).

Naively recursing on all subproblems is not efficient enough as there could be too many of them. Instead, a more careful approach is to \emph{subsample} the subproblems and to recurse only on few of them. This is exactly the right task for the \emph{precision sampling} technique, which has already proven to be an instrumental tool for approximate edit distance~\cite{AndoniKO10,BringmannCFN22,BringmannCFN22b,KMS23}.
We remark that, in our algorithm in \cref{sec:main}, we apply the technique in a more elementary way, similar to~\cite{IndykW05}.
This variant incurs an $\Oh(\log K)$-factor loss in the approximation, but we can still use it, even when aiming for a polylogarithmic gap,
because our recursion is relatively shallow compared to~\cite{AndoniKO10}.

\subsection{Open Questions}

We mention two tantalizing open problems:

\begin{enumerate}
    \item While our result makes considerable progress towards answering Question 2 above, it does not settle it.
    In particular, is there an algorithm for the $(k, k^{1+o(1)})$-gap edit distance problem in time $\widehat{O}(n/k + \sqrt{n})$?
    Alternatively, can we prove a stronger (possibly conditional) lower bound?
    \item Can the $(k, O(k))$-gap edit distance problem be solved in time $\widehat{O}(n/k + \poly(k))$?
    We believe that our approach is hopeless to answer this since we build upon the Andoni--Krauthgamer--Onak framework~\cite{AndoniKO10}, which inherently incurs a polylogarithmic factor in the approximation.
\end{enumerate}

\subsection{Outline}
We structure the remainder of this paper as follows. In \cref{sec:preliminaries} we give some preliminaries. In \cref{sec:main} we state our main algorithm (Improvement 2 from the overview) and prove our Main \cref{thm:main} and its corollaries.
Finally, in \cref{sec:bcfn,sec:shifts} we state the improved algorithm for the block-periodic case (Improvement 1).

\section{Preliminaries} \label{sec:preliminaries}

For integers $i, j \in \Int$, we write $\intervalco{i}{j} := \set{i, i+1, \dots, j-1}$ and $\intervalcc{i}{j} := \set{i,i+1,\dots,j}$.
The sets $\intervaloc{i}{j}$ and $\intervaloo{i}{j}$ are defined analogously.
We set $\poly(n) = n^{O(1)}$ and $\polylog(n) = (\log n)^{O(1)}$.

\paragraph*{Strings}
A string $X = X\access{0}\dots X\access{n-1} \in \Sigma^n$ is a sequence of $|X| = n$ symbols from an alphabet~$\Sigma$.
We usually denote strings by capital letters $X, Y, Z$.
For integers $i, j$ we denote by $X\intervalco{i}{j}$ the substring with indices in $\intervalco{i}{j}$.
Sometimes we call $X\intervalco{i}{j}$ a \emph{fragment} of $X$.
If the indices are out of bounds, we set $X\intervalco{i}{j} = X\intervalco{\max(0, i)}{\min(j, |X|)}$, and similarly for $X\intervalcc{i}{j}$.

For a string $X$ and an integer $s$, we denote by $X^{\circlearrowright s}$ the cyclical rotation of $X$ defined as $X^{\circlearrowright s}\access{i} = Y\access{(i+s) \bmod |X|}$ for $i \in \intervalco{0}{|X|}$.
We say that $X$ is \emph{primitive} if all the non-trivial rotations of $X$ are distinct from itself.
For a string $Q$, we denote by $Q^*$ the infinite-length string obtained by repeating $Q$.
We say that $X$ is \emph{periodic with period $Q$} if $X = Q^*\intervalco{0}{|X|}$.
For an integer~\makebox{$q \geq 1$}, we say that $X$ is \emph{$q$-periodic} if it is periodic with some period of length at most $q$.
We refer to the smallest period length of $X$ as $\per(X)$.

\paragraph*{Hamming and Edit Distances}
Given two strings $X, Y$ of the same length, we define their Hamming distance $\HD(X, Y) := |\set{i \mid X[i] \neq Y\access{i}}|$ as the number of indices in which they differ.
For two strings~$X, Y$ (with possibly different lengths), we define their \emph{edit distance $\ED(X, Y)$} as the minimum number of insertions, deletions or substitutions necessary to transform $X$ into $Y$.
We refer to insertions, deletions and substitutions as \emph{edits}.

We formally define the gap edit distance problem as follows.

\begin{definition}[Gap Edit Distance]
The $\GapED(k, K)$ problem is to distinguish, given two strings~$X, Y$, whether
\smallskip
\begin{itemize}
    \item $\ED(X, Y) \leq k$ (return \Close{} in this case), or
    \item $\ED(X, Y) > K$ (return \Far{} in this case).
\end{itemize}
\end{definition}

We say that an algorithm correctly solves the $\GapED(k,K)$ problem if it returns the correct answer with constant probability.

\paragraph*{Machine Model}
We work under the standard word RAM model, where the words have size logarithmic in the input size of the problem.
That is, given input strings $X, Y$ of total length $n$ over an alphabet $\Sigma$, we assume the words have size $w = \Theta(\log n + \log|\Sigma|)$.

\section{Faster Algorithm} \label{sec:main}

In this section we prove \cref{thm:main}.
We will make use of the following subroutine which solves the gap edit distance problem for strings $X, Y$ when the block periodicity of $X$ is bounded.
We defer its proof to \cref{sec:bcfn}.

\begin{restatable}[Algorithm for Bounded Block Periodicity]{lemma}{lemmainstoc}\label{lem:main-stoc}
    There exists an algorithm $\alg{AlgSmallBP}$ that, given two strings $X,Y$ of length at most $n$, and parameters $k, K, p, B,\Delta\in \Int_+$ such that \begin{enumerate*}[label=(\roman*)] \item $\BP_p(X) \leq B$, \item $p, B \geq k$, \item $(256 \log K)^2 \leq \Delta \leq n$, and \item $K/k \geq (\log n)^{\beta \cdot \log_\Delta(n)}$ where $\beta > 0$ is a sufficiently large constant,\end{enumerate*}
    solves the $\textsc{GapED}(k, K)$ problem on $X, Y$ with probability at least 0.9 and runs in time
    \[
        O\left(\left(\frac{n}{K} \cdot \Delta + p B \cdot \Delta\right) \cdot (\log n)^{\alpha \cdot \log_{\Delta}(n)}\right)
    \]
    for some constant $\alpha > 0$.
\end{restatable}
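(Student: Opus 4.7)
My plan is to build on the block-periodic-case algorithm implicit in~\cite{BringmannCFN22}, replacing its bottleneck step---per-shift edit-distance computations between each periodic piece of $X$ and the corresponding window of $Y$---with a single invocation of the multi-shift routine of \cref{lem:informal-periodic}. This alone reduces the dominant $\widehat\Order(Bpk)$ term of~\cite{BringmannCFN22} to $\widehat\Order(Bp)$, which together with the usual precision-sampling overheads yields the stated running time.

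First I would compute, in a preprocessing phase, the decomposition $X = X_1 \odot \cdots \odot X_N$ into $N \leq B$ maximal $p$-periodic fragments, by scanning the positions that are multiples of $p$ and testing each for being a $p$-break (in the sense introduced in the overview); a single test takes $O(p)$ time via Knuth--Morris--Pratt, and only the $O(B)$ breakpoints actually encountered during the recursion need to be resolved in full, so the cost is absorbed into the $pB\cdot\Delta$ term. This decomposition serves as the skeleton used to identify which recursive subproblems fall into the easy periodic case.

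Next I would instantiate the Andoni--Krauthgamer--Onak precision-sampling recursion~\cite{AndoniKO10} with branching factor $\Delta$. Each recursive subproblem asks to approximate the edit distance between a fragment $X\intervalco{a}{b}$ and a fragment $Y\intervalco{c}{d}$; at each node the $X$-fragment is split into $\Delta$ equal chunks, and candidate matching windows in $Y$ are sampled according to precision weights of expected magnitude $\widetilde\Order(1/K)$, producing the $O(n/K \cdot \Delta)$ per-level query cost. Whenever such a chunk of $X$ lies entirely inside a single periodic piece $X_i$, I would halt the recursion for that chunk and call \cref{lem:informal-periodic} to approximate its edit distance to all admissible shifts of the corresponding $Y$-window in one shot, paying $\widehat\Order(p+k)$ in total. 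Chunks that straddle a $p$-break between two pieces $X_i$ and $X_{i+1}$ are few---at most $N \leq B$ per level of recursion---and are handled either by continuing the recursion one additional level or by brute force at the bottom of the recursion, contributing the $O(pB \cdot \Delta)$ term when summed over all levels.

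Summing across the $\log_\Delta n$ recursion levels, each level carries a multiplicative $\polylog(n)$ overhead from the $\Delta$-branching and sampling factors, accounting for the $(\log n)^{\alpha \log_\Delta n}$ term in the statement. The hardest part, as is typical for AKO-style algorithms, will be the error analysis: the precision-sampling step at each level distorts edit distances by a $\polylog(n)$ factor, and each invocation of \cref{lem:informal-periodic} additionally contributes an $\widehat\Order(1)$ approximation factor that must be threaded carefully through the recursion. I must verify that over $\log_\Delta n$ levels these multiplicative distortions compose into an overall approximation ratio bounded by $(\log n)^{O(\log_\Delta n)}$, matching the hypothesis $K/k \geq (\log n)^{\beta \log_\Delta n}$ and thereby certifying that the gap decision between $\ED(X,Y)\leq k$ and $\ED(X,Y)>K$ is correct with constant probability.
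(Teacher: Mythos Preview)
Your high-level strategy---run an AKO-style precision-sampling recursion and, at periodic nodes, replace BCFN's per-shift edit-distance computation by the single multi-shift call of \cref{lem:informal-periodic}---matches the paper's. But your trigger condition for halting the recursion is wrong, and this breaks both the correctness and the running-time analysis.

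You halt whenever the $X$-chunk lies entirely inside a periodic piece $X_i$. But \cref{lem:informal-periodic} requires \emph{both} strings to have period $p$; knowing only that $X_v$ is $p$-periodic gives you nothing about $Y_v$. The paper instead runs, at each node $v$, a Periodicity Test on $Y_v$ and a Matching Test checking whether $X_v$ equals some shift $Y_{v,s^*}$ up to small Hamming error. If both pass, it replaces $X_v$ and $Y_v$ by genuinely periodic proxies $S,T$ (incurring an additive $1/r_v$ error) and applies the multi-shift routine to $S,T$; this is \cref{lem:fast-shifts-ed}. If either test fails, it recurses. Without the Matching Test, you have no way to invoke \cref{lem:informal-periodic} even when $X_v$ is periodic.

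This also invalidates your running-time bound. You count only nodes that straddle a break in $X$ (at most $B$ per level). But there can be many nodes where $X_v$ is periodic yet $X_v$ does \emph{not} match any shift of $Y_v$; on those nodes you must continue recursing. The paper bounds such ``unmatched'' nodes by $kD$ using \cref{lem:unmatched} (assuming $\ED(X,Y)\le k$), and the total active-node count becomes $O((k+\BP_p(X))D\ell)$. Your analysis omits the unmatched contribution entirely. A secondary issue: precomputing the full periodic decomposition of $X$ requires testing $\Theta(n/p)$ positions at cost $O(p)$ each, i.e.\ $\Theta(n)$ time, which is not sublinear; the paper sidesteps this by testing periodicity on-the-fly at sampled nodes rather than up front.
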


The remainder of this section is organized as follows.
In \cref{sec:main:sec:breaks} we introduce the crucial notion of breaks and their usage as a proxy for the block periodicity.
In \cref{sec:main:sec:algorithm} we give our main algorithm, in which we leverage \cref{lem:main-stoc}.
Finally, we put things together in \cref{sec:main:sec:main-thms} to prove our main theorems.

\subsection{Breaks and their Usage}\label{sec:main:sec:breaks}

We start by formally defining breaks.

\begin{definition}[Break]\label{def:break}
    Let $X$ be a string of length $n$ and let $k\ge 2$ be an integer.
    A position $i\in \intervalcc{0}{n-3k}$ is a \emph{$k$-break in $X$} if $i$ is a multiple of $k$
    and $\per(X\intervalco{i}{i+3k})>k$, that is, $X\intervalco{i}{i+3k}$ is \emph{not} $k$-periodic.
\end{definition}

The importance of breaks is that their number is a good approximation of the block periodicity of a string, as captured by the following lemma.

\begin{lemma}\label{lem:equivalence-blocks-bp}
    Let $b$ be the number of $k$-breaks in $X$.
    Then, $\frac13 b \le \BP_k(X) \le b+3$.
\end{lemma}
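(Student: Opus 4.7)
My plan is to prove the two inequalities independently, viewing breaks as obstacles to any $k$-periodic decomposition. For the lower bound $\BP_k(X) \ge b/3$, I fix any decomposition $X = X_1 \cdots X_L$ into $k$-periodic pieces with interior cuts $0 < d_1 < \dots < d_{L-1} < n$. Since any substring of a $k$-periodic string is itself $k$-periodic, each break $i_j$ must force some cut $d_\ell$ to lie strictly inside the open interval $(i_j, i_j + 3k)$; otherwise $X\intervalco{i_j}{i_j+3k}$ would sit inside a single $k$-periodic piece $X_\ell$, contradicting that $i_j$ is a break. Conversely, a fixed cut $d_\ell$ can be blamed in this way by at most three breaks, since the blaming breaks are multiples of $k$ inside the open interval $(d_\ell - 3k, d_\ell)$ of length $3k$, which contains at most three multiples of $k$. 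Thus $b \le 3(L - 1)$, giving $L \ge b/3$.

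For the upper bound $\BP_k(X) \le b + 3$, I construct an explicit decomposition. The key intermediate claim is that whenever $jk, (j+1)k, \dots, j'k$ are all non-break candidate positions, the fragment $X\intervalco{jk}{j'k+3k}$ is $k$-periodic. I would prove this by induction on $j' - j$, combining the periods $p_1, p_2 \le k$ of $X\intervalco{jk}{(j'-1)k+3k}$ (inductive hypothesis) and $X\intervalco{j'k}{j'k+3k}$ (definition of non-break): their length-$2k$ overlap has period $\gcd(p_1, p_2)$ by Fine--Wilf, and a standard periodicity-propagation argument lifts this period to the whole union. Given the claim, when $b \ge 1$ I use the cut points $0, i_1+k, i_2+k, \dots, i_b+k$ plus at most two additional cuts near the right end of $X$. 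Each intermediate piece $X\intervalco{i_j+k}{i_{j+1}+k}$ (and the leftmost $X\intervalco{0}{i_1+k}$) is $k$-periodic, being either a length-$k$ block (when $i_{j+1} = i_j + k$) or a prefix of a $k$-periodic segment produced by the claim. The trailing piece $X\intervalco{i_b+k}{n}$ is split by a small case distinction: if a non-break candidate exists beyond $i_b$, I cut once more at $c_M + 3k$ (where $c_M$ denotes the last candidate position) to separate a $k$-periodic middle from a length-$<k$ tail; otherwise, I split it into up to three length-$\le k$ blocks. In all cases the total piece count is at most $b + 3$. The degenerate cases $b = 0$ and $n < 3k$ follow immediately from the claim and a trivial length-based split, respectively.

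The main obstacle I anticipate is formalizing the periodicity-propagation step: Fine--Wilf directly yields only the common period on the length-$2k$ overlap, and one has to verify that this same period extends to each of the two source segments (and therefore to their union), invoking $2k \ge p_1 + \gcd(p_1, p_2)$ and symmetrically for $p_2$. All remaining work is routine bookkeeping for the boundary pieces near the ends of $X$.
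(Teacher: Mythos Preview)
Your proof is correct. The lower bound $b \le 3\BP_k(X)$ is essentially identical to the paper's argument (you count interior cuts and get the slightly sharper $b \le 3(L-1)$, but the reasoning is the same).

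For the upper bound $\BP_k(X) \le b+3$, you take a somewhat different route. The paper cuts at position $i+2k$ for every break $i$ and additionally at the fixed positions $k$ and $n-k$; it then argues that each resulting piece $X\intervalco{p}{q}$ with $k \le p < q \le n-k$ is $k$-periodic by contradiction: take the maximal $k$-periodic prefix $X\intervalco{p}{r}$, locate a non-break multiple of $k$ in $(r-3k,r-2k]$, and use Fine--Wilf on the length-$\ge 2k$ overlap to extend the period past $r$. Your construction instead cuts at $i_j+k$ and proves a direct inductive claim that any run of consecutive non-break positions $jk,\dots,j'k$ yields a $k$-periodic fragment $X\intervalco{jk}{j'k+3k}$, then handles the right boundary by a short case analysis. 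Both arguments hinge on the same Fine--Wilf merge of two $\le k$-periods across a length-$2k$ overlap, together with the propagation step you correctly flag (the common period $\gcd(p_1,p_2)$ extends from the overlap to each side because the overlap has length at least $\max(p_1,p_2)$). The paper's choice of boundary cuts at $k$ and $n-k$ is what buys it a uniform argument with no tail cases; your version trades that for an explicit inductive statement, which makes the ``why'' more transparent at the cost of a little extra bookkeeping near the end of the string.
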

\begin{proof}
    We first argue that $\BP_k(X) \leq b+3$.
    For this, we partition $X$ into (at most) $b+3$ pieces,
    splitting $X$ at position $i+2k$ for every break $i$, as well as at positions $k$ and $n-k$.
    It suffices to prove that each of the resulting pieces has period at most $k$.
    This property is trivially satisfied for pieces contained within $X\intervalco{0}{k}$
    and $X\intervalco{n-k}{n}$ (their length does not exceed $k$).
    So, consider a piece $X\intervalco{p}{q}$ with $k \le p < q \le n-k$
    and let $r\in \intervalcc{p}{q}$ be the largest position such that $\per(X\intervalco{p}{r}) \le k$.
    For a proof by contradiction, suppose that $r<q$.
    Consider an integer $i\in \intervaloc{r-3k}{r-2k}$ that is a multiple of $k$.
    Observe that $r \ge p+k \ge 2k$ implies $i\ge 0$ and $r < q \le n-k$ implies $i\le n-3k$.
    Moreover, there is no piece starting at position $i+2k\in \intervaloc{r-k}{r} \subseteq \intervaloc{p}{r}$, and thus $i$ is not a break.
    This means that $\per(X\intervalco{i}{i+3k})\le k$.
    If $i \le p$, then $\per(X\intervalcc{p}{i+3k})\le \per(X\intervalco{i}{i+3k}) \le k$, contradicting the choice of $r<i+3k$.
    Otherwise, the intersection $X\intervalco{i}{r}$ of $X\intervalco{i}{i+3k}$ and $X\intervalco{p}{r}$ contains at least $2k$ characters.
    By the periodicity lemma~\cite{FineW65}, since both fragments have periods at most $k$, their union $X\intervalco{p}{i+3k}$
    also has period at most $k$, contradicting the choice of $r$.

    Next, we argue that $b \leq 3\BP_k(X)$.
    Let $L := \BP_k(X)$.
    By definition, we can write $X$ as $X = X_1 X_2 \cdots X_L$, where each $X_i$ is $k$-periodic.
    Observe that every substring inside some $X_i$ is $k$-periodic, so no break can be contained inside any $X_i$.
    In particular, every break contains the starting position of at least some $X_i$.
    Moreover, the starting position of any $X_i$ is contained in at most three breaks (since breaks overlap).
    Therefore, the number of breaks is bounded by $3L$.
\end{proof}

At the heart of our algorithm is the following splitting algorithm.
The idea is to use breaks in order to split an instance into independent subproblems, where each subproblem has smaller block periodicity.

\begin{lemma}[Splitting]\label{lem:split}
    There is an algorithm $\alg{Split}(X,Y,k,K,\delta)$ that, given as input two length-$n$ strings~$X,Y$, two thresholds $1\le k\le K$,
    and a parameter $\delta\in (0,1]$, returns partitions $X=X_1\cdots X_s$ and $Y=Y_1\cdots Y_s$ such that:
    \begin{enumerate}
        \item With probability at least $1-\delta$, the inequality $\BP_k(X_i) \leq 4K$ is satisfied for all $i\in \intervalcc{1}{s}$.
        \item If $\ED(X, Y) \leq k$, then $\sum_{i=1}^s \ED(X_i, Y_i) = \ED(X, Y)$ holds with probability at least $1 - \tfrac{3k}{K}\log\frac{n}{\delta}$.
    \end{enumerate}
    The algorithm runs in expected time $O\left(\frac{n}{K} \log \frac{n}{\delta}\right)$.
\end{lemma}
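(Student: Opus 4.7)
The plan is to implement $\alg{Split}$ by subsampling $k$-break positions in $X$ at rate roughly $1/K$, using the surviving breaks as split points in $X$, and then locating the corresponding split points in $Y$ via exact pattern matching. Concretely, set $p := c\ln(n/\delta)/K$ for a suitable constant $c$ and sample each position from $\{0,k,2k,\dots\}\cap \intervalcc{0}{n-3k}$ independently with probability $p$ (drawing gaps from a geometric distribution so that the iteration cost scales with the number of samples, not with $n/k$). For each sampled $i$, test whether $i$ is a $k$-break by computing $\per(X\intervalco{i}{i+3k})$ in $O(k)$ time with Knuth--Morris--Pratt~\cite{KnuthMP77}. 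Collect the sorted list of sampled breaks $i_1<\dots<i_t$. For each $i_j$, run KMP to find an occurrence of the pattern $X\intervalco{i_j}{i_j+3k}$ inside the length-$4k$ window $Y\intervalco{i_j-\lfloor k/2\rfloor}{i_j+\lfloor k/2\rfloor+3k}$, and call its starting offset $j_r$. Output the partitions obtained by cutting $X$ at $i_1,\dots,i_t$ and $Y$ at $j_1,\dots,j_t$.

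For property~(1), I appeal to \cref{lem:equivalence-blocks-bp}: each piece $X_\ell$ satisfies $\BP_k(X_\ell)\le b_\ell+3$, where $b_\ell$ is the number of $k$-breaks of $X$ falling inside $X_\ell$. Equivalently, $b_\ell$ is bounded by the gap between two consecutive sampled breaks in the sorted sequence of all $X$-breaks (taking virtual endpoints at $0$ and $n$ to also cover the first and last piece). A stretch of $g$ consecutive breaks escapes the sample with probability $(1-p)^g\le e^{-pg}$, and since there are at most $n/k\le n$ candidate stretches, a union bound shows that with probability $\ge 1-\delta$ every such stretch has length at most $(1/p)\ln(n/\delta) = O(K)$. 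Choosing $c$ large enough so that this bound is $\le 4K-3$ yields $\BP_k(X_\ell)\le 4K$ for every $\ell$.

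For property~(2), fix an optimal alignment $\pi$ with at most $k$ edits and call a candidate break-position $i$ \emph{bad} if $X\intervalco{i}{i+3k}$ contains any position touched by $\pi$. Each of the $\le k$ edits can render at most three multiples of $k$ bad (those within distance $3k$), so there are at most $3k$ bad positions, and a union bound bounds the probability that the sample hits one by $3k\cdot p = \tfrac{3k}{K}\ln(n/\delta)$, matching the claim. Conditioning on all sampled breaks being \emph{clean}, the shift of $\pi$ at any position lies in $\intervalcc{-\lfloor k/2\rfloor}{\lfloor k/2\rfloor}$ (using $|X|=|Y|$: a shift of magnitude $s$ requires $s$ edits to create and $s$ more to undo before the string's end), so $\pi$ maps each clean break $X\intervalco{i_j}{i_j+3k}$ to an exact occurrence inside the searched window. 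Uniqueness follows from the periodicity lemma~\cite{FineW65}: two distinct occurrences would lie at distance $\le k$ and induce a period $\le k$, contradicting $\per(X\intervalco{i_j}{i_j+3k})>k$. Thus the computed $j_r$ coincides with $\pi$'s image of $i_j$; restricting $\pi$ to each aligned pair $(X_i,Y_i)$ yields $\sum_i \ED(X_i,Y_i)\le \ED(X,Y)$, while concatenating optimal alignments of the pieces always gives the reverse inequality, so equality holds.

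Finally, the expected sample count is $(n/k)\cdot p = O\bigl(\tfrac{n}{kK}\log\tfrac{n}{\delta}\bigr)$, and each sampled position costs $O(k)$ for its break test and subsequent KMP search, for an expected total of $O\bigl(\tfrac{n}{K}\log\tfrac{n}{\delta}\bigr)$. The main obstacle I anticipate is calibrating the search window in property~(2): it must be wide enough to contain $\pi$'s image of every clean break (forcing the $k/2$ shift bound, which crucially uses $|X|=|Y|$) yet narrow enough that the break's $k$-aperiodicity forbids a second match; a window of length exactly $4k$ is the sweet spot that makes Fine--Wilf apply.
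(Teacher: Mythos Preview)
Your proof is correct and follows essentially the same approach as the paper's: sample multiples of $k$ at rate $\frac{1}{K}\log\frac{n}{\delta}$, retain the sampled $k$-breaks as cut points in $X$, and locate the matching cut points in $Y$ via exact pattern matching in a window of radius $\lfloor k/2\rfloor$. The only quibble is that your property~(2) bound $3k\cdot p = \frac{3k}{K}\ln(n/\delta)$ forces $c=1$, so the phrase ``choosing $c$ large enough'' in property~(1) should simply be the check that $c=1$ already yields a gap of at most $K$ breaks per piece (which it does, exactly as in the paper).
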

\begin{proof}
    We sample indices $i \in \intervalcc{0}{n-3k}$ which are multiples of $k$ uniformly at random with rate $\frac{1}{K} \log \frac{n}{\delta}$.
    We identify all $k$-breaks in $X$ among the sampled indices.
    For each $k$-break $i$, we try to find a position $j\in \intervalcc{i -\floor{k/2}}{i+\floor{k/2}}$
    such that $X\intervalco{i}{i+3k}=Y\intervalco{j}{j+3k}$.
    Since $i$ is a break, there is at most one such position $j$.
    If there are none, we set $j:=i$.
    In either case, we split $X$ and $Y$ at positions $i$ and $j$, respectively.
    See \cref{alg:split} for the pseudocode.

    Let us first analyze the running time.
    For each index $i$,  computing $\per(X\intervalco{i}{i+3k})$ and finding the occurrences of $X\intervalco{i}{i+3k}$
    within $Y\intervalco{i-\floor{k/2}}{i+3k+\floor{k/2}}$ takes $O(k)$ time since period finding and pattern matching are in linear time~\cite{KnuthMP77}.
    In expectation, the number of sampled indices $i\in \intervalcc{0}{n-3k}$ that are multiples of $k$
    is $O(\frac{n}{k}\cdot \frac{1}{K} \log\frac{n}{\delta})$.
    Therefore, the algorithm runs in $O(\frac{n}{K}\log\frac{n}{\delta})$ expected time, as desired.

    Next, let us analyze the correctness.
    For property (1), note that it suffices to show that we sample at least one $k$-break out of every $K$ consecutive $k$-breaks.
    Indeed, if this claim holds, then, by \cref{lem:equivalence-blocks-bp},
    we obtain that every phrase $X_i$ satisfies $\BP_k(X_i) \leq K + 3 \leq 4K$.
    We proceed to bound the probability of this event.
    Fix $K$ consecutive breaks.
    The probability that we do not sample any of them is
    \[
      \left(1 - \frac{\log(n/\delta)}{K}\right)^{K}
      \leq \exp(-\log (n/\delta)) = \frac{\delta}{n}.
    \]
    By a union bound over the at most $n$ choices of $K$ consecutive $k$-breaks, we get property (1).

    For property (2), suppose that $\ED(X, Y) \leq k$ and fix an optimal alignment.
    Then, the optimal alignment performs at least one edit in at most $3k$ fragments $X\intervalco{i}{i+3k}$ where $i \in \intervalco{0}{n-3k}$ is a multiple of $k$.
    Thus, the probability that we do not sample any such index $i$ is at least
    \[
        \left(1 - \frac{\log(n/\delta)}{K}\right)^{3k} \geq 1 - \frac{3k}{K}\log\frac{n}{\delta}.
    \]
    Condition on this event.
    This means that the fixed optimal alignment matches the sampled breaks perfectly to fragments of $Y$.
    Moreover, since they are breaks, they have a unique perfect match in $Y$.
    This implies the claim.
\end{proof}

\begin{algorithm}[t]
\caption{} \label{alg:split}
\begin{algorithmic}[1]
\Procedure{Split}{$X, Y, k, K,\delta$}
    \State Sample a set $S \subseteq \intervalcc{0}{n-3k}$ of multiples of $k$
    \Statex[2] including each element independently with probability $\frac{1}{K}\log\frac{n}{\delta}$
    \State Let $i_1 < i_2 < \cdots < i_{s-1}$ be the $k$-breaks in $S$
    \For{$\ell \in \intervalco{1}{s}$}
        \If{there exists $z \in \intervalcc{i_\ell - \floor{k/2}}{i_\ell +\floor{k/2}}$ with $X\intervalco{i_\ell}{i_\ell+3k} = Y\intervalco{z}{z + 3k}$}
            \State $j_\ell := z$
        \Else
            \State $j_\ell := i_\ell$
        \EndIf
    \EndFor
    \State Set $i_0 := j_0 := 0$ and $i_s := |X|, j_s := |Y|$
    \State Set $X_\ell := X\intervalco{i_{\ell-1}}{i_\ell}, Y_{\ell} := \intervalco{j_{\ell-1}}{j_\ell}$ for $\ell \in \intervalcc{1}{s}$
    \State \Return $X_1,Y_1,\dots,X_s,Y_s$
\EndProcedure
\end{algorithmic}
\end{algorithm}

\subsection{Algorithm}\label{sec:main:sec:algorithm}

We will make use of the following folklore tool that approximately tests whether two strings are equal.
We sketch the proof for the benefit of the reader (consult e.g. \cite[Lemma 23]{BringmannCFN22} for a full proof).

\begin{lemma}[Equality Test] \label{lem:equality-test}
    Let $X, Y$ be two strings of the same length, and let $r > 0$ be a sampling rate. There is an algorithm which returns one of the following two outputs:
    \begin{itemize}
    \item \Close, in which case $\HD(X, Y) \leq 1/r$.
    \item \Far, in which case $X \neq Y$.
    \end{itemize}
    The algorithm runs in time $\Order(r |X| \log(\delta^{-1}))$ and is correct with probability at least $1 - \delta$.
\end{lemma}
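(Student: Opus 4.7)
The plan is to use the simplest possible sampler. Choose $T := \lceil r |X| \ln(\delta^{-1}) \rceil$ independent uniform indices $i_1,\dots,i_T \in \intervalco{0}{|X|}$ (with replacement). For each sampled index $i_t$, query $X\access{i_t}$ and $Y\access{i_t}$ and compare them. If any sample witnesses $X\access{i_t} \ne Y\access{i_t}$, return \Far; otherwise, return \Close.

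Correctness of the \Far{} output is immediate and deterministic: witnessing a single position where the strings disagree certifies $X \ne Y$. The only thing to control is the probability of incorrectly outputting \Close. Suppose $\HD(X, Y) > 1/r$, so that the number $h$ of mismatched positions satisfies $h > 1/r$. For each independent sample, the probability of landing on a mismatched position is $h/|X| > 1/(r|X|)$, so the probability that all $T$ samples miss every mismatch is at most
\[
    \left(1 - \frac{1}{r|X|}\right)^{T} \le \exp\!\left(-\frac{T}{r|X|}\right) \le \exp\!\left(-\ln \delta^{-1}\right) = \delta.
\]
Thus, whenever $\HD(X,Y) > 1/r$, the algorithm outputs \Far{} with probability at least $1 - \delta$, which is exactly the contrapositive of the \Close{} guarantee.

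The running time consists of $T = O(r |X| \log \delta^{-1})$ sampling steps, each of which performs a constant number of word-RAM operations (one index draw, two character lookups, and one comparison), matching the stated $O(r|X| \log(\delta^{-1}))$ bound. There is no real obstacle to this proof: the whole argument is a one-line coupon-collector-style calculation, and the only minor choice is whether to use with- or without-replacement sampling (the former gives the cleanest bound, the latter gives a slightly sharper constant but requires the same calculation). If a fractional $T$ is undesirable, one can take the ceiling at no loss.
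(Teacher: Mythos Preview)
Your proof is correct and follows essentially the same approach as the paper's proof sketch: sample $\Theta(r|X|\log(\delta^{-1}))$ uniform positions, compare characters, and return \Far{} if any mismatch is found. Your write-up is in fact more detailed than the paper's, which only states the sampling procedure and asserts the probability bound without carrying out the $(1-1/(r|X|))^T \le \delta$ calculation explicitly.
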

\begin{proof}[Proof sketch]
    Sample $r|X|\log(1/\delta)$ positions $i \in \intervalco{0}{|X|}$ uniformly at random and test if $X[i] = Y[i]$ for every sampled position.
    If no error is found, return \Close{}; otherwise, return \Far{}.
    If $X = Y$, then the algorithm always returns \Close{}.
    Otherwise, if $\HD(X, Y) > 1/r$, then the algorithm returns \Far{} probability at least $1 - \delta$.
\end{proof}

We are now ready to present our algorithm.
Since this is the most technically involved part of our paper, we start with an informal overview to convey some intuition.

\paragraph*{Algorithm Overview}
The high-level idea of the algorithm is as follows:
If the block periodicity of $X$ is bounded, say $\BP_k(X) \leq K$, then we can solve the $\GapED(k, K)$ problem directly using \cref{lem:main-stoc}.
Otherwise, we apply \cref{lem:split} to split the strings into pieces $X_1,Y_1,\dots,X_s,Y_s$ so that $\ED(X, Y) = \sum_i \ED(X_i, Y_i)$.
It remains to distinguish whether $\ED(X, Y) = \sum_i \ED(X_i, Y_i) < k$, or $\ED(X, Y) = \sum_i \ED(X_i, Y_i) > K$.
\alg{Split} guarantees that, with good probability, the block periodicity of each piece is bounded.
Hence, we naturally want to recurse on some of these pieces.
However, we cannot afford to naively recurse on all them since their total size is too large, and we are aiming for sublinear time.
This is exactly the task where we can apply the precision sampling technique, which enables us to recurse in a few subproblems of total length $\approx n/K$.
To obtain our desired running time, we additionally need that the recursive calls run in time proportional to their edit distance $\ED(X_i, Y_i)$, so that we can \emph{distribute} our time budget among them.

To implement the ideas above, we need to overcome some obstacles.
Importantly, observe that \alg{Split} (see \cref{lem:split}) guarantees that $\sum_i \ED(X_i, Y_i) = \ED(X, Y)$ holds with good probability \emph{only in the case when} $\ED(X, Y) \leq k$.
If $\ED(X, Y) > k$ or \alg{Split} fails, then all we know is that the split satisfies $\sum_i \ED(X_i, Y_i) \geq \ED(X, Y)$ due to subadditivity of edit distance.
Moreover, note that a priori there is no way to know whether $\sum_i \ED(X_i, Y_i) = \ED(X, Y)$ holds.
With this in mind, we design two routines \alg{AlgMain} and \alg{AlgBoosted}, see \cref{alg:main} for the pseudocode.
These routines receive as inputs strings $X, Y$, thresholds $k < K$ and a parameter $p$ with the promise that $\BP_p(X) \leq K/k \cdot p$ (additionally, \alg{AlgBoosted} receives a parameter $\delta$ that determines its failure probability).
Intuitively, \alg{AlgMain} solves the $\GapED(k, K)$ problem and runs in the desired running time\footnote{For this informal overview, we use the notation $\widehat{O}(\cdot)$ to ignore subpolynomial factors $n^{o(1)}$ for the purpose of readability, the actual proof does not hide these factors.} $\widehat{O}(n/K + kK)$ \emph{if we have the promise} that $\ED(X, Y) \leq k$ (i.e. in the \Close{} case).
\alg{AlgBoosted} solves the $\GapED(k, K)$ problem with high probability and runs in time $\widehat{O}(n/K + \min(k,\ED(X, Y)) \cdot K)$.
Crucially, one of the terms in the running time of \alg{AlgBoosted} is proportional to $\ED(X, Y)$ (this allows us to distribute the time budget among subproblems) and unlike \alg{AlgMain}, it does not assume that $\ED(X, Y) \leq k$.

\alg{AlgMain} carries out the ideas described above---namely, if the block periodicity is bounded then we solve the problem directly using \cref{lem:main-stoc} in \cref{alg:main:line:base-case-call} (after handling some trivial base cases in \crefrange{alg:main:line:testlargek}{alg:main:line:base-equality}).
Otherwise, we call \alg{Split} in \cref{alg:main:line:split} and perform precision sampling in \crefrange{alg:main:line:pslstart}{alg:main:line:returnfar}.
In order to distribute the time budget among the sampled subproblems, we make calls to \alg{AlgBoosted} in~\cref{alg:main:line:recursion}.
\alg{AlgBoosted} in turn solves the $\GapED(k, K)$ by calling to \alg{AlgMain}.
In order to run in time $\widehat{O}(n/K + \min(k, \ED(X, Y)) K)$, we use exponential search in \crefrange{alg:boost:line:doubling}{alg:boost:doubling-end}.
Since \alg{AlgMain} only satisfies its running time guarantee when we are in the \Close{} case, each call in \cref{alg:boost:recurse} needs to be stopped and interrupted if it exceeds this time budget.
Note that each time that \alg{AlgMain} calls \alg{AlgBoosted}, the block periodicity is reduced so the recursion eventually stops.

\begin{algorithm}[t]
\caption{} \label{alg:main}
\begin{algorithmic}[1]
\Procedure{AlgBoosted}{$X,Y, k, K, p, \delta$}
    \For{$\tilde k = 0,1,2,4,\dots,k$} \label{alg:boost:line:doubling}
        \RepeatTimes{$\Theta(\log(\log(k)/\delta))$} \label{alg:boost:line:boost}
            \State Run $\alg{AlgMain}(X, Y, \tilde k, K, p)$ and store the outcome. If it does not finish within \label{alg:boost:recurse}
            \Statex[4] time budget $O\left(\left(\frac{n}{K}\Delta + \tilde{k}K \cdot \Delta^3 \right) \cdot (\log n)^{\alpha \cdot \log_\Delta(n)} \cdot (\log n)^{14 \log_\Delta(p)}\right)$
            \Statex[4] interrupt and store \Far{}
        \EndRepeatTimes \label{alg:boost:doubling-end}
        \If{the majority of the outcomes is \Close{}}
            \Return \Close \label{alg:boost:line:returnclose}
        \EndIf
    \EndFor
    \State \Return \Far{}
\EndProcedure

\medskip

\Procedure{AlgMain}{$X, Y, k, K, p$}
\If{$K > |X| + |Y|$}\label{alg:main:line:testlargek}
    \State \Return \Close
\EndIf
\If{$k = 0$}\label{alg:main:line:testzerok}
    \State \Return output of the Equality Test (\cref{lem:equality-test}) for $X, Y$ with $r = 1/K, \delta := 0.001$ \label{alg:main:line:base-equality}
\EndIf
\If{$p \leq k \Delta$} \label{alg:main:line:base-case-test}
    \State \Return $\alg{AlgSmallBP}(X, Y, k, K, k\Delta, K\Delta)$ (\cref{lem:main-stoc}) \label{alg:main:line:base-case-call}
\Else
    \State Let $X_1,Y_1,\dots,X_s,Y_s$ be the output of $\alg{Split}(X, Y, k, K, 0.01)$\label{alg:main:line:split}
    \For{$d = 1,2,4,\dots,K$} \label{alg:main:line:pslstart}
        \State Sample a set $S \subseteq \intervalcc{1}{s}$ including each element
        \Statex[4] independently with probability $\frac{108 d \log K \log(1/\delta)}{K}$, where $\delta := 0.01/\log K$ \label{alg:main:line:sample}
        \State Run $\alg{AlgBoosted}(X_i, Y_i, d \frac{k}{K} 64 \log K, d, 16 k \log K, 0.01/n^2)$ for all $i \in S$\label{alg:main:line:recursion}
        \If{at least $12\log(1/\delta)$ of the answers are \Far{}} \label{alg:main:line:pslclosecondition}
        \State \Return \Far{} \label{alg:main:line:breaks-end}
    \EndIf
    \EndFor
    \State \Return \Close{} \label{alg:main:line:returnfar}
\EndIf
\EndProcedure

\end{algorithmic}
\end{algorithm}

\paragraph*{Running Time and Correctness Analysis}
Formally, we prove the following lemma:
\begin{lemma}\label{lem:main-alg}
    There is an algorithm that given strings $X, Y$ of total length $n$,
    as well as parameters $k, K, p$ and $\Delta\in \Int_+$ such that: \begin{enumerate*}[label=(\roman*)] \item $p \leq n$, \item $\BP_p(X) \leq \tfrac{K}{k}p$,
        \item $(256 \log K)^2 \leq \Delta \leq n$,  and \item  $K/k > (\log n)^{c \log_\Delta(n)}$ for a sufficiently large constant $c > 0$,
    \end{enumerate*}
    solves the $\GapED(k, K)$ problem with probability at least $2/3$.
    The algorithm runs in time
    \[
        \left(\frac{n}{K}\Delta +\min(E, k)\cdot K\Delta^3 \right) \cdot (\log n)^{O(\log_\Delta(n))},
    \]
    where  $E := \ED(X, Y)$.
\end{lemma}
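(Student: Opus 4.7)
The plan is to analyze \alg{AlgMain} and \alg{AlgBoosted} jointly by induction on $p$, with the recursion depth bounded by $O(\log_\Delta n)$ because each non-base invocation of \alg{AlgMain} calls \alg{AlgBoosted} with $p' = 16 k \log K$, and the condition $p > k \Delta \geq k (256 \log K)^2$ forces $p' \ll p/\Delta$. I would first dispose of the three base cases. If $K > |X| + |Y|$, then $\ED(X,Y) \leq |X|+|Y| < K$, so \Close{} is correct. If $k = 0$, the Equality Test of \cref{lem:equality-test} separates $X = Y$ from $\HD(X,Y) > K$ with constant success probability. If $p \leq k\Delta$, assumption (ii) gives $\BP_{k\Delta}(X) \leq \BP_p(X) \leq (K/k) p \leq K \Delta$, fulfilling the precondition of \alg{AlgSmallBP}; substituting $p \leftarrow k\Delta$ and $B \leftarrow K\Delta$ into \cref{lem:main-stoc} yields a bound matching the target $(n/K \cdot \Delta + kK \Delta^3) \cdot (\log n)^{O(\log_\Delta n)}$.

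For the inductive step, correctness of the precision-sampling block in \crefrange{alg:main:line:pslstart}{alg:main:line:returnfar} follows from \cref{lem:split} together with a standard level-set argument. With high probability the pieces produced by \alg{Split} satisfy $\BP_k(X_i) \leq 4K$ for every $i$, and in the close case they additionally satisfy $\sum_i \ED(X_i, Y_i) = \ED(X, Y) \leq k$. In the close case, for each doubling level $d$ the count $T_d := |\{i : \ED(X_i, Y_i) > d\}|$ obeys $T_d \leq k/d$, so the expected number of sampled subproblems on which \alg{AlgBoosted} legitimately returns \Far{} is $p_d T_d = O((k \log K / K) \log(1/\delta))$, comfortably below the $12 \log(1/\delta)$ threshold once $K/k$ exceeds the polylogarithmic margin from assumption (iv); Chernoff together with a union bound over the $O(\log K)$ levels then controls the probability of a spurious \Far{} return. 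In the far case, $\sum_i \ED(X_i, Y_i) \geq \ED(X, Y) > K$ and dyadic bucketing produce some $d \in \{1, 2, 4, \ldots, K\}$ with $T_d \cdot d \geq K / O(\log K)$; at that level the expected sampled-\Far{} count is $\Omega(\log(1/\delta))$ and Chernoff triggers the \Far{} return. The $0.01/n^2$ failure probability of \alg{AlgBoosted} is small enough for a union bound over all sampled subproblems to absorb its mistakes.

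The running-time accounting hinges on the doubling-search structure of \alg{AlgBoosted}. By the inductive hypothesis applied to the nested call \alg{AlgMain}$(X_i, Y_i, \tilde k, K', p')$, whenever $\tilde k \geq \ED(X_i, Y_i)$ the call finishes within its allotted budget and, after the $\Theta(\log\log(k'/\delta))$-fold boosting of \cref{alg:boost:line:boost}, returns \Close{} except with probability $\delta/\log(k')$. Hence \alg{AlgBoosted} exits at some $\tilde k^* = O(\min(\ED(X_i, Y_i) + 1, k'))$, and a geometric sum over $\tilde k$ bounds its total cost by $\widetilde O((n_i/K' + \min(\ED(X_i, Y_i), k') \cdot K') \Delta^3)$ up to $\polylog(n)$ factors; invocations that overrun are silently interrupted and contribute only the budget to the total. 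Summing over the sampled subproblems at rate $p_d = \Theta(d \log K / K)$, and using $\sum_i \ED(X_i, Y_i) \leq E$ in the close case together with early termination in the far case, yields the desired $(n/K \cdot \Delta + \min(E, k) \cdot K \Delta^3) \cdot (\log n)^{O(\log_\Delta n)}$, where the exponent absorbs the $(\log n)^{14 \log_\Delta p}$ budget inflation compounded over the recursion.

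I expect the main obstacle to lie in orchestrating the correctness and running-time guarantees across the mutual recursion. Since \alg{AlgMain}'s time bound is only meaningful in the close case, the interruption in \cref{alg:boost:recurse} must silently abort stalled far-case calls without corrupting the precision-sampling estimator (an interrupted call is treated as \Far{}, which is the correct answer in the far regime and remains a tolerable fraction of the tally in the close regime). Simultaneously, the boosted majority vote must drive the per-call failure probability down to $0.01/n^2$ so that a union bound over all sampled subproblems survives, while the $\polylog$ overhead must not blow past $(\log n)^{O(\log_\Delta n)}$ in aggregate. A second delicate point is confirming that the precondition $\BP_{p'}(X_i) \leq (K'/k') p'$ of the recursive call is inherited from the $\BP_k(X_i) \leq 4K$ guarantee of \cref{lem:split} via \cref{lem:equivalence-blocks-bp}; this forces the constants in \crefrange{alg:main:line:sample}{alg:main:line:recursion} to be tuned tightly enough to close the invariant under recursion, and is where I expect the bulk of the constant-chasing to live.
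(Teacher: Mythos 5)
Your proposal follows essentially the same route as the paper: mutual induction on \alg{AlgMain} and \alg{AlgBoosted} with $p$ decreasing geometrically, the same three base cases resolved by \alg{AlgSmallBP} and the Equality Test, a dyadic level-set analysis of the precision-sampling loop, and the doubling-search-with-interruption accounting that makes \alg{AlgBoosted}'s cost scale with $\min(\ED(X_i,Y_i),k')$. One detail to correct in the close case: the subproblems on which \alg{AlgBoosted}$(X_i,Y_i,k',d,\ldots)$ may legitimately return \Far{} are those with $\ED(X_i,Y_i) > k' = 64 d (k/K)\log K$ (anything in the gap $(k',d]$ has no guarantee), not merely those with $\ED(X_i,Y_i) > d$; bounding this larger set by $K/(64 d\log K)$ via $\sum_i \ED(X_i,Y_i)\le k$ still yields an expected sampled-\Far{} count of about $1.7\log(1/\delta)$, safely below the $12\log(1/\delta)$ threshold, so the argument goes through as in the paper.
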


\begin{proof}
    We prove the lemma using mutual induction over both \alg{AlgBoosted} and \alg{AlgMain}.
    Formally, we have the following inductive hypothesis.

    \paragraph*{Inductive Hypothesis}
    Let $X, Y$ be strings of total length $n$, and let $k, K, p,\Delta\in \Int_+$ be parameters
    such that: \begin{enumerate*}[label=(\roman*)]
    \item $p \leq n$, \item $\BP_p(X) \leq \frac{K}{k}p$, \item $(256 \log K)^2 \leq \Delta$, and
    \item and $K/k \geq (\log n)^{\beta \log_\Delta(n) + 14 \log_\Delta(p)}$, where $\beta > 0$ is the same constant as in \cref{lem:main-stoc}.\end{enumerate*}
    Then, the following holds:
    \begin{enumerate}[label=(\roman*)]
        \item\label{it:boosted} Let $E := \ED(X, Y)$. $\alg{AlgBoosted}(X, Y, k, K, p, \delta)$ solves the $\GapED(k, K)$ problem with probability at least $1 - \delta$ and runs in time
        \[
            O\left(\left(\frac{n}{K}\Delta +\min(E, k) K \cdot \Delta^3 \right) \cdot (\log n)^{\alpha \cdot \log_\Delta(n)} \cdot (\log n)^{14 \log_\Delta(p)} \cdot \log (\log(k)/\delta) \cdot \log k\right).
        \]
        \item\label{it:main} With probability at least 0.9, $\alg{AlgMain}(X, Y, k, K, p)$ solves the $\GapED(k, K)$ problem.
        Moreover, if $\ED(X, Y) \leq k$, then, with probability at least 0.9, it runs in time
        \[
            O\left(\left(\frac{n}{K}\Delta +kK \cdot \Delta^3 \right) \cdot (\log n)^{\alpha \cdot \log_\Delta(n)} \cdot (\log n)^{14 \log_\Delta(p)}\right).
        \]
    \end{enumerate}
    For both running times above, $\alpha > 0$ is the constant in the running time of \cref{lem:main-stoc}.

    \paragraph*{Base Case}
    Let $X, Y$ be strings of total length $n$ and let $k, K, p$ be integers satisfying the conditions of the inductive hypothesis.
    For the base case, we prove statement \ref{it:main} when $n < 4$, $K > n$, $k = 0$ or $p \leq k\Delta$.
    If $n < 4$, then we can solve the problem directly trivially in time $O(1)$ (we omit this trivial case from the pseudocode for brevity).
    If $K > n$, then we can return \Close{} since $\ED(X, Y) \leq n$.
    If $k = 0$, then the task is to distinguish whether $X = Y$ or $\ED(X, Y) > K$.
    Since $\HD(X,Y)\ge \ED(X,Y)$, the Equality Test (see~\cref{lem:equality-test}) correctly solves the problem in \cref{alg:main:line:base-equality} in time $O(n/K)$; this is correct with probability $0.99$.
    Otherwise, since $p \leq k\Delta$, the subproblem is solved directly in \cref{alg:main:line:base-case-call}.
    By assumption, we know that $\BP_{p}(X) \leq K/k \cdot p \leq K\Delta$ and that
    $K/k \geq (\log n)^{\beta \log_\Delta(n) + 14 \log_\Delta(p)}$.
    This means that the call to \alg{AlgSmallBP} is valid.
    Hence, by \cref{lem:main-stoc}, the base case is solved with probability 0.9 in overall time
    \[
        O\left((n/K \Delta + kK \Delta^3) (\log n)^{\alpha \cdot \log_\Delta(n)}\right),
    \]
    which satisfies \ref{it:main}.

    \paragraph*{Inductive Step for \alg{AlgBoosted}}
    Let $X, Y$ be strings of total length $n$ and let $k, K, p$ be integers satisfying the conditions of the inductive hypothesis.
    Suppose that \ref{it:main} holds for any parameters $n' \leq n, k' \leq k, K' \leq K, p' \leq p$ (which satisfy the conditions of the inductive hypothesis).
    We will prove that \ref{it:boosted} holds for $X, Y, k, K, p$.

    Consider the execution of $\alg{AlgBoosted}(X, Y, k, K, p, \delta)$.
    The calls it makes to $\alg{AlgMain}$ in \cref{alg:boost:recurse} satisfy \ref{it:main} as we just argued.
    We start by showing correctness.
    First, suppose that $E := \ED(X, Y) \leq k$.
    Consider any iteration of the loop in \cref{alg:boost:line:doubling} with parameter $\tilde k \leq E$.
    Since $E \leq k$, returning \Close{} is correct, so if the majority of outcomes is \Close{} then we correctly return \Close{}.
    If none of these iterations return \Close{}, consider the iteration when $\tilde{k}/2 < E \leq \tilde{k}$.
    At this iteration, by \ref{it:main}, each call to \alg{AlgMain} correctly returns \Close{} and runs in the time budget with probability at least 0.8.
    Since we take the majority outcome out of $\Theta(\log(\log(k)/\delta))$ repetitions, by Chernoff's bound, we conclude that for this iteration we return \Close{} in \cref{alg:boost:line:returnclose} with probability at least $1 - \delta/\log k \geq 1 - \delta$.

    Now, consider the case when $E > K$.
    Fix some iteration of the loop in \cref{alg:boost:line:doubling}.
    By \ref{it:main}, for each call to \alg{AlgMain} in \cref{alg:boost:recurse}, we correctly store \Far{} with probability at least 0.9 (regardless of whether we interrupt the algorithm or not).
    Thus, by Chernoff's bound, we do not return \Close{} in \cref{alg:boost:line:returnclose} with probability at least $1 - \delta/\log k$.
    By a union bound, we conclude that we do not return \Close{} in any of $O(\log k)$ iterations of the loop in \cref{alg:boost:line:doubling} with probability at least $1 - \delta$.

    Now, we argue about the running time.
    As shown above, if $E \leq k$, then with probability at least $1 - \delta$ we return \Close{} in the iteration when $\tilde{k}/2 < E \leq \tilde{k}$.
    If $E > K$, then we execute all the $\log k$ iterations of the loop in \cref{alg:boost:line:doubling}.
    Thus, the overall running time is bounded by
    \[
        O\left(\left(\frac{n}{K}\Delta + \min(E,k)K \cdot \Delta^3\right) \cdot (\log n)^{\alpha \cdot \log_\Delta(n)} \cdot (\log n)^{14 \log_\Delta(p)} \log(\log(k)/\delta) \cdot \log k \right).
    \]

    \paragraph*{Inductive Step for \alg{AlgMain}}
    Consider strings $X, Y$ of total length $n$ and let $k, K, p$ be integers satisfying the conditions of the inductive hypothesis.
    Suppose that \ref{it:boosted} holds for any parameters $n' \leq n, k' \leq k, K' \leq K$ and $p' < p$ (note that here $p'$ is strictly less than $p$).
    We proceed to prove statement \ref{it:main} for $X, Y, k, K, p$.

    The cases of $n < 4$, $K > n$, $k = 0$ or $p \leq k\Delta$ were handled by the base case, so assume that $n \geq 4$, $K \leq n$, $k \geq 1$ and $p > k\Delta$.
    Therefore, the algorithm continues in \crefrange{alg:main:line:split}{alg:main:line:returnfar} and makes recursive calls to \alg{AlgBoosted}.

    First, we analyze the running time.
    For this, we can assume that $E = \ED(X, Y) \leq k$.
    The call to \alg{Split} in \cref{alg:main:line:split} runs in expected time $O(\frac{n}{K}\log n)$.
    By \cref{lem:split}, with probability at least $1 - \frac{3k}{K}\log(n/0.01)$, its output $X_1,Y_1,\dots,X_s,Y_s$ satisfies $\sum_i \ED(X_i, Y_i) = E$, and $\BP_{k}(X_i) \leq 4K$ for every $i \in \intervalcc{1}{s}$ with probability 0.99.
    By assumption, we have that $K/k \geq (\log n)^{\beta \log_\Delta(n) + 14 \log_\Delta(p)}$ where $\beta > 0$ is a sufficiently large constant.
    Thus, by a union bound, can bound the overall success probability by $1 - 3(k/K) \log(n/0.01) - 0.01 \geq 0.98$.
    From now on, we condition on this event.
    Let $E_i := \ED(X_i, Y_i)$ and $n_i := |X_i| + |Y_i|$.

    \begin{claim}\label{claim:1}
        Every call to $\alg{AlgBoosted}(X_i, Y_i, k', K', p', \delta')$ in \cref{alg:main:line:recursion} satisfies:
        \begin{enumerate}[label=(\arabic*)]
            \item\label{it:pprim} $p' \leq p/\Delta^{1/2}$ and $p' \leq n'$ where $n' = |X_i| + |Y_i|$,
            \item\label{it:bp} $\BP_{p'}(X_i) \leq (K'/k') \cdot p'$,
            \item\label{it:kk} $K'/k' \geq (\log n_i)^{\beta \log_\Delta(n_i) + 14\log_\Delta(p')}$,
            \item\label{it:delta} $\Delta > (256 \log K')^2$.
        \end{enumerate}
    \end{claim}
    \begin{claimproof}
        Observe that, in \cref{alg:main:line:recursion}, the algorithm sets $p' = 256 k \log K$ and
        \[\frac{K'}{k'} = \frac{d}{d (k/K) 64 \log K} = \frac{K}{64 k \log K}. \]
        To show \ref{it:bp}, note that since for each $X_i$ it holds that $\BP_{k}(X_i) \leq 4K$, it follows that $\BP_{p'}(X_i) \leq 4K = (K'/k') \cdot p'$.
        Since $p > k\Delta$ and $\Delta > (256 \log K)^2$, we obtain that $p' < p/\Delta^{1/2}$.
        For the corner case when $256k\log K > n'$, note that setting $p' = \min(n', 256k \log K)$ we still have that $p' < p/\Delta^{1/2}$, and \ref{it:bp} remains valid too (we omitted this case from the pseudocode for readability). Therefore, we obtain \ref{it:pprim}.

        To obtain \ref{it:kk}, we use the assumption that $K/k \geq (\log n)^{\beta \log_\Delta(n) + 14 \log_\Delta(p)}$ and that $K \leq n$ (since the case $K > n$ was handled by the base case), yielding
        \[
            K'/k' = K/(64 k \log K) \geq (\log n)^{\beta \log_\Delta(n) + 14 \log_\Delta(p) - 1}/64.
        \]
        We continue bounding this expression.
        By \ref{it:pprim}, we have that $\log_\Delta(p) \geq \log_\Delta(p' \Delta^{1/2}) = \log_\Delta(p') + 1/2$.
        Hence, using that $64 \leq (\log n)^6$ (the case $n < 4$ was handled by the base case) and $n_i = |X_i| + |Y_i| \leq n$, we obtain that $K'/k' \geq (\log n_i)^{\beta \log_\Delta(n_i) + 14\log_\Delta(p')}$.

        Finally, \ref{it:delta} follows since $K' \leq K$ and by assumption $\Delta > (256 \log K)^2$.
    \end{claimproof}

    To bound the expected running time of \crefrange{alg:main:line:pslstart}{alg:main:line:returnfar}, observe that, for a fixed value of $d$, the probability that a subproblem $X_i, Y_i$ is called in \cref{alg:main:line:recursion} is $\frac{108 d}{K}\log K \log(1/\delta)$, where $\delta = 0.01/\log K$.
    Let $E_i := \ED(X_i, Y_i)$ and $n_i := |X_i| + |Y_i|$.
    \cref{claim:1} implies that we can use the inductive hypothesis \ref{it:boosted} to bound the running time for each call to \alg{AlgBoosted}.
    Hence, the expected running time of one iteration of the loop in \cref{alg:main:line:pslstart} can be bounded as
    \[
        O\left(\sum_{i=1}^s \frac{d}{K} \left(\frac{n_i}{d} \Delta + E_i d \Delta^3\right) (\log n_i)^{\alpha \log_\Delta(n_i)} (\log n_i)^{14 \log_\Delta(p')}\log^2 n \log K \log(1/\delta)\right).
    \]
    Here, we bounded the factors $\log(\log(k')/\delta')\log(k') \leq O(\log^2 n)$ in the running time of \alg{AlgBoosted} given by \ref{it:boosted}, since the parameter $\delta'$ in the call to \alg{AlgBoosted} is set to $\delta' = 0.01/n^2$, and $k' \leq K \leq n$.

    By \cref{claim:1} we have $p' \leq p/\Delta^{1/2}$, and therefore $14\log_\Delta(p') \leq 14\log_\Delta(p) - 5$.
    Since $n_i \leq n$, $K \leq n$, and $1/\delta \leq n^2$, we can bound
    \[
        (\log n_i)^{14 \log_\Delta(p')} \log^2 n \log K \log(1/\delta) \leq O((\log n)^{14 \log_\Delta(p) - 1}).
    \]
    By the guarantees of split, we know that $\sum_i n_i = n$ and $\sum_i E_i = E$.
    Thus, combining the above, we can bound the expected running time of one iteration of the loop in \cref{alg:main:line:pslstart} as
    \[
        O\left(\left(\frac{n}{K}\Delta + EK\Delta^3\right) (\log n)^{\alpha \log_{\Delta}(n)} (\log n)^{14 \log_{\Delta}(p) - 1}\right),
    \]
    where we bounded $Ed^2/K \leq EK$ using that $d \leq K$.
    The overall expected running time across the $O(\log K)$ iterations of the for loop in \cref{alg:main:line:pslstart} adds another $O(\log K) \leq O(\log n)$ factor.
    Hence, we obtain expected time
    \[
        O\left(\left(\frac{n}{K}\Delta + EK\Delta^3\right)(\log n)^{\alpha \log_{\Delta}(n)} (\log n)^{14\log_\Delta(p)}\right).
    \]
    Finally, by Markov's inequality, the algorithm does not exceed 20 times this time bound with probability at least 0.95.
    Together with the initial conditioning on the success of \alg{Split}, we obtain total success probability at least 0.9.
    This completes the proof of the running time for \ref{it:main}.

    \medskip

    Now we argue about the correctness of \crefrange{alg:main:line:pslstart}{alg:main:line:returnfar}.
    Due to \cref{claim:1}, we can use the inductive hypothesis on the calls to \alg{AlgBoosted} in \cref{alg:main:line:recursion}.
    Thus, each call is correct with probability at least $1 - 0.01/n^2 \ge 1-0.01/(s \log K)$ and hence, all the calls are correct with probability at least 0.99.
    The correctness of the algorithm follows from the following claim:
    \begin{claim}\label{claim:precision-sampling}
        Consider an execution of \crefrange{alg:main:line:pslstart}{alg:main:line:breaks-end}.
        Suppose all recursive calls are correct.
        Let $E_i := \ED(X_i, Y_i)$ for $i \in \intervalcc{1}{s}$.
        Let $\calE$ be the event that, if $\sum_i E_i > K$, then we return \Far{}
        and, if $\sum_i E_i \leq k$, then we return \Close{}.
        Then, $\Pr(\calE) \geq 0.99$.
    \end{claim}

    Before proving the claim, let us see how to derive correctness from it.
    As argued in the running time analysis, if $\ED(X, Y) \leq k$ then with probability at least 0.99 the call to \alg{Split} succeeds, and we obtain $\sum_i E_i = \ED(X, Y) \leq k$ (see \cref{lem:split}).
    Then, by \cref{claim:precision-sampling} we return \Close{}.
    To bound the error probability, we take a union bound over all the $O(s \log K)$ calls to $\alg{AlgBoosted}$, the call to $\alg{Split}$, and \cref{claim:precision-sampling}.
    All calls to \alg{AlgBoosted} succeed with probability at least $0.99$, the call to \alg{Split} succeeds with probability 0.99, and \cref{claim:precision-sampling} succeeds with probability 0.99.
    Thus, the overall success probability is at least 0.95.
    On the other hand, if $\ED(X, Y) > K$, then by the subadditivity of edit distance, the partition returned by \alg{Split} satisfies $\sum_i E_i \geq \ED(X, Y) > K$ (deterministically).
    Thus, assuming that the calls to \alg{AlgBoosted} are correct, \cref{claim:precision-sampling} guarantees that we return \Far{}.
    Similarly as before, the success probability is at least 0.95.
    It remains to prove the claim.

    \begin{claimproof}[Proof of \cref{claim:precision-sampling}]
        First, consider the case where $\sum_i E_i > K$.
        For $\ell \ge 0$, define \[L_\ell := \set{j \mid 2^\ell \leq \min(E_j, K+1) < 2^{\ell + 1}}.\]
        Thus, note that there are at most $T := \floor{\log(K+1)} + 1$ non-empty sets $L_\ell$.
        The key observation is that since $\sum E_i > K$, there exists a level $\ell^*$ such that $|L_{\ell^*}| \geq K/(2^{\ell^*+1} T)$.
        Indeed, note that otherwise we obtain
        \[
          K \leq \sum_{j=1}^s \min(E_j, K+1)
            \leq \sum_{\ell=0}^{\floor{\log(K+1)}} 2^{\ell+1}|L_\ell|
            < \sum_{\ell = 0}^{\floor{\log(K+1)}} 2^{\ell+1} \frac{K}{2^{\ell + 1} T} = K;
        \]
        a contradiction.

        Fix such a level $\ell^*$ and focus on the iteration of the loop in \cref{alg:main:line:pslstart} when $d = 2^{\ell^*}$.
        For each $i \in L_{\ell^*}$, let $I_i$ be the indicator random variable which equals 1 if $i$ is included in the sample $S$.
        Let $I := \sum_{i \in L_{\ell^*}}I_i$.
        Since we sample $S$ with rate $\frac{108 d}{K}\log K \log(1/\delta)$, it follows that
        \[
            \Ex(I) = |L_{\ell^*}| \cdot \frac{108 d}{K} \log K \log(1/\delta) \geq \frac{K}{2 d T} \cdot \frac{108 d}{K}\log K \log(1/\delta)
            = \frac{54}{T}\log K \log (1/\delta).
        \]
        Since $T = \floor{\log(K+1)}+1 \leq \log(K) + 2 \leq 3\log K$, we obtain that $\Ex(I) \geq 18 \log(1/\delta)$.
        Note that the algorithm correctly returns \Far{} if $I \geq 12\log(1/\delta)$ (see \cref{alg:main:line:pslclosecondition}).
        By a Chernoff bound, we bound the error probability by
        \[
            \Pr(I < 12\log(1/\delta)) = \Pr(I < (1 - 1/3)\Ex(I)) \leq \exp(-18\log(1/\delta)/(2 \cdot 9)) = \delta.
        \]

        Now, consider the case when $\sum_i E_i \leq k$.
        Let $g := K/(64 \cdot k \log K)$ be the gap which the recursive call to \alg{AlgBoosted} distinguishes.
        Define $\tilde L_\ell := \set{j \mid E_j \geq 2^\ell/g}$.
        Note that for all $\ell$, it holds that $|\tilde{L}_\ell| \leq K/(64 \cdot 2^\ell \log K)$.
        Indeed, if there was some $\ell$ with $|\tilde{L}_\ell| > K/(64 \cdot 2^\ell \log K)$ then
        \[
            k \geq \sum_{i=1}^s E_i \geq |\tilde{L}_\ell| \frac{2^\ell}{g} > \frac{K}{64 g \log K} = k;
        \]
        a contradiction.
        Focus on iteration $d = 2^{\ell}$ of the for-loop in \cref{alg:main:line:pslstart}.
        For every $i \in \tilde{L}_\ell$, define the indicator random variable $I_i$ which equals one if $i \in S$.
        Let $I := \sum_{i \in \tilde{L}_\ell} I_i$.
        Since we sample $S$ with rate $\frac{108 d}{K}\log K \log(1/\delta)$, it follows that
        \[
            \Ex(I)
                = |\tilde{L}_{\ell}| \cdot \frac{108d}{K} \log K \log(1/\delta)
                \leq \frac{K}{64 d \log K} \cdot \frac{108 d}{K}\log K \log(1/\delta) \leq 1.7 \log(1/\delta).
        \]
        Note that the set $\tilde L_\ell$ contains all subproblems $E_i$ for which \alg{AlgBoosted} may return \Far{}.
        Thus, the algorithm correctly returns \Close{} if for every iteration $I < 12 \log(1/\delta)$ (see \cref{alg:main:line:pslclosecondition}).
        By a Chernoff bound we can bound this error probability as
        \[
            \Pr(I \ge 12 \log(1/\delta)) \leq \Pr(I \ge 7\Ex(I)) \leq 2^{-7 \Ex(I)} \leq \exp(-\log(1/\delta)) = \delta.
        \]
        The claim follows by taking a union bound over the $\log K$ iterations and recalling our choice of $\delta = 0.01/\log K$.
    \end{claimproof}

    Finally, note that the lemma statement follows by the inductive hypothesis \ref{it:boosted}, that is, by calling \alg{AlgBoosted} with error probability $\delta := 1/3$.
\end{proof}

\subsection{Main Theorem}\label{sec:main:sec:main-thms}
We now put things together to prove \cref{thm:main} and its corollaries.

\thmmain*
\begin{proof}
    We run \cref{lem:main-alg} with parameters $X, Y, k, K, \Delta' := \max(\Delta, (256 \log K)^2)$ and $p := n$.
    Note that $\BP_p(X) = 1 \leq p \cdot K/k$ holds, so the call is valid.
    If $\Delta \geq (256 \log K)^2$, the lemma follows immediately by the guarantees of \cref{lem:main-alg}.
    If $2 \leq \Delta < (256 \log K)^2$, then observe that by setting $\Delta' = (256 \log K)^2$ the gap and the running time of \cref{lem:main-alg} do not increase compared to the lemma statement, which completes the proof.
\end{proof}

\corsubpoly*
\begin{proof}
    If $k^6 < n$, run the algorithm of \cite[Corollary 2]{BringmannCFN22} to solve the problem directly in time $O(n/k)$.
    So we can assume that $n \leq k^6$.
    In particular, $\log n = \Theta(\log k)$.
    We use \cref{thm:main} setting $\Delta = 2^{\sqrt{\log k \log\log k}}$.
    This gives an algorithm for the $(k, K)$-gap problem
    running in time $O\left((n/K + kK) \cdot 2^{c_0 \cdot \sqrt{\log k \log\log k})}\right)$ for some constant $c_0 > 0$,
    where $K/k \geq 2^{c \sqrt{\log k \log \log k}}$ with $c > 0$ sufficiently large.
    Thus, setting $K = k \cdot 2^{\mu \cdot \sqrt{\log k \log \log k}}$ where $\mu > c_0 + c$ yields the result.
\end{proof}

\corpolylog*
\begin{proof}
    If $k^6 < n$, we solve the problem directly in time $O(n/k^{1-\varepsilon})$ using the algorithm of \cite[Corollary 3]{BringmannCFN22}.
    Thus, we can assume from now on that $n \leq k^6$ and therefore $\log n = \Theta(\log k)$.
    We use \cref{thm:main} choosing $\Delta := k^{\delta/3}$ for some $0 < \delta < \varepsilon$, and $K = k(\log k)^{O(1/\varepsilon)}$ with large enough hidden constant.
    This yields an algorithm for the desired gap running in time
    \[(n/k + k^2) \cdot k^{\delta} \cdot (\log k)^{O(1/\delta)} \leq O(n/k^{1-\varepsilon} + k^{2+\varepsilon}).\qedhere\]
\end{proof}

\corpoly*
\begin{proof}
    If $K < 2^{(\log n)^{2/3}}$, then we can solve the $(k, K)$-gap edit distance problem exactly using the Landau Vishkin algorithm in time $O(n + k^2) \subseteq O(n^{1+o(1)}/K + k^2)$.
    Hence, since $K > k^{1+\varepsilon}$ we can assume from now on that $K/k \geq 2^{\Theta((\log n)^{2/3})}$.

    Instantiating \cref{thm:main} with $\Delta = 2^{\sqrt{\log n \log\log n}}$, we obtain an algorithm that can distinguish the gap $\tilde{K}/k \geq 2^{\Theta(\sqrt{\log n \log \log n})}$ in time $O((n/\tilde{K} + k\tilde{K}) \cdot n^{o(1)})$.
    If \smash{$K < \sqrt{n/k}$}, then running this algorithm with \smash{$\tilde{K} := K$} solves the $(k, K)$-gap problem in time $O(n^{1+o(1)}/K)$.
    (Here we used the assumption that $K/k \geq 2^{\Theta((\log n)^{2/3})}$, since this is larger than the minimum gap distinguishable by \cref{thm:main}.)
    Otherwise, note that running this algorithm with $\smash{\tilde{K} := \max(\sqrt{n/k}, k \cdot 2^{\Theta(\sqrt{\log n \log \log n})})}$ correctly solves the $(k, K)$-gap problem, since we do not increase the gap.
    In this case, the running time is $O((\sqrt{nk} + k^2) \cdot n^{o(1)})$.
    Combining the above yields the claimed running time.
\end{proof}

\section{Algorithm for Bounded Block Periodicity}\label{sec:bcfn}

In this section, we give the proof of \cref{lem:main-stoc}.
The algorithm closely follows the approach of Bringmann, Cassis, Fischer and Nakos~\cite{BringmannCFN22}, which builds upon the \emph{tree distance framework} pioneered by Andoni, Krauthgamer and Onak \cite{AndoniKO10}.

\subsection{Tree Distance Framework}

We start by recalling several definitions and tools for the general setup of the \cite{BringmannCFN22} algorithm.
The main ingredient is a way to split the computation of the edit distance into independent subtasks.
This is achieved via the \emph{tree distance}, which was introduced in \cite{AndoniKO10}.
To define the tree distance, we first need an underlying \emph{partition tree}.

\begin{definition}[Partition Tree] \label{def:partition-tree}
    Let $X$ and $Y$ be length-$n$ strings.
    A partition tree $T$ for $X$ and $Y$ is a balanced $\ell$-ary tree with $n$ leaves numbered from $0$ to $n-1$ (from left to right).
    Each node $v$ in $T$ is associated with a multiplicative accuracy $\alpha_v > 1$ and a rate $r_v \geq 0$.

    For each node $v$ in $T$, we define the substring $X_v$ as follows: If the subtree below $v$ spans from the $i$-th to the $j$-th leaf, then we set $X_v = X\intervalco{i}{j}$.
    Similarly, for a shift $s \in \Int$, we set $Y_{v, s} = Y \intervalco{i + s}{j + s}$ (the substring of $Y$ relevant at $v$ for one specific shift $s$).
\end{definition}

With this definition at hand, we can define the shift-restricted tree distance:

\begin{definition}[Shift-Restricted tree distance]\label{def:shift-td}
    Let $T$ be a partition tree for length-$n$ strings $X$ and $Y$, and let $L\ge 0$ be an integer.
    For every node $v$ in $T$ and every shift $s \in \intervalcc{-L}{L}$, we define the $L$-restricted tree distance
    $\TD_{v,s}^L(X, Y)$ as follows:
    \begin{itemize}
        \item If $v$ is a leaf, then $\TD_{v,s}^L(X, Y) = \ED(X_v, Y_{v,s})$.
        \item If $v$ is a node with children $v_0, \dots, v_{\ell-1}$, then
        \begin{equation}\label{eq:td}
            \TD_{v,s}^L(X, Y) = \sum_{i = 0}^{\ell-1} \TTD_{v_i,s}^L(X,Y),
        \end{equation}
        where
        \begin{equation}\label{eq:ttd}
        \TTD_{v_i,s}^L(X,Y) = \min_{s' \in \intervalcc{-L}{L}} \left(\TD^L_{v_i, s'}(X, Y) + 2|s - s'|\right).
        \end{equation}
    \end{itemize}
\end{definition}

Since we restrict to shifts in $\intervalcc{-L}{L}$, we define the substring of $Y$ relevant at a node $v$ as $Y_v := Y\intervalco{i-L}{j+L}$.

The difference between \cref{def:shift-td} and the tree distance definition in \cite{AndoniKO10,BringmannCFN22} is that we restrict the shifts to the set $\intervalcc{-L}{L}$.
The following lemma captures the relationship between the tree distance and edit distance.

\begin{lemma}\label{lem:equiv-td-ed}
    Let $T$ be a partition tree for length-$n$ strings $X$ and $Y$, and let $L\ge 0$ be an integer.
    Suppose that $T$ has maximum degree $\ell$ and height (the maximum distance from the root to a leaf) at most $h$.
    Then, the $\TD^{L}(X,Y)$ can be bounded as follows:
    \begin{itemize}
        \item  $\TD^L(X, Y) \ge \ED(X,Y)$;
        \item  $\TD^L(X, Y) \le (2 (\ell-1) h+1) \ED(X, Y)$ provided that $\ED(X, Y)\le L$.
    \end{itemize}
\end{lemma}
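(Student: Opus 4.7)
The plan is to prove both bounds by a joint structural induction on the height of the node in the partition tree, establishing the corresponding inequalities between $\TD_{v,s}^L(X,Y)$ and $\ED(X_v, Y_{v,s})$ at every node $v$ and every admissible shift $s$. Summing over the root with $s=0$ then yields the claim.

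For the lower bound, I would show that $\TD_{v,s}^L \ge \ED(X_v, Y_{v,s})$ by exhibiting an explicit alignment of $X_v$ to $Y_{v,s}$ whose cost equals the tree distance. The base case is trivial, since at leaves the two quantities coincide by definition. For the inductive step at an internal node, let $s_i^{\star}$ realize the minimum in the definition of $\TTD_{v_i,s}^L$. The induction hypothesis provides, for every child, an alignment of $X_{v_i}$ to $Y_{v_i,s_i^{\star}}$ of cost $\TD_{v_i,s_i^{\star}}^L$. I would splice these child alignments together by inserting $|s - s_i^{\star}|$ symbols at the entry of child $i$ and deleting $|s - s_i^{\star}|$ symbols at its exit (or vice versa, according to the sign of $s - s_i^{\star}$), thereby ``detouring'' from shift $s$ to shift $s_i^{\star}$ and back. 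The total cost of the resulting alignment of $X_v$ to $Y_{v,s}$ equals $\sum_i (\TD_{v_i,s_i^{\star}}^L + 2|s - s_i^{\star}|) = \TD_{v,s}^L$, proving $\ED(X_v, Y_{v,s}) \le \TD_{v,s}^L$.

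For the upper bound, I would prove by induction that $\TD_{v,s}^L \le (2(\ell-1) h_v + 1)\, \ED(X_v, Y_{v,s})$, where $h_v$ is the height of $v$. At a leaf the inequality holds with constant $1$. At an internal node $v$, fix an optimal alignment $A^\star$ of $X_v$ to $Y_{v,s}$ realizing $E := \ED(X_v, Y_{v,s})$ and read off the shifts induced by $A^\star$ at the boundaries between consecutive children, obtaining values $s_0 = 0,\, s_1,\, \ldots,\, s_{\ell-1},\, s_\ell = 0$. The hypothesis $E \le L$ together with $|s_i| \le E$ keeps all shifts in the admissible range $\intervalcc{-L}{L}$. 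For each child $v_i$ I would use the shift $s' := s + s_i$ in the definition of $\TTD_{v_i,s}^L$, which gives $|s - s'| = |s_i|$ and, after at most $|s_{i+1} - s_i|$ extra edits correcting the right endpoint, $\ED(X_{v_i}, Y_{v_i, s'}) \le C_i + |s_{i+1}-s_i|$ where $C_i$ is the cost of $A^\star$ restricted to $X_{v_i}$. Combining this with the induction hypothesis at height $h_v-1$ and summing over $i$, the conclusion reduces to the identities $\sum_i C_i = E$ and $\sum_i |s_{i+1}-s_i| \le E$ (the latter because the total shift variation is bounded by the number of indels in $A^\star$), along with a bound on $\sum_{i=0}^{\ell-1} |s_i|$ controlled by the tree arity and $E$.

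The main obstacle lies in the final bookkeeping needed to attain exactly the claimed constant $2(\ell-1)h + 1$. A naive summation of the two $O(E)$ contributions above loses a further constant factor, so the per-child shift $s'$ must be chosen more carefully -- for instance, centered between $s_i$ and $s_{i+1}$ to exploit the convexity of the endpoint-correction cost, or via a charging argument that assigns every indel of $A^\star$ to a unique node--shift transition, preventing double counting of shift changes both within a child and across child boundaries. A second, more mundane subtlety is to verify that the shifts $s + s_i$ used in the recursion always lie in $\intervalcc{-L}{L}$; this is exactly where the hypothesis $\ED(X,Y) \le L$ is used, since it bounds the magnitude of every shift induced by the optimal alignment anywhere in the tree.
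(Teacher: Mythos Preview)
Your lower-bound argument is fine and matches the paper's: whether you phrase it as an explicit spliced alignment or as subadditivity plus the triangle inequality $\ED(X_v,Y_{v,s})\le \ED(X_v,Y_{v,s'})+2|s-s'|$, the induction goes through without trouble.

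The upper bound, however, has a genuine gap that is not merely bookkeeping. Your inductive hypothesis is $\TD_{v,s}^L \le C_{h_v}\cdot \ED(X_v,Y_{v,s})$ for some constant $C_{h_v}$ depending on the height. At an internal node you end up applying it to $\ED(X_{v_i},Y_{v_i,s+s_i})\le C_i+|s_{i+1}-s_i|$, so the correction $|s_{i+1}-s_i|$ gets multiplied by $C_{h_v-1}$. Since $\sum_i |s_{i+1}-s_i|$ can be as large as $E$, the recursion you obtain is $C_{h}\le 2C_{h-1}+O(\ell)$, which solves to $C_h=\Theta(\ell\cdot 2^h)$ rather than $2(\ell-1)h+1$. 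Centering $s'$ between $s_i$ and $s_{i+1}$ does not help: the fragment of $Y$ that the optimal alignment actually matches to $X_{v_i}$ has length $|X_{v_i}|+(s_{i+1}-s_i)$, so no choice of a single shift $s'$ makes $Y_{v_i,s'}$ equal to that fragment, and the length discrepancy $|s_{i+1}-s_i|$ is unavoidable inside the $\ED$ term.

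The paper resolves this by strengthening the inductive claim: it proves that for \emph{any} fragment $Y'_v$ of $Y$ (not required to have length $|X_v|$) with $\ED(X_v,Y'_v)+|Y_{v,0}\triangle Y'_v|\le L$, one has $\TD_{v,s}^L\le (2(\ell-1)h_v+1)\ED(X_v,Y'_v)+|Y_{v,s}\triangle Y'_v|$, where $|Y_{v,s}\triangle Y'_v|$ denotes the sum of the two endpoint discrepancies. The point is that at the children one takes $Y'_{v_i}$ to be exactly the piece of $Y'_v$ aligned to $X_{v_i}$ by the optimal alignment, so that $\ED(X_{v_i},Y'_{v_i})=C_i$ with no correction; all shift mismatch is pushed into the \emph{additive} term $|Y_{v_i,s}\triangle Y'_{v_i}|$, which is not amplified by the recursion. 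The interior endpoint discrepancies are then shown to total at most $2(\ell-1)E$, giving the linear-in-$h$ constant. This stronger hypothesis also cleanly handles your second worry about keeping shifts in $\intervalcc{-L}{L}$, via the carried invariant $\ED(X_v,Y'_v)+|Y_{v,0}\triangle Y'_v|\le L$; your version only controls this at the root.
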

\begin{proof}
    For the lower bound, we inductively prove that
    \[\ED(X_v,Y_{v,s})\le \TD^L_{v,s}(X,Y) \qquad\text{and}\qquad \ED(X_v,Y_{v,s})\le \TTD^L_{v,s}(X,Y)\]
    hold for every node $v$ and every shift $s\in \intervalcc{-L}{L}$.

    The first claim holds trivially if $v$ is a leaf.
    Otherwise, we have $X_v = \bigodot_{i=0}^{\ell-1} X_{v_i}$ as well as $Y_{v,s} = \bigodot_{i=0}^{\ell-1}Y_{v_i,s}$.
    The subadditivity of edit distance and the inductive assumption imply
    \[\ED(X_v,Y_{v,s})\le \sum_{i=0}^{\ell-1} \ED(X_{v_i},Y_{v_i,s}) \le \sum_{i=0}^{\ell-1} \TTD^L_{v_i,s}(X,Y) = \TD^L_{v,s}(X,Y).\]
    To prove the second claim, observe that the following holds for every $s'\in \intervalcc{-L}{L}$:
    \[\ED(X_v,Y_{v,s}) \le \ED(X_v,Y_{v,s'}) + \ED(Y_{v,s'},Y_{v,s}) \le \TD^L_{v,s'}(X,Y) + 2|s-s'|.\]
    Consequently,
    \[\ED(X_v,Y_{v,s}) \le \min_{s'\in \intervalcc{-L}{L}} \left(\TD^L_{v,s'}(X,Y) + 2|s-s'|\right) = \TTD^L_{v,s}(X,Y).\]

\newcommand{\symdif}{\mathbin{\triangle}}
\newcommand{\sub}{\subseteq}

    For the upper bound, we prove the following claim for every node $v$ and every shift $s\in \intervalcc{-L}{L}$.
    Denote by $h_v$ the height of the subtree rooted at $v$ (that is, the maximum distance from $v$ to a descendant of $v$).
    For two fragments $Y\intervalco{p}{q}$ and $Y\intervalco{p'}{q'}$, denote $|Y\intervalco{p}{q}\symdif Y\intervalco{p'}{q'}| = |p-p'|+|q-q'|$.

    \begin{claim}
        Let $Y'_{v}$ be a fragment of~$Y$ such that $\ED(X_v,Y'_{v}) +|Y_{v,0} \symdif Y'_{v}| \le L$. Then,
         \begin{equation}\label{eq:ttdub}\TTD^L_{v,s}(X,Y) \le (2(\ell-1) h_v+1)  \ED(X_v,Y'_{v}) + |Y_{v,s} \symdif Y'_{v}|.\end{equation}
         Moreover, if $|Y_{v,s} \symdif Y'_{v}|=\big||Y_{v,s}|-|Y'_{v}|\big|$, i.e., one of the fragments is contained in the other, then
         \begin{equation}\label{eq:tdub}\TD^L_{v,s}(X,Y) \le (2(\ell-1) h_v+1)  \ED(X_v,Y'_{v}) + |Y_{v,s} \symdif Y'_{v}|.\end{equation}
    \end{claim}

    Let us first prove \eqref{eq:tdub}. If  $v$ is a leaf, then $h_v=0$ and simply
    \[\TD^L_{v,s}(X,Y) = \ED(X_v,Y_{v,s}) \le \ED(X_v,Y'_{v}) + \ED(Y'_{v},Y_{v,s})\le  \ED(X_v,Y'_{v}) +  |Y_{v,s}\symdif Y'_{v}|.\]

    Next, suppose that $v$ has children $v_i$ for $i\in \intervalco{0}{\ell}$.
    Decompose $X_v = \bigodot_{i=0}^{\ell-1} X\intervalco{x_i}{x_{i+1}}$
    so that $X_{v_i} = X\intervalco{x_i}{x_{i+1}}$
    and $Y_{v,s} = \bigodot_{i=0}^{\ell-1} Y\intervalco{y_i}{y_{i+1}}$
    so that $Y_{v_i,s} = Y\intervalco{y_i}{y_{i+1}}$.
    Moreover, decompose $Y'_{v}=\bigodot_{i=0}^{\ell-1} Y\intervalco{y'_i}{y'_{i+1}}$,
    denoting $Y'_{v_i} = Y\intervalco{y'_i}{y'_{i+1}}$,
    so that $\ED(X_v,Y'_{v})=\sum_{i=0}^{\ell-1} \ED(X_{v_i},Y'_{v_i})$.

    We shall inductively apply \eqref{eq:ttdub} for $X_{v_i,s}$ and $Y'_{v_i}$.
    For this, note that
    \begin{align*} \ED(X_{v_i},Y'_{v_i})+|Y_{v_i,0} \symdif Y'_{v_i}|
        &= \ED(X_{v_i},Y'_{v_i}) + |x_i-y'_i| + |x_{i+1}-y'_{i+1}|\\
        &\le \ED(X_{v_i},Y'_{v_i}) + |(x_i-x_0)-(y'_i-y'_0)| +|x_0-y'_0|\\
        &\qquad\qquad\qquad + |(x_\ell-x_{i+1})-(y'_\ell-y'_{i+1})| + |x_{\ell}-y'_{\ell}|\\
    & \le \ED(X\intervalco{x_i}{x_{i+1}},Y\intervalco{y'_i}{y'_{i+1}}) + \ED(X\intervalco{x_0}{x_{i}},Y\intervalco{y'_0}{y'_{i}})\\
    & \qquad\qquad + |x_0-y'_0| +  \ED(X\intervalco{x_{i+1}}{x_{\ell}},Y\intervalco{y'_{i+1}}{y'_{\ell}}) + |x_{\ell}-y'_{\ell}|\\
    &= \ED(X_v,Y'_{v}) + |Y_{v,0} \symdif Y'_{v}|\\
    &\le L\end{align*}
    Consequently, \eqref{eq:ttdub} yields
    \[\TTD^L_{v_i,s}(X,Y) \le (2(\ell-1) h_{v_i}+1)\ED(X_{v_i},Y'_{v_i})+|y_{i}-y'_i|+|y_{i+1}-y'_{i+1}|.\]
    The assumption $|Y_{v,s} \symdif Y'_{v}|=\big||Y_{v,s}|-|Y'_{v}|\big|$
    translates to $|y_0-y'_0|+|y_\ell-y'_\ell| = |(y_\ell-y_0)-(y'_\ell-y'_0)|$.
    Thus, the following holds for every $i\in \intervalcc{0}{\ell}$:
    \begin{align*}2|y_{i}-y'_i|
        &=|(y_i-y_0)-(y'_i-y'_0)+(y_0-y'_0)|+ |(y_i-y_\ell)-(y'_i-y'_\ell)+(y_\ell-y'_\ell)|\\
        &\le |(y_i-y_0)-(y'_i-y'_0)| + |y_0-y'_0| + |(y_i-y_\ell)-(y'_i-y'_\ell)|+|y_\ell-y'_\ell|\\
        &= |(y_i-y_0)-(y'_i-y'_0)| + |(y_\ell-y_i)-(y'_\ell-y'_i)|+|(y_\ell-y_0)-(y'_\ell-y'_0)|\\
        &= |(x_i-x_0)-(y'_i-y'_0)| + |(x_\ell-x_i)-(y'_\ell-y'_i)|+|(x_\ell-x_0)-(y'_\ell-y'_0)|\\
    &\le \ED(X\intervalco{x_0}{x_i},Y\intervalco{y'_0}{y'_i}) + \ED(X\intervalco{x_{i}}{x_{\ell}},Y\intervalco{y'_{i}}{y'_{\ell}})+ \ED(X\intervalco{x_{0}}{x_{\ell}},Y\intervalco{y'_{0}}{y'_{\ell}})\\
    &= 2\ED(X_v,Y'_{v}).\end{align*}
    Summing up over $i\in \intervaloo{0}{\ell}$, due to $h_{v_i}\le h_v-1$, we obtain
    \begin{align*}\TD^L_{v,s}(X,Y) &= \sum_{i=0}^{\ell-1}\TTD^L_{v_i,s}(X,Y)\\
        & \le \sum_{i=0}^{\ell-1}\left( (2(\ell-1) h_{v_i}+1)\ED(X_{v_i},Y'_{v_i})+|y_{i}-y'_i|+|y_{i+1}-y'_{i+1}|\right)\\
        &\le (2(\ell-1)(h_v-1) + 1)\ED(X_v,Y'_{v})+2(\ell-1) \ED(X_v,Y'_{v})+ |y_0-y'_0|+|y_\ell-y'_\ell| \\
        &= (2(\ell-1) h_v + 1)\ED(X_v,Y'_{v}) + |Y_{v,s}\symdif Y'_{v}|.\end{align*}

    It remains to argue that \eqref{eq:ttdub} follows from \eqref{eq:tdub}.
    Due to $\TTD_{v,s}^L(X,Y) \le \TD_{v,s}^L(X,Y)$, this is immediate if $|(y_\ell-y_0)-(y'_\ell-y'_0)|=\big||Y_{v,s}|-|Y'_{v}|\big|=|Y_{v,s} \symdif Y'_{v}|=|y_0-y'_0|+|y_\ell-y'_\ell|$.
    Thus, we henceforth assume $\big||Y_{v,s}|-|Y'_{v}|\big|=\big||y_0-y'_0|-|y_\ell-y'_\ell|\big|$.
    By symmetry, we may also assume without loss of generality that $|y_0-y'_0| \le |y_\ell-y'_\ell|$,
    which implies $\big||Y_{v,s}|-|Y'_{v}|\big| = |y_{\ell}-y'_\ell|-|y_0-y'_0|$.

    In this case, we set $s' := y'_0-x_0$.
    Note that $|s'| = |x_0-y'_0| \le |Y_{v,0} \symdif Y'_{v}| \le L$,
    and thus $s'\in \intervalcc{-L}{L}$.
    Moreover,
    \begin{align*}
        |Y_{v,s'} \symdif Y'_{v}| &= |(x_0+s')-y'_0|+|(x_\ell+s')-y'_\ell|\\
         &= |(x_\ell+s'-x_0-s')-(y'_\ell-y'_0)| \\
         &= \big||Y_{v,s'}|-|Y'_{v}|\big| \\
         &= \big||Y_{v,s}|-|Y'_{v}|\big| \\
         &= |y_{\ell}-y'_\ell|-|y_0-y'_0| \\
         &= |y_0-y'_0| + |y_{\ell}-y'_\ell| - 2|y_0-y'_0| \\
         &= |Y_{v,s} \symdif Y'_{v}|-2|s-s'|.
    \end{align*}
    In particular, $|Y_{v,s'} \symdif Y'_{v}|=\big||Y_{v,s'}|-|Y'_{v}|\big|$ lets us apply \eqref{eq:tdub} for $s'$,
    and thus
    \begin{align*}\TTD^L_{v,s}(X,Y)
         &\le \TD^L_{v,s'}(X,Y) + 2|s-s'| \\
         &\le (2(\ell-1)h_v+1)\ED(X_v,Y'_{v})+|Y_{v,s'}\symdif Y'_{v}| + 2|s-s'|\\
         &=  (2(\ell-1)h_v+1)\ED(X_v,Y'_{v})+|Y_{v,s}\symdif Y'_{v}|\end{align*}
    holds as claimed.
\end{proof}

For our purpose, we will be using the tree distance with shifts restricted in $\intervalcc{-k}{k}$, i.e. setting $L := k$ in \cref{def:shift-td}.

We are now ready to formally define the computational problem we are aiming to solve.
Intuitively, we aim to compute the tree distance as defined in \cref{def:shift-td}.
However, since we are aiming for sublinear time, we can only afford to approximate it.
Recall that each node $v$ in a partition tree $T$ has an associated multiplicative accuracy $\alpha_v > 0$ and a rate $r_v \geq 0$.
The task for each node in $v$ is the following.

\begin{definition}[Tree Distance Problem]\label{def:td-problem}
    Let $\mu > 0$ be a parameter to be set.
    For every node $v$ in the partition tree $T$, compute numbers $\eta_{v,-k},\dots,\eta_{v,k}$ such that
    \begin{equation}
        \frac{1}{\alpha_v}\ED(X_v, Y_{v,s}) - \frac{1}{r_v} \leq \eta_{v,s} \leq \alpha_v \TD_{v,s}^k(X, Y) + \frac{\mu}{r_v}. \label{eqn:td-problem}
    \end{equation}
\end{definition}

\subsection{BCFN Tools}
To present the algorithm that solves the Tree Distance Problem (\cref{def:td-problem}), we start by introducing the necessary tools.

Suppose we are at some node $v$ in the partition tree, and we have already computed the values $\eta_{w,\cdot}$ for all its children $w$ as per \cref{def:td-problem}.
We want to use these approximations to compute the values $\eta_{v,\cdot}$ at $v$ using the recursive definition of the tree distance given by \cref{def:shift-td}.
If we do this naively, the additive error across the children adds up, which is prohibitively high.
In order to control it, we use the following tool known as the \emph{precision sampling lemma} as introduced by Andoni, Krauthgamer and Onak~\cite{AndoniKO10}.

\begin{lemma}[Precision Sampling Lemma~{{{\cite{Andoni17}}}}]\label{lem:precision-sampling}
    Fix parameters $\delta, \varepsilon > 0$. Let $\alpha \geq 1$ and $\beta \geq 0$.
    There is a distribution $\calD = \calD(\varepsilon, \delta)$ supported over $(0,1]$ from which samples can be drawn in expected time $O(1)$, that satisfies the following:
    \begin{description}
        \item[Accuracy] Let~\makebox{$a_1, \dots, a_\ell \geq 0$} be reals, and independently sample~$u_1, \dots, u_\ell \sim \calD$.
        There is an $\Order(\ell \cdot \varepsilon^{-2} \log(\delta^{-1}))$-time algorithm \alg{Recover} satisfying for all $\widetilde a_1, \dots, \widetilde a_\ell$, with success probability at least $1 - \delta$:
        \begin{itemize}
            \item If $\widetilde a_i \geq \frac1\alpha \cdot a_i - \beta \cdot u_i$ for all~$i$,
            then $\alg{Recover}(\widetilde a_1, \dots, \widetilde a_\ell, u_1, \dots, u_\ell) \geq \frac{1}{(1+\varepsilon)\alpha} \cdot \sum_i a_i - \beta$.
            \item If $\widetilde a_i \leq \alpha \cdot a_i + \beta \cdot u_i$ for all~$i$,
            then $\alg{Recover}(\widetilde a_1, \dots, \widetilde a_\ell, u_1, \dots, u_\ell) \leq (1+\varepsilon) \alpha \sum_i a_i + \beta$.
        \end{itemize}
        \item[Efficiency] Sample $u \sim {\cal D}$.
        Then, for any $N \geq 1$ there is an event $\calE = \calE(u)$ happening with probability at least $1 - 1/N$, such that
        $\Ex_{u \sim \cal D}(\,1/u \mid \calE\,) \leq \Order(\varepsilon^{-2} \polylog(N, \delta^{-1}, \varepsilon^{-1}))$.
    \end{description}
\end{lemma}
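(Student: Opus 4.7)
The plan is to instantiate the precision sampling scheme of~\cite{AndoniKO10}, in the streamlined form given in~\cite{Andoni17}. First I would take $\calD$ to be (essentially) the uniform distribution on $(0,1]$, refined with a logarithmic reparameterization so as to place roughly constant mass on each dyadic scale $u \in [2^{-j}, 2^{-j+1}]$ for $j \le \log M$, where $M = \Theta(\polylog(N,\delta^{-1},\varepsilon^{-1}))$; a sample can be drawn in $O(1)$ time by inversion. For the \emph{Efficiency} guarantee, I would set $\calE := \{u \ge 1/M\}$, so that $\Pr[\calE] \ge 1 - 1/N$ whenever $M \ge N$, and then $\Ex[1/u \mid \calE] = O(\log M) = O(\varepsilon^{-2} \polylog(N,\delta^{-1},\varepsilon^{-1}))$ follows by a direct integration against the density (the $\varepsilon^{-2}$ factor is the budget for the concentration step below).

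For \emph{Accuracy}, I would implement \alg{Recover} as a dyadic threshold-count estimator: for each level $j$ in the relevant range, compute $\widehat C_j := |\{i : \widetilde a_i \ge 2^j u_i\}|$, and output $\widehat S := \sum_j 2^j (\widehat C_j - \widehat C_{j+1})$ with an appropriate normalization. The core observation is that, for each $i$ and level $j$, the probability over $u_i$ of the event $\widetilde a_i \ge 2^j u_i$ equals $\min(1, \widetilde a_i / 2^j)$, so $\Ex[\widehat S]$ telescopes to $\sum_i \widetilde a_i$; substituting the one-sided bounds $\widetilde a_i \in [a_i/\alpha - \beta u_i,\; \alpha a_i + \beta u_i]$ then sandwiches $\sum_i a_i$ up to the required $\alpha$-multiplicative and $\beta$-additive slacks. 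Concentration around the expectation to within a $(1+\varepsilon)$ factor, with failure probability $\delta$, is obtained by running $\Theta(\varepsilon^{-2}\log\delta^{-1})$ independent copies and taking a median-of-means; this yields the claimed $O(\ell \cdot \varepsilon^{-2} \log(\delta^{-1}))$ running time.

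The hard part will be arguing that a single \alg{Recover} output satisfies both one-sided guarantees simultaneously and that the per-sample additive noise $\beta u_i$ collapses into one additive $\beta$ at the end rather than $\beta \ell$. The trick is that the threshold-count estimator is monotone in $\widetilde a_i$, so the one-sided input bounds propagate to one-sided output bounds with no extra loss; and the contribution of the $\beta u_i$ errors at level $j$ matters only when $2^j \le \beta$, so the cumulative contribution across these low levels is bounded by $O(\beta)$. The specific shape of $\calD$---heavy enough at small $u$ to cover all relevant dyadic scales, but concentrated enough that $\Ex[1/u \mid \calE]$ stays polylogarithmic---is precisely what makes this trade-off close with the stated parameters.
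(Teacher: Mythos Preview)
The paper does not prove this lemma at all; it is quoted as a black-box tool from~\cite{Andoni17} (originating in~\cite{AndoniKO10}), so there is no in-paper proof to compare against. Your proposal is therefore an attempt to reconstruct a proof of a cited result.

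Your sketch is in the right spirit but has a genuine gap in the concentration step. You write that concentration is obtained ``by running $\Theta(\varepsilon^{-2}\log\delta^{-1})$ independent copies and taking a median-of-means.'' This is not available here: the precisions $u_1,\dots,u_\ell$ are sampled once and handed to \alg{Recover} as input, and the approximations $\widetilde a_i$ are produced by an external process that may depend arbitrarily on $u_i$ (subject only to the one-sided bounds). \alg{Recover} cannot resample the $u_i$'s, and there is no fresh randomness to run independent copies over. Relatedly, your probability calculation $\Pr[\widetilde a_i \ge 2^j u_i] = \min(1,\widetilde a_i/2^j)$ treats $\widetilde a_i$ as fixed while taking the probability over $u_i$, but in this model $\widetilde a_i$ is a function of $u_i$, so that identity is not valid as written.

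In the actual constructions, the needed repetition is baked into the distribution $\calD$ itself (so that a single draw of $u_i$ already carries enough internal randomness for concentration), and the estimator is designed so that the adversarial dependence of $\widetilde a_i$ on $u_i$ only moves the output in the ``allowed'' direction. Your monotonicity observation is the right instinct for the latter point, but the former point---where the $\varepsilon^{-2}\log(\delta^{-1})$ factor actually lives---needs to be relocated from a post-hoc median-of-means into the definition of $\calD$ and the single-shot estimator.
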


To efficiently evaluate the tree distance at some node $v$ from the values of its children, we need the following lemma.

\begin{lemma}[{\cite[Lemma 12]{BringmannCFN22}}]\label{lem:range-minimum}
    There is an $\Order(k)$-time algorithm for the following problem: Given integers $A_{-k}, \dots, A_k$, compute for all $s \in \intervalcc{-k}{k}$:
    \begin{equation*}
        B_s = \min_{s' \in \intervalcc{-k}{k}} A_{s'} + 2 \cdot |s - s'|.
    \end{equation*}
\end{lemma}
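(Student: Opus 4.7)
}

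The plan is to observe that the absolute value $|s-s'|$ splits naturally into two one-sided cases, and that each one-sided case can be handled by a single monotone sweep over $s$. Concretely, define
\[
B_s^{+} = \min_{s' \in \intervalcc{-k}{s}} \bigl(A_{s'} + 2(s-s')\bigr), \qquad B_s^{-} = \min_{s' \in \intervalcc{s}{k}} \bigl(A_{s'} + 2(s'-s)\bigr).
\]
Since $|s-s'| = \max(s-s',s'-s)$, we have $B_s = \min(B_s^{+}, B_s^{-})$. Therefore it suffices to compute all $B_s^{+}$ and $B_s^{-}$ in $O(k)$ time each.

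For the $B^{+}$ values, I would compute them by a left-to-right sweep using the recurrence
\[
B_s^{+} = \min\bigl(A_s,\; B_{s-1}^{+} + 2\bigr),
\]
with base case $B_{-k}^{+} = A_{-k}$. Correctness is immediate: any $s' < s$ contributing to $B_s^{+}$ also contributes $A_{s'} + 2(s-1-s')$ to $B_{s-1}^{+}$, so incrementing $s$ by one simply adds $2$ to any such candidate; the only new candidate at step $s$ is $s'=s$ itself, contributing $A_s$. Symmetrically, $B^{-}$ is computed by a right-to-left sweep with $B_s^{-} = \min(A_s, B_{s+1}^{-} + 2)$ and base $B_k^{-} = A_k$. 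Each sweep performs $O(k)$ constant-time updates, giving total time $O(k)$, and finally $B_s = \min(B_s^{+}, B_s^{-})$ is computed in $O(k)$ time by iterating over $s$.

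There is no real obstacle here; the only point to verify carefully is the monotone-sweep recurrence, which is a routine induction on $s$ (respectively on $-s$). This decomposition of a $V$-shaped kernel into two monotone envelopes is standard and, once established, immediately yields the claimed linear running time within the word RAM model assumed in \cref{sec:preliminaries}.
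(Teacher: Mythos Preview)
Your proof is correct: splitting the $V$-shaped kernel $2|s-s'|$ into its two monotone halves and handling each by a linear sweep with the recurrence $B_s^{+}=\min(A_s,B_{s-1}^{+}+2)$ (and symmetrically for $B^{-}$) is exactly the intended argument. The paper itself does not prove this lemma but simply imports it from~\cite[Lemma~12]{BringmannCFN22}; your two-pass sweep is the standard proof and matches what is done there.
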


In order to obtain sublinear time, the key idea used in \cite{BringmannCFN22} is to prune the computation of the tree distance once we reach nodes where we can easily infer the desired $\eta$ values.
To this end, they designed the following property testers.

We say that a node $v$ in the partition tree is \emph{matched} if there is a shift $s^* \in \intervalcc{-k}{k}$ such that $X_v = Y_{v,s^*}$.

\begin{lemma}[{Matching Test \cite[Lemma 18]{BringmannCFN22}}] \label{lem:alignment-test}
    Let $X, Y$ be strings such that $|Y| = |X| + 2k$, and let $r > 0$ be a sampling rate.
    There is an algorithm which returns one of the following two outputs:
    \begin{itemize}
        \item \Close[s^*], where $s^* \in \intervalcc{-k}{k}$ satisfies $\HD(X, Y \intervalco{k + s^*}{|X| + k + s^*}) \leq 1/r$.
        \item \Far, in which case there is no $s^* \in \intervalcc{-k}{k}$ with $X = Y \intervalco{k + s^*}{|X| + k + s^*}$.
    \end{itemize}
    The algorithm runs time $\Order(r |X| \log(\delta^{-1}) + k \log |X|)$ and is correct with probability $1 - \delta$.
\end{lemma}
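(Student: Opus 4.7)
The plan is to reduce the problem to a constant number of calls to the Equality Test of \cref{lem:equality-test}. The algorithm proceeds in two phases: a \emph{candidate-discovery} phase that uses exact pattern matching to shrink the $2k+1$ possible shifts to $\Order(1)$ candidates in time $\Order(k\log|X|)$, and a \emph{verification} phase that applies the Equality Test to each candidate in total time $\Order(r|X|\log(\delta^{-1}))$.

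For candidate discovery, take the prefix $P := X\intervalco{0}{\min(|X|, 2k)}$. If $|X| < 2k$, brute-forcing all $2k+1$ shifts takes $\Order(k|X|) = \Order(k^2)$ time, which fits the stated budget in this regime. Otherwise $|P| = 2k$, and I compute $\per(P)$ in $\Order(k)$ time via the Knuth--Morris--Pratt failure function~\cite{KnuthMP77}. If $\per(P) > k = |P|/2$, then the Fine--Wilf theorem guarantees that consecutive occurrences of $P$ in any text lie at least $k$ positions apart, so $P$ appears $\Order(1)$ times in $Y\intervalco{0}{|P|+2k}$. Running KMP locates all these occurrences in $\Order(k)$ time, and each yields a candidate shift $s \in \intervalcc{-k}{k}$; any exact alignment $X = Y\intervalco{k+s^*}{|X|+k+s^*}$ must correspond to one of them. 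If $\per(P) \le k$, I extend the anchor along the periodic structure of $X$ by probing positions of $X$ in a doubling fashion to find the first index where $X$ deviates from the period; this either produces an aperiodic anchor of length $\Order(k)$ (handled as above) or certifies that all of $X$ is $k$-periodic, in which case the valid candidate shifts are forced by aligning periodic structures and can be enumerated in $\Order(k\log|X|)$ time via fingerprint comparisons.

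For verification, I invoke \cref{lem:equality-test} on $X$ and $Y\intervalco{k+s}{|X|+k+s}$ for each of the $\Order(1)$ candidate shifts $s$, using rate $r$ and failure probability a small constant fraction of $\delta$. If any test returns \Close, I return $\Close[s]$ for the corresponding shift; otherwise I return \Far. A union bound over candidates bounds the total failure probability by $\delta$, and the total cost is dominated by $\Order(r|X|\log(\delta^{-1})) + \Order(k\log|X|)$, matching the statement. The main obstacle I foresee is handling the periodic case cleanly: when $P$ has small period, it can have up to $\Order(k)$ occurrences in the window, so I cannot afford to verify each individually. Extending the anchor past the first aperiodicity reduces the candidate count to $\Order(1)$, but locating that first aperiodicity within the $\Order(k\log|X|)$ time budget requires a careful doubling/binary-search strategy combined with a separate periodic-alignment argument for the subcase where $X$ is entirely $k$-periodic.
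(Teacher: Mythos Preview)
The paper does not give its own proof of this lemma; it is quoted verbatim from \cite[Lemma 18]{BringmannCFN22}. So there is nothing in the present paper to compare your attempt against, and your write-up has to stand on its own.

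Your two-phase plan (aperiodic anchor $\Rightarrow$ $\Order(1)$ candidates $\Rightarrow$ Equality Test) is the right skeleton, and the aperiodic case is handled correctly. However, two points do not go through as written.

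First, a small one: when $|X|<2k$ you claim that brute-forcing all $2k+1$ shifts in time $\Order(k|X|)=\Order(k^2)$ ``fits the stated budget.'' It does not: the budget is $\Order(r|X|\log(\delta^{-1})+k\log|X|)$, and $k^2$ can be arbitrarily larger than $k\log|X|$. This is easy to repair --- just run KMP on $X$ inside $Y$ in time $\Order(|X|+|Y|)=\Order(k)$ and return any exact occurrence (or \Far{} if none) --- but the argument as stated is wrong.

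Second, and more seriously, the periodic case is a genuine gap. You propose to ``probe positions of $X$ in a doubling fashion to find the first index where $X$ deviates from the period.'' But certifying that $X\intervalco{0}{m}$ is $q$-periodic requires reading $\Omega(m)$ characters (the deviation could be anywhere), so doubling the verified prefix costs $\Theta(|X|)$ in total, not $\Order(k\log|X|)$. Probing only $\Order(\log|X|)$ individual positions (or even $\Order(\log|X|)$ length-$k$ blocks at dyadic offsets) stays within budget but does \emph{not} certify that $X$ is $k$-periodic, nor does it locate the first deviation; $X$ could break periodicity at an unprobed position. Your fallback of ``fingerprint comparisons'' does not help either: Karp--Rabin fingerprints require $\Omega(|X|)$ preprocessing, which exceeds the budget. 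Finally, even granting an oracle for the first deviation $j$, you would still need the (true but unargued) fact that $X\intervalco{j-2k+1}{j+1}$ has period $>k$; this follows from Fine--Wilf but should be stated. In short, you correctly flag the periodic case as the main obstacle, but the sketch you give does not resolve it within the claimed $\Order(k\log|X|)$ term.
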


\begin{lemma}[{$p$-Periodicity Test \cite[Lemma 17]{BringmannCFN22}}] \label{lem:periodicity-test}
    Let $X$ be a string, let $p \geq 1$ be an integer parameter and let $r > 0$ be a sampling rate.
    There is an algorithm which returns one of the following two outputs:
    \begin{itemize}
        \item \Close[Q], where $Q$ is a primitive string of length $\leq p$ with $\HD(X, Q^* \intervalco{0}{|X|}) \leq 1/r$.
        \item \Far, in which case $X$ is not $p$-periodic.
    \end{itemize}
    The algorithm runs in time~$\Order(r |X| \log(\delta^{-1}) + p)$ and is correct with probability $1 - \delta$.
\end{lemma}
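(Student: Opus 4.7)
The plan is a two-phase algorithm that first deterministically extracts a candidate primitive period $Q$ of length at most $p$ by inspecting only a short prefix of $X$, and then uses random sampling to verify that $X$ matches $Q^*\intervalco{0}{|X|}$. Phase one will cost $O(p)$ and phase two $O(r|X|\log(1/\delta))$, together matching the stated time bound.

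For phase one, let $Y := X\intervalco{0}{\min(2p,|X|)}$ and compute $q := \per(Y)$ in $O(p)$ time using the KMP failure function~\cite{KnuthMP77}. If $q > p$, then $Y$ (hence $X$) is not $p$-periodic, and I return \Far. Otherwise set $Q := Y\intervalco{0}{q}$, which has length at most $p$ and is primitive by minimality of $q$. For phase two, I sample $\Theta(r|X|\log(1/\delta))$ positions $i \in \intervalco{0}{|X|}$ uniformly at random and test whether $X[i] = Q[i \bmod q]$; if any sample is a mismatch, return \Far, and otherwise return \Close[Q]. A standard Chernoff bound shows that whenever $\HD(X, Q^*\intervalco{0}{|X|}) > 1/r$, at least one mismatch is detected with probability $\geq 1 - \delta$, so \Close[Q] is never output in violation of its Hamming guarantee (up to error $\delta$).

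The main obstacle is the complementary direction: when $X$ is genuinely $p$-periodic, phase one must extract the true shortest period, so that phase two accepts deterministically. Write $\hat q := \per(X) \leq p$ and $\hat Q := X\intervalco{0}{\hat q}$; note $\hat Q$ is primitive because otherwise $X$ would admit a shorter period. Since $\hat q$ is also a period of $Y$, we have $q \leq \hat q$. If $|X| < 2p$ then $Y = X$ and $q = \hat q$ trivially. Otherwise $|Y| = 2p$, and suppose for contradiction that $q < \hat q$. Then $Y$ has periods $q$ and $\hat q$ with $q + \hat q \leq 2p = |Y|$, so by the Fine--Wilf periodicity lemma~\cite{FineW65}, $Y$ has period $d := \gcd(q, \hat q) < \hat q$. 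Consequently $\hat Q = Y\intervalco{0}{\hat q}$ is a proper integer power of $Y\intervalco{0}{d}$, contradicting the primitivity of $\hat Q$. Hence $q = \hat q$ and the recovered $Q$ equals $\hat Q$, so $X = Q^*\intervalco{0}{|X|}$ holds exactly and every sample passes.
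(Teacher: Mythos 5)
Your proposal is correct: the extraction of the candidate period from the length-$\min(2p,|X|)$ prefix via KMP, the Fine--Wilf argument showing that this recovers the true shortest period whenever $X$ is $p$-periodic (so that \Far{} is never wrongly returned), and the sampling-based verification of the Hamming guarantee for \Close{} together establish exactly the stated correctness and the $\Order(r|X|\log(\delta^{-1})+p)$ running time. Note that this paper does not prove the lemma itself but imports it from \cite[Lemma~17]{BringmannCFN22}; your two-phase argument is essentially the standard one used there.
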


\subsection{Algorithm}

Now we are ready to present the algorithm.
The idea is to approximately evaluate the tree distance (\cref{def:shift-td}) over the partition tree.
In order to obtain sublinear running time, we want to avoid recursing over the entire tree.
For that end, we use the insight that whenever we encounter a node $v$ that is matched and $p$-periodic (which we can test using \cref{lem:alignment-test,lem:periodicity-test}), we can stop the recursive computation and approximate the values $\eta_{v,s}$ directly for all shifts $s \in \intervalcc{-k}{k}$, as captured by \cref{lem:fast-shifts-ed}.
The rest of the algorithm combines the recursive results from the children using \cref{lem:precision-sampling} and \cref{lem:range-minimum}.
Consult \cref{alg:stoc} for the full pseudocode.

\begin{algorithm}[ht]
\caption{} \label{alg:stoc}
\begin{algorithmic}[1]
\Input{Strings $X, Y$, a node $v$ in the partition tree $T$, integers $k, K, p \geq 0$ and a rate $r_v$}
\Output{$\eta_{v, s}$ for all shifts $s \in \intervalcc{-k}{k}$}
\medskip

\If{$v$ is a leaf} \label{alg:stoc:line:test-short-condition}
    \State Compute and \Return $\eta_{v,s} = \ED(X_v, Y_{v,s})$ for all $s \in \intervalcc{-k}{k}$ \label{alg:stoc:line:test-short}
\EndIf
\State Run the Matching Test (\cref{lem:alignment-test}) for $X_v, Y_v$ (with $r = 3r_v$ and $\delta = 0.01/n$) \label{alg:stoc:line:alignment-test}
\State Run the $p$-Periodicity Test (\cref{lem:periodicity-test}) for $Y_v$ (with $r = 3r_v$ and $\delta = 0.01/n$) \label{alg:stoc:line:periodicity-test}
\If{the Matching Test returns \Close[s^*]} \label{alg:stoc:line:alignment-test-condition}
    \If{the Periodicity Test returns \Close[Q]} \label{alg:stoc:line:periodicity-test-condition}
        \State Compute and \Return $\eta_{v,s}$ for all $s \in \intervalcc{-k}{k}$ using \cref{lem:fast-shifts-ed} \label{alg:stoc:line:periodic-solve}
    \EndIf
\EndIf

\medskip
\ForEach{$i \in \intervalco{0}{\ell}$} \label{alg:stoc:line:iter-children}
    \State Let $v_i$ be the $i$-th child of $v$ and sample $u_{v_i} \sim \calD(\varepsilon := (2 \log n)^{-1}, \delta := 0.01 /(k n))$ \label{alg:stoc:line:precision}
    \State Recursively compute $\eta_{v_i, s}$ with rate $r_v / u_{v_i}$ for all $s \in \intervalcc{-k}{k}$ \label{alg:stoc:line:recursion}
    \State Compute $\widetilde A_{i, s} = \min_{s' \in \intervalcc{-k}{k}} \eta_{v_i, s'} + 2 \cdot |s - s'|$ using \cref{lem:range-minimum} \label{alg:stoc:line:range-minimum}
\EndForEach
\ForEach{$s \in \intervalcc{-k}{k}$} \label{alg:stoc:line:iter-output}
    \State $\eta_{v,s} = \alg{Recover}(\widetilde A_{0,s}, \dots, \widetilde A_{\ell-1, s}, u_{v_0},\dots,u_{v_{\ell-1}})$ \label{alg:stoc:line:recovery}
\EndForEach
\State\Return $\eta_{v, s}$ for all $s \in \intervalcc{-k}{k}$ \label{alg:stoc:line:return}
\end{algorithmic}
\end{algorithm}

We start by proving \cref{lem:fast-shifts-ed}.
To achieve that, we provide an algorithm that can efficiently approximate the edit distance of strings that are periodic.
More precisely, we leverage the following result whose proof is deferred to \cref{sec:shifts}.

\begin{restatable}{lemma}{lemapproxexperiodic}\label{lem:approxed-periodic-PM}
    Given a positive integer $p$, two strings $P$ and $T$ with period $p$ and lengths $m\le n$, respectively,
    and an integer $\Delta\in \intervalcc{2}{p}$,
    one can compute, for all $i \in \intervalcc{0}{n-m}$, multiplicative $(\log p)^{\Oh(\log_\Delta p)}$-approximations of $\ED(P, T\intervalco{i}{i+m})$
    in time $\Oh(n-m)+p\Delta \cdot (\log p)^{O(\log_\Delta p)}$ correctly with high probability.
\end{restatable}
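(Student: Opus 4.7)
The plan is to exploit the periodicity of $P$ and $T$ in two stages: first to collapse the $n-m+1$ shifts into at most $p$ genuinely distinct subproblems, and then to approximate each such subproblem through a length-$\Oh(p)$ representation via a structural lemma. After that, I would sweep through the shortened instances using the dynamic edit-distance data structure of Kociumaka, Mukherjee, and Saha~\cite{KMS23}.

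\emph{Step 1: Shift reduction.} Let $Q_T := T\intervalco{0}{p}$; the periodicity assumption gives $T = Q_T^*\intervalco{0}{n}$. For any shift $i$, the fragment $T\intervalco{i}{i+m}$ equals $(Q_T^{\circlearrowright (i \bmod p)})^*\intervalco{0}{m}$, and therefore depends only on $i \bmod p$. Hence it suffices to compute approximations $D_0,\dots,D_{p-1}$ of $\ED(P, T\intervalco{i}{i+m})$ for $i \in \intervalco{0}{p}$, and then output $D_{i \bmod p}$ for every requested $i \in \intervalcc{0}{n-m}$ in $\Oh(n-m)$ total time, which accounts for the additive $\Oh(n-m)$ term.

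\emph{Step 2: Length reduction to $\Oh(p)$.} I would state and prove a structural lemma, referenced in the overview as Lemma~\ref{lem:ed-approx-periodic}, which asserts that for two $p$-periodic strings of length $m$ their edit distance is $3$-approximated by a quantity derivable from length-$\Oh(p)$ data (morally, the two periods together with the length $m$, handled so that partial last periods are treated correctly). Crucially, the lemma must be stated in such a way that when $i$ increases by $1$, the $T$-side representation changes in only $\Oh(1)$ positions, which is natural since the transition corresponds to rotating $Q_T$ by one position.

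\emph{Step 3: Dynamic sweep.} I would initialize the $(\log p)^{\Oh(\log_\Delta p)}$-approximate dynamic edit-distance data structure of~\cite{KMS23}, with trade-off parameter $\Delta$, on the reduced pair corresponding to $i=0$. Sweeping $i = 0,1,\dots,p-1$, I would feed the $\Oh(1)$ character updates per transition and then query the current approximate distance $D_i$. Since both operands have length $\Oh(p)$, each update and query costs $\Delta \cdot (\log p)^{\Oh(\log_\Delta p)}$, so this phase takes $p \cdot \Delta \cdot (\log p)^{\Oh(\log_\Delta p)}$ time overall. Composing the factor-$3$ loss from Step~2 with the approximation guarantee of~\cite{KMS23} preserves the overall $(\log p)^{\Oh(\log_\Delta p)}$ approximation.

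\emph{Main obstacle.} The principal technical difficulty is establishing the structural length-reduction lemma. One must prove a two-sided bound (the reduced quantity is within a constant factor of $\ED(P, T\intervalco{i}{i+m})$ from both sides) and simultaneously arrange the representation so that a unit shift in $i$ triggers only $\Oh(1)$ character changes on the $T$-side. Handling boundary effects when $m$ is not a multiple of $p$, and accommodating primitive periods of different lengths at most $p$ on the two sides, are additional subtleties. By contrast, once such a lemma is in hand, Steps~1 and~3 are essentially black-box combinations of simple periodicity observations with the dynamic data structure.
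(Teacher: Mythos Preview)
Your high-level plan matches the paper's: collapse to at most $p$ shifts by periodicity, invoke Lemma~\ref{lem:ed-approx-periodic} to reduce to length-$\Oh(p)$ data, and exploit the dynamic data structure of~\cite{KMS23} with $\Oh(1)$ updates per step. Step~1 is correct (the paper phrases it as reducing to $n=m+p-1$), and the identification of Lemma~\ref{lem:ed-approx-periodic} as the structural core is right.

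The gap is in Step~3. Lemma~\ref{lem:ed-approx-periodic} does \emph{not} reduce $\ED(P,T\intervalco{i}{i+m})$ to a single edit distance on one $\Oh(p)$-length pair that you could query directly; it $3$-approximates it by
\[
\ED\bigl(P\intervalco{0}{r},\,T\intervalco{i}{i+r}\bigr) \;+\; \min_{s\in\Int}\Bigl(\lfloor m/p\rfloor\cdot\ED\bigl(P\intervalco{0}{p},\,(T\intervalco{i}{i+p})^{\circlearrowright s}\bigr) + 2|s|\Bigr),
\]
where $r=m\bmod p$. The minimum over rotations is essential: taking only $s=0$ would give an upper bound that can be arbitrarily loose. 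Since $(T\intervalco{i}{i+p})^{\circlearrowright s}=T\intervalco{j}{j+p}$ with $j=(i+s)\bmod p$, the family of inner distances does not depend on $i$, so the paper runs \emph{two} dynamic sweeps (packaged as Lemma~\ref{lem:approxed-PM}): one producing $e_j\approx\ED(P\intervalco{0}{p},T\intervalco{j}{j+p})$ for all $j\in\intervalco{0}{p}$, and one producing $f_i\approx\ED(P\intervalco{0}{r},T\intervalco{i}{i+r})$ for all $i\in\intervalco{0}{p}$. Each of these sweeps does enjoy the $\Oh(1)$-update-per-step property you anticipated. The answers are then assembled as $g_i=f_i+\min_j\bigl(\lfloor m/p\rfloor\, e_j+2\min(|i-j|,p-|i-j|)\bigr)$, computed for all $i$ at once via Dijkstra on an $\Oh(p)$-vertex graph. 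So you are one combination step short; the obstacle you flagged (establishing the two-sided bound) is real, but its resolution yields a $\min$-formula rather than a single reduced pair.
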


\begin{restatable}[Fast Shifted ED for Periodic Strings]{lemma}{lemfastshifts}\label{lem:fast-shifts-ed}
    Let $v$ be a node for which the Matching Test correctly returns $\Close[s^*]$ and the Periodicity Test correctly returns $\Close[Q]$ (i.e., \crefrange{alg:stoc:line:alignment-test-condition}{alg:stoc:line:periodicity-test-condition} succeed).
    Then, with high probability, given integers $p\ge |Q|$ and $\Delta\in \intervalcc{2}{p}$, we can compute values $\eta_{v,s}$ for all $s \in \intervalcc{-k}{k}$
    such that
    \[
        \ED(X_v, Y_{v,s})-\tfrac{1}{r_v}\leq \eta_{v,s} \leq (\log p)^{O(\log_\Delta(p))} \cdot \left(\ED(X_v, Y_{v,s})+\tfrac{1}{r_v}\right),
    \]
    in total time $O\left(p(\log p)^{O(\log_\Delta(p))}\Delta + k\right)$.
\end{restatable}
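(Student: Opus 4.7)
The plan is to reduce to the purely periodic setting so that Lemma~\ref{lem:approxed-periodic-PM} can be applied as a black box. Let $Q$ be the primitive string of length $|Q|\le p$ returned by the Periodicity Test (invoked with rate $3r_v$) and let $s^*\in\intervalcc{-k}{k}$ be the shift returned by the Matching Test (also with rate $3r_v$). I would construct the idealized periodic strings $\tilde Y_v := Q^*\intervalco{0}{|Y_v|}$ and $\tilde X_v := \tilde Y_v\intervalco{k+s^*}{k+s^*+|X_v|}$. The success guarantees of the two tests give $\HD(Y_v,\tilde Y_v)\le 1/(3r_v)$ and $\HD(X_v,Y_v\intervalco{k+s^*}{k+s^*+|X_v|})\le 1/(3r_v)$, so the triangle inequality yields $\HD(X_v,\tilde X_v)\le 2/(3r_v)$. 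Moreover, $\tilde X_v$ inherits period $|Q|$ from $\tilde Y_v$, so both strings share a common period of length at most $p$.

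Next, I would invoke Lemma~\ref{lem:approxed-periodic-PM} with $P:=\tilde X_v$, $T:=\tilde Y_v$, period $|Q|$, and accuracy parameter $\Delta':=\min(\Delta,|Q|)$ (the $\min$ only matters in the corner case $\Delta>|Q|$). The strings $\tilde X_v$ and $\tilde Y_v$ need not be materialized; constant-time oracle access obtained from $Q$ is enough. This produces, in total time
\[
    O(|Y_v|-|X_v|) + |Q|\cdot\Delta'\cdot(\log|Q|)^{O(\log_{\Delta'}|Q|)} \;\subseteq\; O(k) + p\cdot\Delta\cdot(\log p)^{O(\log_\Delta p)},
\]
values $\widetilde E_s$ for every $s\in\intervalcc{-k}{k}$ that satisfy
\[
    \ED(\tilde X_v,\tilde Y_v\intervalco{k+s}{k+s+|X_v|}) \;\le\; \widetilde E_s \;\le\; (\log p)^{O(\log_\Delta p)}\cdot\ED(\tilde X_v,\tilde Y_v\intervalco{k+s}{k+s+|X_v|})
\]
with high probability. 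I would then output $\eta_{v,s}:=\widetilde E_s$.

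For the approximation guarantee, I would use the standard bound $|\ED(A,B)-\ED(A',B')|\le \HD(A,A')+\HD(B,B')$ (valid since the length pairs match). Since $\HD(X_v,\tilde X_v)+\HD(Y_{v,s},\tilde Y_{v,s})\le 2/(3r_v)+1/(3r_v)=1/r_v$, this gives
\[
    \ED(X_v,Y_{v,s})-1/r_v \;\le\; \ED(\tilde X_v,\tilde Y_v\intervalco{k+s}{k+s+|X_v|}) \;\le\; \ED(X_v,Y_{v,s})+1/r_v,
\]
and combining with the two-sided guarantee on $\widetilde E_s$ yields both the lower bound $\eta_{v,s}\ge \ED(X_v,Y_{v,s})-1/r_v$ and the upper bound $\eta_{v,s}\le (\log p)^{O(\log_\Delta p)}\cdot(\ED(X_v,Y_{v,s})+1/r_v)$, as required.

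The main obstacle I expect is not the reduction itself---which is essentially mechanical once Lemma~\ref{lem:approxed-periodic-PM} is identified as the right black box---but the parameter bookkeeping to show that the lemma fits its stated range after the substitutions $p\mapsto|Q|$ and $\Delta\mapsto\Delta'$. In particular, one needs monotonicity to absorb $|Q|\cdot\Delta'\cdot(\log|Q|)^{O(\log_{\Delta'}|Q|)}$ into $p\cdot\Delta\cdot(\log p)^{O(\log_\Delta p)}$ and to handle the $\Delta>|Q|$ regime (where $\log_{\Delta'}|Q|=1$), both of which are straightforward given $|Q|\le p$ and $\Delta'\le\Delta$.
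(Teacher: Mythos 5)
Your proposal is correct and follows essentially the same route as the paper's proof: replace $X_v$ and $Y_v$ by their periodic idealizations drawn from $Q^*$, bound the resulting additive error by $\tfrac{2}{3r_v}+\tfrac{1}{3r_v}=\tfrac{1}{r_v}$ via the Hamming-distance guarantees of the two tests, and invoke \cref{lem:approxed-periodic-PM} on the length-$(|X_v|+2k)$ window. The only (immaterial) difference is that the paper sidesteps your $\Delta'=\min(\Delta,|Q|)$ case analysis by rounding the period up to $|Q|\cdot\lceil p/|Q|\rceil\in\intervalco{p}{2p}$ so that $\Delta\in\intervalcc{2}{p}$ is automatically in range.
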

\begin{proof}
    Denote $T=Q^*\intervalco{0}{|Y_v|}$ and $S=T\intervalco{k-s^*}{|X_v|+k-s^*}$.
    Since the Periodicity Test in \cref{alg:stoc:line:periodicity-test} correctly returned $\Close[Q]$, we have $\HD(Y_v,T)\le \frac{1}{3r_v}$.
    In particular, this implies $\HD(Y_{v,s},T\intervalco{k-s}{|X_v|+k-s})\le \frac{1}{3r_v}$ for every $s\in \intervalcc{-k}{k}$.
    Similarly, since the Matching Test in \cref{alg:stoc:line:alignment-test} correctly returned $\Close[s^*]$, we have
    $\HD(X_v, Y_{v,s^*}) \le \frac{1}{3r_v}$.
    Consequently, $\HD(X_v,S) \le \HD(X_v, Y_{v,s^*})+\HD(Y_{v,s^*},S) \le \frac{2}{3r_v}$.
    By the triangle inequality, for every $s\in \intervalcc{-k}{k}$, we have
    \begin{align*}\big|\ED(X_v,Y_{v,s})-\ED(S,T\intervalco{k-s}{|X_v|+k-s})\big| &\le \ED(X_v,S)+\ED(Y_{v,s},T\intervalco{k-s}{|X_v|+k-s})\\
    & \le \HD(X_v,S)+\HD(Y_{v,s},T\intervalco{k-s}{|X_v|+k-s}) \\ &\le \tfrac{2}{3r_v}+\tfrac{1}{3r_v} = \tfrac{1}{r_v}.\end{align*}

    Strings $S$ and $T$ both have a period $|Q|\le p$ and thus also a period $|Q|\cdot \lceil{p/|Q|\rceil}\in \intervalco{p}{2p}$.
    We apply \cref{lem:approxed-periodic-PM} with the latter period,
    and return the approximation of $\ED(S, T\intervalco{k-s}{|X_v|+k-s})$ as $\eta_{v,s}$.

    Since $|T|-|S|=2k$, the running time is $\Oh(k)+p\Delta \cdot (\log p)^{O(\log_\Delta p)}$,
    and each value $\eta_{v,s}$ is an $(\log p)^{O(\log_\Delta p)}$-approximation of $\ED(S, T\intervalco{k-s}{|X_v|+k-s})$,
    that is,
    \[\ED(S, T\intervalco{k-s}{|X_v|+k-s}) \le \eta_{v,s} \le (\log p)^{O(\log_\Delta p)}\cdot \ED(S, T\intervalco{k-s}{|X_v|+k-s}).\]
    Due to $\big|\ED(X_v,Y_{v,s})-\ED(S,T\intervalco{k-s}{|X_v|+k-s})\big|\le \frac{1}{r_v}$,
    we conclude that
    \[
        \ED(X_v, Y_{v,s})-\tfrac{1}{r_v}\leq \eta_{v,s} \leq (\log p)^{O(\log_\Delta(p))} \cdot \left(\ED(X_v, Y_{v,s})+\tfrac{1}{r_v}\right),
    \]
    holds as claimed.
\end{proof}

Now we are ready to prove the main result of this section, which we restate for convenience.

\lemmainstoc*

\paragraph*{Setting Rates and Accuracies}

Recall that $k, K, p$ and $B$ are given parameters.
Let $T$ be a partition tree of degree $\ell$ for $X, Y$.
For now, we keep $\ell \geq 2$ as a variable that will be set later.
Note that $T$ has at most $2n$ nodes in total, and its depth is bounded by $\ceil{\log_\ell n}$.

Let $\gamma = (\log p)^{c \log_\Delta(p)}$ where the constant $c > 0$ is chosen so that the approximation factor given by \cref{lem:fast-shifts-ed} does not exceed $\gamma$.
We specify the rates and multiplicative accuracies for every node $v$ in $T$ in the following way:
\begin{itemize}
    \item Multiplicative accuracy: if $v$ is the root, we set $\alpha_v = 10 \cdot \gamma$.
    Otherwise, if $v$ is a child of $w$, we set $\alpha_w = \alpha_v \cdot (1 - (2\log n)^{-1})$.
    Note that since the depth of the tree is bounded by $\log n$, for every node $v$ in $T$ it holds that $\alpha_v \geq 10 \gamma = \alpha_{\mathrm{root}}$.
    \item Rate: if $v$ is the root, we set $r_v = 10000 \gamma^2 /  K$.
    Otherwise, if $v$ is a child of $w$ then sample $u_v \sim \calD(\varepsilon:= (2 \log n)^{-1}, \delta := 0.01 \cdot (kn)^{-1})$ (see~\cref{lem:precision-sampling})
    and set $r_v := r_w/u_v$.
\end{itemize}

Recall that our goal is to solve the tree distance problem, that is, for every node $v$ compute values $\eta_{v,s}$ satisfying \eqref{eqn:td-problem}.
We set the parameter $\mu$ in \cref{def:td-problem} to $\mu := \gamma$.

\paragraph*{Correctness} We start by analyzing the correctness of \cref{alg:stoc}.

\begin{lemma}[Correctness of \cref{alg:stoc}]\label{lem:correctness-stoc}
    Let $X, Y$ be strings.
    Given any node $v$ in the partition tree, \cref{alg:stoc} correctly
    solves the Tree Distance Problem (\cref{def:td-problem}), with probability 0.9.
\end{lemma}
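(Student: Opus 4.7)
The plan is to proceed by induction on the height of $v$ in the partition tree $T$, showing that the values $\eta_{v,s}$ returned by $\alg{AlgStoc}$ satisfy \eqref{eqn:td-problem} for every $s \in \intervalcc{-k}{k}$. At each node, I would carry error events conditional on all property tests and all $\alg{Recover}$ calls succeeding; a single union bound at the end converts these per-node error budgets into the $0.9$ global success probability.

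For the base cases, a leaf $v$ is handled in \cref{alg:stoc:line:test-short}, which returns $\eta_{v,s} = \ED(X_v,Y_{v,s}) = \TD^k_{v,s}$, trivially inside the required window. If the Matching Test and the Periodicity Test both return \Close, then \cref{lem:fast-shifts-ed} yields
\[
  \ED(X_v,Y_{v,s}) - \tfrac{1}{r_v}\ \le\ \eta_{v,s}\ \le\ \gamma\left(\ED(X_v,Y_{v,s}) + \tfrac{1}{r_v}\right),
\]
and since $\alpha_v \ge 10\gamma$, $\mu = \gamma$, and $\TD^k_{v,s} \ge \ED(X_v,Y_{v,s})$ by \cref{lem:equiv-td-ed}, this immediately implies \eqref{eqn:td-problem}.

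For the inductive step, suppose $v$ is internal with children $v_0,\dots,v_{\ell-1}$, that each $\eta_{v_i,s'}$ satisfies \eqref{eqn:td-problem} with accuracy $\alpha_{v_i} = \alpha_v\cdot(1-(2\log n)^{-1})$ and rate $r_{v_i} = r_v/u_{v_i}$, and that the $\alg{Recover}$ call succeeds. By \cref{lem:range-minimum} the quantities $\widetilde A_{i,s}$ computed in \cref{alg:stoc:line:range-minimum} are exactly $\min_{s'}(\eta_{v_i,s'} + 2|s-s'|)$. Combining the children's invariants with the identity $\TTD^k_{v_i,s} = \min_{s'}(\TD^k_{v_i,s'} + 2|s-s'|)$ and the fact that $\ED(X_{v_i},Y_{v_i,s}) \le \min_{s'}(\ED(X_{v_i},Y_{v_i,s'}) + 2|s-s'|)$ gives
\[
  \tfrac{1}{\alpha_{v_i}}\ED(X_{v_i},Y_{v_i,s}) - \tfrac{\mu \cdot u_{v_i}}{r_v}\ \le\ \widetilde A_{i,s}\ \le\ \alpha_{v_i}\cdot \TTD^k_{v_i,s} + \tfrac{\mu \cdot u_{v_i}}{r_v}.
\]
Feeding these into $\alg{Recover}$ via \cref{lem:precision-sampling} with $\alpha := \alpha_{v_i}$ and $\beta := \mu/r_v$, applied once with $a_i := \ED(X_{v_i},Y_{v_i,s})$ for the lower bound and once with $a_i := \TTD^k_{v_i,s}$ for the upper bound, yields $\eta_{v,s}$ sandwiched between $\tfrac{1}{(1+\varepsilon)\alpha_{v_i}}\sum_i \ED(X_{v_i},Y_{v_i,s}) - \mu/r_v$ and $(1+\varepsilon)\alpha_{v_i}\sum_i \TTD^k_{v_i,s} + \mu/r_v$. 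Using subadditivity $\ED(X_v,Y_{v,s}) \le \sum_i \ED(X_{v_i},Y_{v_i,s})$ on the lower side, the defining relation \eqref{eq:td} on the upper side, and the fact that $(1+\varepsilon)\alpha_{v_i} = \alpha_v$ by construction, this is exactly \eqref{eqn:td-problem} at $v$.

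The main obstacle is bookkeeping: one must verify that the $(1+\varepsilon)$-inflation contributed by Precision Sampling at each level is exactly cancelled by the $(1-(2\log n)^{-1})$-contraction baked into the accuracies $\alpha_v$, and that the additive $\mu/r_v$-term emerges cleanly after scaling the children's errors $\mu/r_{v_i} = \mu \cdot u_{v_i}/r_v$ through $\alg{Recover}$. Once this is done, the probability bound is routine: the at most $2n$ Matching and Periodicity Tests each fail with probability $0.01/n$, each $\alg{Recover}$ call (one per node and shift, at most $(2k+1)\cdot 2n$ in total) fails with probability $0.01/(kn)$, and the high-probability step in \cref{lem:fast-shifts-ed} contributes a negligible additive term, so a union bound leaves success probability at least $0.9$.
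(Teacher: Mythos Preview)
Your approach matches the paper's: induction on the tree, base cases handled by exact computation at leaves and by \cref{lem:fast-shifts-ed} at matched periodic nodes, inductive step via the Precision Sampling recovery, and a union bound over all tests and \alg{Recover} calls at the end. The structure is right.

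There is one bookkeeping slip that keeps the induction from closing as written. The invariant \eqref{eqn:td-problem} is \emph{asymmetric}: the lower side has additive slack $-1/r_v$, while the upper side has $+\mu/r_v$. You treat both sides symmetrically with $\beta = \mu/r_v$. Concretely, the children's lower bound gives $\widetilde A_{i,s} \ge \tfrac{1}{\alpha_{v_i}}\ED(X_{v_i},Y_{v_i,s}) - u_{v_i}/r_v$ (not $-\mu\,u_{v_i}/r_v$), so you should invoke \cref{lem:precision-sampling} on the lower side with $\beta = 1/r_v$; this yields $\eta_{v,s} \ge \tfrac{1}{(1+\varepsilon)\alpha_{v_i}}\sum_i \ED(X_{v_i},Y_{v_i,s}) - 1/r_v$, which is what \eqref{eqn:td-problem} demands. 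With your choice $\beta = \mu/r_v$ you only get $-\mu/r_v$, which (since $\mu = \gamma \gg 1$) does not reproduce the stated invariant at $v$. The paper handles this by applying the two bullets of \cref{lem:precision-sampling} with different $\beta$'s.

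A second, harmless imprecision: $(1+\varepsilon)\alpha_{v_i}$ is not equal to $\alpha_v$ but to $(1-(2\log n)^{-2})\alpha_v$; fortunately the inequality goes the right way for both the upper and lower bounds, so this does not affect correctness.
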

\begin{proof}
    We proceed by induction over the depth of the partition tree.
    That is, we will prove that for every node $v$, the algorithm correctly solves $v$ (as per \cref{def:td-problem}) assuming that the recursive calls are correct.

    For the base case, we consider all the nodes solved in \crefrange{alg:stoc:line:test-short-condition}{alg:stoc:line:periodic-solve}.
    If the node is solved in \crefrange{alg:stoc:line:test-short-condition}{alg:stoc:line:test-short}, then we compute the tree distance exactly.
    If the node $v$ is solved in \crefrange{alg:stoc:line:alignment-test-condition}{alg:stoc:line:periodic-solve}, then assuming that the calls to \cref{lem:fast-shifts-ed} succeed (we will bound the error probability later) the algorithm computes values $\eta_{v,s}$ satisfying
    \[
        \ED(X_v, Y_{v,s}) - 1/r_v \leq \eta_{v,s} \leq \gamma \cdot \ED(X_v, Y_{v,s}) + \gamma/r_v.
    \]
    We want to show that these values satisfy \eqref{eqn:td-problem}.
    Observe that the lower bound is satisfied (without additive error).
    For the upper bound, recall that $\alpha_v = 10 \gamma \cdot (1 - (2\log n)^{-1})^d$, where $d \leq \log n$ is the depth of $v$.
    In particular, observe that $\alpha_v \geq \gamma$.
    By \cref{lem:equiv-td-ed}, it holds that $\ED(X_v, Y_{v,s}) \leq \TD^k(X_v, Y_{v,s})$.
    Therefore, we obtain $\eta_{v,s} \leq \alpha_v \cdot \TD^k(X_v, Y_{v,s}) + \gamma/r_v$, as required (since $\mu = \gamma$).

    For the inductive step, fix some node $v$ and assume that all values $\eta_{v_i, s'}$ recursively computed in \cref{alg:stoc:line:recursion} are correct (i.e. they satisfy \eqref{eqn:td-problem}).
    (We will bound the error probability later.)
    In \cref{alg:stoc:line:range-minimum}, the algorithm computes a value $\widetilde{A}_{i,s}$ satisfying

    \begin{equation}
        \widetilde{A}_{i,s}
        = \min_{s' \in \intervalcc{-k}{k}}\eta_{v_i, s'} + 2|s-s'|
        \leq \min_{s' \in \intervalcc{-k}{k}} \alpha_{v_i} \TD^k_{v_i,s'}(X, Y) + 2|s-s'| + \gamma/r_{v_i} \label{proof:correctness:eqn:ub}
    \end{equation}

    \begin{equation}
      \widetilde{A}_{i,s}
        = \min_{s' \in \intervalcc{-k}{k}}\eta_{v_i, s'} + 2|s-s'|
        \geq \min_{s' \in \intervalcc{-k}{k}} 1/\alpha_{v_i} \ED^k_{v_i,s'}(X, Y) + 2|s-s'| - 1/r_{v_i} \label{proof:correctness:eqn:lb}
    \end{equation}
    Recall that for each child $v_i$ the rate $r_{v_i}$ is set to $r_{v_i} = r_v/u_{v_i}$,
    where $u_{v_i}$ is an independent sample from the distribution $\calD(\varepsilon = (2\log n)^{-1}, \delta = 0.01/(kn))$.
    Similarly, recall that the multiplicative accuracy $\alpha_{v_i}$ is set to $\alpha_{v_i} = \alpha_v (1 - (2\log n)^{-1})$ and is the same for all children.

    We prove that the values $\eta_{v,s}$ computed in \cref{alg:stoc:line:recovery} satisfy the upper and lower bounds of \eqref{eqn:td-problem}:

    \paragraph*{Upper Bound}
    Note that by \cref{def:shift-td} it holds that
    \[
      \TD^k_{v,s}(X, Y) = \sum_{i=0}^{\ell-1}A_{i,s}, \quad A_{i,s} = \min_{s' \in \intervalcc{-k}{k}} \TD^k_{v_i, s'}(X, Y) + 2|s-s'|.
    \]
    In particular, combining the above we obtain that $\widetilde{A}_{i, s} \leq \alpha_v (1 - (2\log n)^{-1}) A_{i,s} + \gamma u_{v_i}/r_v$.

    Therefore, we apply \cref{lem:precision-sampling} (with $\alpha = \alpha_v (1 - (2\log n)^{-1})$ and $\beta = \gamma/r_v$) to infer that in \cref{alg:stoc:line:recovery},
    the call to $\alg{Recover}(\widetilde{A}_{0,s},\dots,\widetilde{A}_{\ell-1,s}, u_{v_0},\dots,u_{v_{\ell-1}})$
    computes a value satisfying (again, we assume that the output is correct and bound the error probability later)
    \begin{align*}
        \alg{Recover}(\cdot)
            &\leq (1+\varepsilon)\alpha \left(\sum_{i=0}^{\ell-1}A_{i,s}\right) + \gamma/r_v\\
            &= (1+(2\log n)^{-1}) \alpha_v (1 - (2\log n)^{-1}) \TD_{v,s}^k(X, Y) + \gamma/r_v \\
            &\leq \alpha_v \TD_{v,s}^k(X, Y) + \gamma/r_v,
    \end{align*}
    which concludes the proof for the upper bound.

    \paragraph*{Lower Bound}
    Let $s_0,\dots,s_{\ell-1}$ be the shifts chosen by the algorithm in \cref{alg:stoc:line:range-minimum}.
    Thus, from \eqref{proof:correctness:eqn:lb} we have that $\widetilde{A}_{i,s} \geq 1/\alpha_{v_i} (\ED(X_{v_i}, Y_{v_i,s_i})+2|s-s_i|) - 1/r_{v_i}$.
    Similarly as for the upper bound, suppose that the call to \alg{Recover} in \cref{alg:stoc:line:recovery} succeeds.
    Then, by \cref{lem:precision-sampling} (with $\alpha = \alpha_v (1 - (2\log n)^{-1})$ and $\beta = 1/r_v$)
    the call to $\alg{Recover}(\widetilde{A}_{0,s},\dots,\widetilde{A}_{\ell-1,s}, u_{v_0},\dots,u_{v_{\ell-1}})$
    satisfies
    \begin{align*}
        \alg{Recover}(\cdot)
            &\geq \frac{1}{(1+\varepsilon)\alpha} \left(\sum_{i=0}^{\ell-1}\ED(X_{v_i}, Y_{v_i, s_i}) + 2|s-s_i|\right) - 1/r_{v} \\
            &\geq \frac{1}{(1+\varepsilon)\alpha} \left(\sum_{i=0}^{\ell-1}\ED(X_{v_i}, Y_{v_i, s_i}) + \ED(Y_{v_i,s}, Y_{v_i,s_i})\right) - 1/r_{v} \\
            &\geq \frac{1}{(1+\varepsilon)\alpha} \left(\sum_{i=0}^{\ell-1}\ED(X_{v_i}, Y_{v_i, s})\right) - 1/r_{v} \\
            &\geq \frac{1}{(1+\varepsilon)\alpha} \ED(X_v, Y_{v, s}) - 1/r_{v}
    \end{align*}
    The second inequality holds because we can transform $Y_{v_i,s}$ into $Y_{v_i,s_i}$ by inserting and deleting $|s-s_i|$ characters.
    The third one by the triangle inequality.
    And the last inequality by the subadditivity of edit distance.

    Finally, recall that $\alpha = \alpha_v (1-(2\log n)^{-1})$ and $\varepsilon = (2\log n)^{-1}$. Hence,
    \[
      \frac{1}{(1 + \varepsilon)\alpha} = \frac{1}{(1 + (2\log n)^{-1}) (1 - (2\log n)^{-1}) \alpha_v} \geq \frac{1}{\alpha_v},
    \]
    which concludes the lower bound.

    \paragraph*{Error Probability}
    There are four sources of randomness in the algorithm: the Matching and Periodicity tests in lines \cref{alg:stoc:line:alignment-test,alg:stoc:line:periodicity-test}, the call to \cref{lem:fast-shifts-ed} in \cref{alg:stoc:line:periodic-solve}, and the call to the recovery algorithm of precision sampling (\cref{lem:precision-sampling}) in \cref{alg:stoc:line:recovery}.
    For each node, at least one of the Matching and Periodicity tests fails with probability at most $2\delta = 0.02/n$.
    For every leaf, the call to \cref{lem:fast-shifts-ed} fails with probability $0.01/n$.
    We apply the recovery algorithm of \cref{lem:precision-sampling} with $\delta = 0.01/(kn)$ for $2k$ shifts in every node,
    hence the error probability is bounded by $0.02/n$ as well.
    By a union bound, the error probability for one node is bounded by $0.05/n$.
    Since there are at most $2n$ nodes in the tree, the total error probability is bounded by $0.1$.
\end{proof}

\paragraph*{Running Time Analysis}

We say that a node $v$ in the partition tree is active if the recursive computation of \cref{alg:stoc} reaches $v$.

Recall that we say that a node $v$ in the partition tree is matched if there is a shift $s^* \in \intervalcc{-k}{k}$ such that $X_v = Y_{v,s^*}$.
If $v$ is not matched, we call it unmatched.
The following lemma shows that if $\ED(X, Y) \leq k$, then there are not many unmatched nodes.
\begin{lemma}[{Number of Unmatched Nodes \cite[Lemma 25]{BringmannCFN22}}]\label{lem:unmatched}
    Assume that $\ED(X, Y) \leq k$. If the partition tree has depth $D$, then there are at most $kD$ nodes which are not matched.
\end{lemma}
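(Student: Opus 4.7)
The plan is to charge each unmatched node to one of the at most $k$ edits in an optimal alignment, with a multiplicity of at most $D$ per edit.

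First, I would fix an optimal alignment $\pi$ between $X$ and $Y$ of cost $\ED(X,Y)\le k$. Such an alignment induces, for each index $i\in \intervalco{0}{|X|}$ that is \emph{not} the location of a substitution or deletion, a matched partner $j(i)\in \intervalco{0}{|Y|}$ with $X\access{i}=Y\access{j(i)}$; moreover, if no insertion of $\pi$ lies in a contiguous block of matched indices, then $j(i)-i$ is constant on that block. I would then call a leaf $\lambda$ (corresponding to position $i_\lambda$) of the partition tree \emph{edit-leaf} if $\pi$ performs a substitution or deletion at position $i_\lambda$, or an insertion immediately before position $i_\lambda$; there are at most $k$ edit-leaves, because the total number of edits is at most $k$.

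Next I would show that every unmatched node $v$ has at least one edit-leaf in its subtree. Concretely, suppose the range of $v$ is $\intervalco{i_1}{i_2}$, and no edit-leaf lies in that range. Then every position in $\intervalco{i_1}{i_2}$ is matched by $\pi$, and no insertion happens inside the range, so $j(i)=i+s$ for a constant shift $s$ depending only on $v$. Thus $X_v = Y\intervalco{i_1+s}{i_2+s} = Y_{v,s}$. Moreover, since $\pi$ uses at most $k$ edits in total, $|s|\le k$, so $s\in\intervalcc{-k}{k}$ and $v$ is matched --- a contradiction.

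Finally I would finish the counting. Each edit-leaf has exactly $D$ ancestors in a tree of depth $D$ (counting itself), and every unmatched node is an ancestor of some edit-leaf. With at most $k$ edit-leaves, this gives at most $kD$ unmatched nodes, as claimed.

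\textbf{Expected obstacle.} The only subtlety is the bookkeeping for insertions: an insertion does not occupy a position in $X$, so one must be careful about \emph{which} leaf to charge it to (I charge the leaf immediately to the right of the insertion). Once that convention is fixed, the proof is purely combinatorial and all the inequalities $|s|\le k$ and ``$D$ ancestors per leaf'' go through without further effort.
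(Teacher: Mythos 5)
Your proposal is correct and follows essentially the same charging argument as the cited proof in [BringmannCFN22, Lemma 25]: every unmatched node must contain an edit of a fixed optimal alignment (otherwise the alignment induces a constant shift $s$ with $|s|\le k$ on its range, making it matched), and each of the at most $k$ edits lies in at most $D$ nodes, one per level. The only caveats are cosmetic bookkeeping (the convention for charging insertions, which you address, and the off-by-one in whether a root-to-leaf path has $D$ or $D+1$ nodes, which depends on the definition of depth).
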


\begin{lemma}[Number of Active Nodes]\label{lem:active-nodes}
    Suppose that $\ED(X, Y) \leq k$. If the partition tree has depth $D$, then the number of active nodes is $O(kD\ell + \BP_p(X)D\ell)$
    with probability at least 0.96.
\end{lemma}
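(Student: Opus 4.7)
My strategy is to bound the number of active \emph{recursive} nodes (nodes that actually descend into their children via \cref{alg:stoc:line:iter-children} rather than terminating at the base case) and then multiply by $\ell$, since every active non-root arises as a child of a recursive active ancestor.

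I begin with a clean-event union bound. The Matching Test and the $p$-Periodicity Test are each invoked at most once per internal node with failure parameter $\delta = 0.01/n$, and there are at most $2n$ nodes, so with probability at least $1 - 4n \cdot 0.01/n \geq 0.96$ every test is truthful. Condition on this event. A recursive active node must have failed at least one of the two tests, and I split such nodes into two disjoint classes: \emph{Type A}, where the Matching Test returned \Far{} and hence $v$ is genuinely unmatched; and \emph{Type B}, where the Matching Test returned \Close{} but the Periodicity Test returned \Far{}, which under the good event forces $Y_v$ to be not $p$-periodic. Under the hypothesis $\ED(X, Y) \leq k$, \cref{lem:unmatched} immediately bounds Type A by $kD$.

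To bound Type B, I transfer block periodicity from $X$ to $Y$. An optimal $p$-block decomposition of $X$ with $\BP_p(X)$ pieces gets transported through an optimal alignment to $Y$; since each of the $\leq k$ edits can break at most $O(1)$ pieces further, this produces a $p$-block decomposition of $Y$ with $\BP_p(Y) \leq \BP_p(X) + O(k)$ pieces. Whenever the extended range $Y_v$ of a node lies inside a single such piece it is $p$-periodic, so every Type B node has $Y_v$ straddling a boundary of this decomposition. A level-wise geometric count shows that a fixed boundary lies in the extended range of at most $1 + O(k\ell^d / n)$ level-$d$ nodes (since level-$d$ nodes have length $n/\ell^d$ while the extension is $k$ on each side), and summing across the $D$ levels produces $O(D)$ nodes per boundary. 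Altogether, the number of Type B nodes is $O((\BP_p(X) + k) D)$, and multiplying by $\ell$ yields the target bound $O(kD\ell + \BP_p(X) D \ell)$.

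The main obstacle is making the geometric counting in the Type B step robust: one has to verify that the low-level overhead $O(k\ell^d/n)$ aggregates across levels only into an additive $O(k)$ per boundary, so that the combined bound does not leak an uncontrolled $\ell \cdot \BP_p(X) \cdot k$ term beyond the $O(kD\ell)$ slack absorbed by Type A. Checking that transporting the $p$-block decomposition across the alignment really introduces only $O(k)$ additional boundaries in $Y$ is standard but still demands careful accounting of how each single edit interacts with a $p$-periodic piece.
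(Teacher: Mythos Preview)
Your route diverges from the paper's and, as you yourself flag in the last paragraph, the Type~B count does not close.

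The paper does \emph{not} pass through $\BP_p(Y)$ or the overlapping extended ranges $Y_v$. Instead it classifies active nodes by the matching status of their \emph{ancestors}: (i) all ancestors unmatched, (ii) $v$ matched but all ancestors unmatched, (iii) some ancestor matched. Types (i)--(ii) are handled exactly as your Type~A. For type~(iii), the parent $w$ is matched, so the only way $w$ recurses is that the periodicity test at $w$ fails; the paper then observes that at each fixed level the fragments $X_w$ form a \emph{partition} of $X$ with no overlap, so each of the $\BP_p(X)-1$ block boundaries lies in exactly one $X_w$, giving at most $O(\BP_p(X))$ failing nodes per level and hence $O(\BP_p(X)D\ell)$ children of such nodes in total. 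The whole argument stays on the $X$-side, where there is no $2k$ slack.

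Your approach instead bounds Type~B by counting nodes whose \emph{extended} range $Y_v$ straddles a boundary of a block decomposition of $Y$. The step ``summing across the $D$ levels produces $O(D)$ nodes per boundary'' is where it breaks: the per-level overhead is $1+O(k\ell^d/n)$, and $\sum_{d=0}^{D-1} k\ell^d/n$ is a geometric series dominated by its last term, which is $\Theta(k)$ (since $\ell^{D}=\Theta(n)$), not $O(1)$. So each boundary contributes $O(D+k)$ nodes, and with $\BP_p(Y)\le \BP_p(X)+O(k)$ boundaries you get $(\BP_p(X)+k)(D+k)$ recursive Type~B nodes. After multiplying by $\ell$, this leaks both a $\BP_p(X)\cdot k\cdot \ell$ term and a $k^2\ell$ term, neither of which is dominated by $O(kD\ell+\BP_p(X)D\ell)$ unless $k=O(D)$ --- an assumption you do not have (here $D=O(\log_\ell n)$ while $k$ may be polynomial in $n$). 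The ``slack absorbed by Type~A'' you appeal to is only $O(kD\ell)$, which does not cover $\BP_p(X)\cdot k\cdot \ell$.

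The fix is precisely the structural advantage the paper exploits: work with the non-overlapping $X_w$'s at each level rather than the $2k$-overlapping $Y_w$'s, so that a single block boundary hits at most one node per level instead of $\Theta(1+k\ell^d/n)$.
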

\begin{proof}
    Recall that there are at most $2n$ nodes in the partition tree.
    Hence, by a union bound, all Matching Tests in \cref{alg:stoc:line:alignment-test} succeed with probability at least $1 - 0.02 = 0.98$.
    Similarly, all Periodicity Tests in \cref{alg:stoc:line:periodicity-test} succeed with probability at least $0.98$.
    Thus, by a union bound, all Matching Tests and Periodicity Tests succeed with probability at least $0.96$.
    We condition on this event for the rest of the proof.

    There are 3 kinds of active nodes $v$: (i) $v$ and all its ancestors are unmatched, (ii) $v$ is matched and all its ancestors are unmatched, and (iii) some ancestor of $v$ (and thus $v$ itself) is matched.

    By \cref{lem:unmatched}, the number of nodes of type (i) is $O(kD)$.
    A node of type (ii) has a parent of type (i), so there are $O(kD\ell)$ such nodes in total.
    Finally, there can be at most $O(\BP_p(X) D\ell)$ nodes of type (iii) in total.
    To see this, observe that such a node will be active only if the periodicity test failed at its parent node.
    But by definition of block periodicity (\cref{def:block-periodicity}), at most $O(\BP_p(X))$ nodes per level fail the periodicity test.
\end{proof}

\begin{lemma}[Running time of \cref{alg:stoc}]\label{lem:runningtime-stoc}
    Let $X, Y$ be strings with $k \leq p$, $k \leq \BP_p(X) \leq B$ and $\ED(X, Y) \leq k$.
    Let $2 \leq \Delta \leq n$ be a parameter.
    Then, \cref{alg:stoc} runs in time
    \[
        \left(\frac{n}{K} \Delta+ pB \cdot \Delta\right) (\log n)^{O(\log_\Delta(n))}
    \]
    with probability at least 0.9.
\end{lemma}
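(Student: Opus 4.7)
The plan is to bound the total running time of Algorithm \ref{alg:stoc} by summing the cost incurred at every active node in the recursion tree. I would fix the partition-tree arity to $\ell = \Theta(\Delta)$, giving depth $D = \Theta(\log_\Delta n)$. Conditioning on the event (probability at least $0.96$, as already established in the proof of \cref{lem:active-nodes}) that all Matching and Periodicity tests are correct, the number of active nodes satisfies $|\mathcal{A}| = O((k+B)\ell D) = O(B \ell D)$ using $k \leq B$.

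The work per active node decomposes into four pieces: (i) Matching and Periodicity tests at cost $O(r_v (|X_v| + |Y_v|) \log n + p + k \log n)$; (ii) at each active \emph{internal} node, range-minima plus \alg{Recover} calls at total cost $O(k \ell \log^3 n)$, where the internal-node count is at most $|\mathcal{A}|/\ell = O(BD)$ because every such node contributes $\ell$ children to $\mathcal{A}$; (iii) each terminal matched-periodic node invokes \cref{lem:fast-shifts-ed} at cost $O(p \Delta (\log p)^{O(\log_\Delta p)} + k)$; and (iv) active partition-tree leaves do trivial $O(1)$ work.

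The delicate step is controlling the rate-dependent sum $\sum_{v \in \mathcal{A}} r_v (|X_v| + |Y_v|)$. I would apply the efficiency clause of \cref{lem:precision-sampling} with $N = \Theta(|\mathcal{A}|)$: for each precision sample $u_w$ there is an event $\mathcal{E}_w$ of probability at least $1 - 1/N$ on which $\mathbb{E}[1/u_w \mid \mathcal{E}_w] \leq \rho := (\log n)^{O(1)}$. A union bound makes all $\mathcal{E}_w$ hold simultaneously with probability at least $0.99$. Conditioning on this, mutual independence of samples along any root-to-$v$ path gives $\mathbb{E}[r_v \mid v \in \mathcal{A}] \leq r_{\mathrm{root}} \rho^{d(v)}$, and summing over each level (whose active-node $|X_v|$-mass is at most $n$) bounds the expected total by $r_{\mathrm{root}} \cdot n \cdot D \cdot \rho^D = O((n/K) \cdot (\log n)^{O(\log_\Delta n)})$, using $r_{\mathrm{root}} = \Theta(\gamma^2/K)$ and $\gamma^2 \leq (\log n)^{O(\log_\Delta n)}$. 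A Markov step then converts this into a high-probability guarantee at the cost of a constant factor.

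Combining, (i) contributes $O((n/K)(\log n)^{O(\log_\Delta n)})$ from the rate term plus $O(pB\ell D \log n) = O(pB\Delta (\log n)^{O(\log_\Delta n)})$ from the additive $p + k\log n$ parts (using $k \leq p$); (ii) contributes $O(BD \cdot k \ell \log^3 n) \leq O(pB\Delta (\log n)^{O(\log_\Delta n)})$; and (iii) contributes $O(|\mathcal{A}| \cdot p \Delta (\log p)^{O(\log_\Delta p)})$, where the residual $\ell D = \Theta(\Delta \log_\Delta n)$ overhead is absorbed into the $(\log n)^{\alpha \log_\Delta n}$ tail by enlarging $\alpha$. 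A final union bound across the active-nodes event, the precision-sampling efficiency event, and Markov's inequality gives the claimed $0.9$ success probability. The principal obstacle is striking the right balance between arity $\ell$ (which multiplies both the active-node count and the per-leaf fast-shifts cost) and depth $D$ (which controls the exponential blow-up $\rho^D$ from precision sampling); the calibration $\ell = \Theta(\Delta)$ is what makes both sides of the envelope $(n/K \cdot \Delta + pB \cdot \Delta)(\log n)^{\alpha \log_\Delta n}$ simultaneously hold.
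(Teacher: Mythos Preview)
Your overall decomposition is right, and the use of \cref{lem:active-nodes} together with the precision-sampling efficiency clause is exactly the intended machinery. The gap is in your calibration $\ell = \Theta(\Delta)$: it does not yield the claimed bound on the $pB$ side. With $\ell = \Theta(\Delta)$ you get $|\mathcal{A}| = O(B\ell D) = O(B\,\Delta\,\log_\Delta n)$, and each terminal matched-periodic node pays $O\bigl(p\Delta\,(\log p)^{O(\log_\Delta p)}\bigr)$ for \cref{lem:fast-shifts-ed}. Multiplying, item (iii) contributes
\[
O\!\left(pB\,\Delta^2\,\log_\Delta n \cdot (\log p)^{O(\log_\Delta p)}\right),
\]
and the extra factor $\Delta$ cannot be absorbed into $(\log n)^{O(\log_\Delta n)}$: for $\Delta = n^{\varepsilon}$ the latter is only polylogarithmic while $\Delta$ is polynomial. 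Your claim that ``the residual $\ell D = \Theta(\Delta\log_\Delta n)$ overhead is absorbed into the $(\log n)^{\alpha\log_\Delta n}$ tail'' is therefore false for large~$\Delta$.

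The paper resolves this by choosing $\ell$ at the opposite end of the trade-off: set $\ell = \lceil(\log n)^{c\,\log_\Delta n}\rceil$ for a suitable constant $c$. Then $\ell$ (and even $\ell^2$) is genuinely $(\log n)^{O(\log_\Delta n)}$ and can be absorbed, while the precision-sampling blow-up along a root-to-node path becomes $\rho^D = (\log n)^{O(\log_\ell n)} \le \Delta$ by construction. This is where the single $\Delta$ factor in both terms of the target bound comes from: it arises from $\rho^D$, not from $\ell$. In other words, the correct balance makes $\ell$ small (absorbable) and lets the depth-induced precision blow-up equal exactly $\Delta$; your choice does the reverse and overshoots by one $\Delta$ on the fast-shifts term.
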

\begin{proof}
    We will set the degree $\ell$ of the partition tree in terms of $\Delta$ later.
    First we bound the expected running time of a single execution of \cref{alg:stoc}, ignoring the cost of recursive calls.
    \begin{itemize}
        \item \cref{alg:stoc:line:test-short-condition,alg:stoc:line:test-short}:
        If $v$ is a leaf, then $|X_v| = 1$.
        Therefore, computing $\eta_{v,s}$ for each $s$ takes $O(1)$ time.
        Thus, the overall time is $O(k)$.
        \item \cref{alg:stoc:line:alignment-test,alg:stoc:line:periodicity-test}:
        Running the Matching and Periodicity Tests (\cref{lem:alignment-test,lem:periodicity-test}) takes time
        $O(r_v |X_v| \log n + k \log |X_v| + p) \leq O(r_v |X_v| \log n + k \log n + p)$.
        \item \crefrange{alg:stoc:line:alignment-test-condition}{alg:stoc:line:periodic-solve}:
        Running \cref{lem:fast-shifts-ed} takes time \[O(p(\log p)^{O(\log_\Delta(p))}\Delta + k) \leq p(\log p)^{O(\log_\Delta(p))}\Delta.\]
        Here we used the assumption that $k \leq p$.
        \item \crefrange{alg:stoc:line:iter-children}{alg:stoc:line:range-minimum}:
        The loop runs for $\ell$ iterations.
        In each iteration, we sample a precision in \cref{alg:stoc:line:precision} in expected $O(1)$-time, perform a recursive computation which we ignore here, and apply \cref{lem:range-minimum} in time $\Order(k)$. The total time is $\Order(k\ell)$.
        \item \crefrange{alg:stoc:line:iter-output}{alg:stoc:line:recovery}:
        The loop runs for $\Order(k)$ iterations and in each iteration we apply the recovery algorithm from \cref{lem:precision-sampling} with parameters $\varepsilon = \Omega(1/\log n)$ and $\delta \geq 1/\poly(n)$.
        Each execution takes time $\Order(\ell \varepsilon^{-2} \log(\delta^{-1})) = \Order(\ell \polylog(n))$.
        Hence, the total time is $O(k \ell \polylog(n))$.
    \end{itemize}
    Thus, the overall time for one execution is
    \begin{equation}
        O(r_v |X_v| \log n + k \ell \polylog(n) + p(\log p)^{O(\log_\Delta(p))}\Delta). \label{proof:runningtime:eqn:1}
    \end{equation}
    We will simplify this term by plugging in the (expected) rate~$r_v$ for any node $v$.

    Recall that~$r_v = 10000 \gamma^2 \cdot (K \cdot u_{v_1} \dots u_{v_d})^{-1}$ where $v_0, v_1, \dots, v_d = v$ is the root-to-node path leading to $v$ and each $u_i$ is sampled from \makebox{$\mathcal D(\varepsilon = (2 \log n)^{-1}, \delta = 0.01 \cdot (kn)^{-1})$}, independently.
    Using the efficiency property of \cref{lem:precision-sampling} with $N = 200 n$, there exist events $\calE_w$ happening each with probability $1 - 1/N$ such that
    \begin{equation*}
        \Ex(1 / u_w \mid \calE_w) \leq \Order(\varepsilon^{-2} \polylog(N, \delta^{-1}, \varepsilon^{-1})) \leq \polylog(n).
    \end{equation*}
    Taking a union bound over all nodes in $T$ (there are at most $2n$ many), the event $\calE := \bigwedge_w \calE_w$ happens with probability at least $0.99$; we will condition on $\calE$ from now on.
    Under this condition, we have:
    \begin{align*}
        \Ex(r_v \mid \calE)
            &= \frac{10000 \cdot \gamma^2}K \prod_{i=1}^d \Ex(1 / u_{v_i} \mid \calE_{v_i}) \\
            &\leq \frac{\gamma^2 \cdot (\log n)^{\Order(d)}}K
            \leq \frac{\gamma^2 \cdot (\log n)^{\Order(\log_\ell(n))}}K.
    \end{align*}

    Let $c$ be the constant so that in the bound above we have $(\log n)^{O(\log_\ell(n))} \leq (\log n)^{c \log_\ell(n)}$.
    We now set $\ell = \ceil{(\log n)^{c \log_\Delta(n)}}$, so that $(\log n)^{c \log_\ell(n)} \leq \Delta$.
    Therefore, $\Ex(r_v \mid \calE) \leq \gamma^2 \Delta/K$.

    Finally, we can bound the total expected running time (conditioned on $\calE$) summing~\eqref{proof:runningtime:eqn:1} over all active nodes $v$:
    \begin{equation*}
        \sum_v \Order\!\left(|X_v| \cdot \frac{\gamma^2 \cdot \Delta}K + k\ell\polylog(n) + p(\log p)^{O(\log_\Delta(p))}\Delta\right)\!.
    \end{equation*}
    To bound the first term, we use that $\sum_w |X_w| = n$ whenever $w$ ranges over all nodes on a fixed level in the partition tree, and thus $\sum_v |X_v| \leq n \cdot D$ where $v$ ranges over all nodes.
    Thus, recalling that $\gamma = (\log p)^{O(\log_{\Delta}(p))}$ we can bound the first term in the sum by
    \[
        \frac{n}{K} \cdot \Delta \cdot (\log p)^{O(\log_\Delta(p))}.
    \]

    The other terms get multiplied by the number of active nodes.
    By \cref{lem:active-nodes}, the number of active nodes is $O(kD\ell  + \BP_p(X)D\ell) \leq O(BD\ell)$
    with probability at least 0.96 (here we used that $\BP_p(X) \leq B$ and $k \leq B$).
    Conditioned on this, the expected time for the second and third terms is bounded by
    \[
        O\left(k\ell\polylog(n) \cdot BD\ell  + p(\log p)^{O(\log_\Delta(p))}\Delta \cdot BD\ell\right).
    \]
    Using that $k \leq p$, and that $D \leq \log n$, this is bounded by
    \[
        O(p(\log p)^{O(\log_\Delta(p))}\Delta \cdot B \ell^2 \polylog(n)) \leq p(\log n)^{O(\log_\Delta(n))}\Delta \cdot B.
    \]
    In the last step we used that $\ell = (\log n)^{O(\log_\Delta(n))}$ and $p \leq n$.

    Combining the above, we conclude that the overall expected running time is bounded by

    \begin{align*}
        \frac{n}{K} \cdot \Delta \cdot (\log p)^{O(\log_\Delta(p))} + p(\log n)^{O(\log_\Delta(n))}\Delta \cdot B \\
        \leq \left(\frac{n}{K} \cdot \Delta + p B \cdot \Delta\right) \cdot (\log n)^{O(\log_\Delta(n))}.
    \end{align*}

    We conditioned on two events: The event $\calE$ and the event that the number of active nodes is bounded by $O(BD\ell)$ (\cref{lem:active-nodes}).
    Both happen with probability at least $0.96$, thus the total success probability is at least $0.9$.
\end{proof}

\begin{proof}[Proof of~\cref{lem:main-stoc}]
    We run \cref{alg:stoc} with 10 times the time budget \[\left(\frac{n}{K} \Delta + pB \cdot \Delta\right) (\log n)^{O(\log_\Delta(n))}\] given by \cref{lem:runningtime-stoc}.
    If the algorithm exceeds the time budget, we interrupt the computation and return \Far{}.
    By the guarantee of \cref{lem:runningtime-stoc} and Markov's inequality, returning \Far{} in this case is correct with probability at least 0.9.

    If the algorithm terminates, then it computes a value $\eta = \eta_{r, 0}$ where $r$ is the root node of the partition tree.
    By \cref{lem:correctness-stoc}, this value satisfies
    \[
        \frac{\ED(X, Y)}{10 \gamma} - \frac{0.0001 K}{\gamma^2}
            \leq \eta
            \leq 10 \gamma \TD^k(X, Y) + \frac{0.0001 K}{\gamma}.
    \]
    Recall that here, $\gamma = (\log p)^{O(\log_\Delta(p))}$ is the approximation factor of \cref{lem:fast-shifts-ed}.
    If $\ED(X, Y) \leq k$, then by \cref{lem:equiv-td-ed} we obtain that
    \[
        \eta \leq 10 \gamma \cdot 3 D\ell \ED(X, Y) + \frac{0.0001K}{\gamma}
            \leq (\log n)^{O(\log_\Delta(n))} \cdot k + \frac{0.0001K}{\gamma}
    \]
    where we used that the depth $D$ of the tree satisfies $D \leq \log n$, that $\ell = (\log n)^{O(\log_\Delta(n))}$ and that $p \leq n$.
    Conversely, if $\ED(X, Y) > K$ then
    \[
        \eta \geq \frac{\ED(X, Y)}{10 \gamma} - \frac{0.0001K}{\gamma^2}
            > \frac{0.099K}{\gamma}.
    \]
    Thus, to distinguish these two cases we need that
    \[
        \frac{0.099K}{\gamma} > (\log n)^{O(\log_\Delta(n))} \cdot k + \frac{0.0001K}{\gamma}.
    \]
    Using that $p \leq n$, this holds if $K/k > (\log n)^{\Theta(\log_\Delta(n))}$ for a sufficiently large hidden constant.
\end{proof}

\section{Fast Edit Distance Approximation for Periodic Strings}\label{sec:shifts}

In this section, we prove \cref{lem:approxed-periodic-PM}. Our main tool is the following recent result:

\begin{theorem}[Dynamic Approximate ED {\cite[Theorem 7.10]{KMS23}}]\label{thm:dynamic}
    There exists a dynamic algorithm that, initialized with integers $2 \le b \le n$,
    maintains (initially empty) strings $X,Y\in \Sigma^{\le n}$ subject to character edits and,
    after each update, outputs an $\Oh(b \log_b n)$-approximation of $\ED(X,Y)$.
    The amortized expected cost of each update is $b^2\cdot (\log n)^{\Oh(\log_b n)}$,
    and each answer is correct with high probability.
    The probabilistic guarantees hold against an oblivious adversary.
\end{theorem}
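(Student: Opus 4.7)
The plan is to develop a dynamic variant of the tree-distance framework used in \cref{sec:bcfn} (originating from Andoni--Krauthgamer--Onak). The data structure maintains a balanced partition tree with branching factor $b$ and depth $\Oh(\log_b n)$ simultaneously over both strings $X$ and $Y$. At each internal node $v$ and each shift $s$ in a relevant range, it stores an approximate tree-distance value analogous to $\TD_{v,s}^L(X,Y)$ from \cref{def:shift-td}. As in \cref{lem:equiv-td-ed}, the value stored at the root is a multiplicative $\Oh(b\log_b n)$ approximation of $\ED(X,Y)$: each level contributes an $\Oh(b)$ blow-up to the approximation ratio (analogous to the $2(\ell-1)h+1$ factor), and the tree has $\Oh(\log_b n)$ levels.

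First, I would describe how the partition tree itself is maintained under character insertions, deletions, and substitutions, using a standard weight-balanced B-tree with branching factor $\Theta(b)$. Each character edit causes structural changes along a single root-to-leaf path of length $\Oh(\log_b n)$, and rebalancing operations can be charged amortized. After an edit, only the ancestors of the affected leaf (and their immediate siblings affected by shifts) require updated tree-distance values. At each affected node, these values are recomputed from the children via the recursive rule, with \cref{lem:range-minimum} providing the shifted $\TTD$ across all relevant shifts in amortized linear time per node.

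To avoid spending $\Omega(bn)$ work per update (as would arise from enumerating all shifts at every level), I would apply precision sampling (\cref{lem:precision-sampling}) at each internal node: sample precisions independently for each child, recursively recompute tree distances only at the sampled shifts, and invoke \alg{Recover} to assemble the parent's approximate value. This yields $\Oh(b^2)$ core work per affected node for combining children across sampled shifts, plus $(\log n)^{\Oh(1)}$ overhead per level from the recovery procedure. Multiplied across the $\Oh(\log_b n)$ affected nodes along the update path, the amortized expected cost becomes $b^2 \cdot (\log n)^{\Oh(\log_b n)}$ per update, matching the stated bound. The approximation factor follows from \cref{lem:equiv-td-ed} applied to the maintained tree.

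The main obstacle is controlling the accumulation of multiplicative and additive errors introduced by precision sampling across $\Oh(\log_b n)$ recursion levels while preserving the $\Oh(b\log_b n)$ approximation. This requires carefully chosen parameters $(\varepsilon,\delta)$ per level so that the multiplicative blow-up $(1+\varepsilon)^{\Oh(\log_b n)}$ remains $\Oh(1)$, the additive errors telescope cleanly, and---via a union bound over all nodes, shifts, and updates---each output is correct with high probability against an oblivious adversary. A further subtlety is that randomness attached to individual nodes (precision samples and any hash functions used for fast substring equality) must remain consistent under rebalancing; handling this by refreshing randomness on subtrees that undergo structural changes, amortized within the claimed update bound, is where most of the technical work lies.
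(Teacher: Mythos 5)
This theorem is not proved in the paper at all: it is imported verbatim from [KMS23, Theorem~7.10] and used as a black box (the paper only re-parametrizes it in \cref{cor:dynamic}). So your sketch is being measured against a full external result, and as written it has genuine gaps rather than being an alternative proof. The most serious one is your claim that a character edit only forces recomputation along a single root-to-leaf path plus rebalancing. An insertion or deletion in $Y$ changes the relative alignment of the entire suffix of $Y$ after the edit point, so the substrings $Y_{v,s}$ --- and hence the true values $\TD^L_{v,s}(X,Y)$ --- change at a constant fraction of all nodes, not just at the $\Oh(\log_b n)$ ancestors of one leaf; likewise an edit in $X$ shifts the leaf ranges of every later position in the partition tree. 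Coping with this global re-alignment is precisely the central difficulty of the dynamic problem, and ``weight-balanced B-tree, recompute the ancestors'' does not address it. A second gap is the per-node cost: in the static algorithms the shift range is capped at $k$, but a dynamic algorithm must handle arbitrary $\ED(X,Y)$, so you need an argument for why only about $b$ (or polylogarithmically many) shifts per node need to be maintained to get $b^2$ work per affected node; you assert this bound without deriving it, and without it the $\TTD$ minimization via \cref{lem:range-minimum} costs up to $\Omega(n)$ per node.

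Two further issues: reusing stored child values across updates means later \alg{Recover} calls depend on old precision samples, and the failure probability must be controlled over the whole (oblivious) update sequence; ``refresh randomness on subtrees that undergo structural changes'' leaves open both the cost of refreshing (proportional to subtree size, needing its own amortization argument) and the correctness of outputs produced between refreshes. Finally, your accounting ``each level contributes an $\Oh(b)$ blow-up to the approximation ratio'' is wrong if read multiplicatively (it would give $b^{\Theta(\log_b n)} = n^{\Theta(1)}$); the correct accounting, as in \cref{lem:equiv-td-ed}, is an additive $\Oh(b)$ distortion per level summing to $\Oh(b\log_b n)$, and on top of that you must show the precision-sampling losses $(1+\varepsilon)$ per level and the additive $\beta$-terms stay within this budget. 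In short, the plan is in the same general spirit as the actual [KMS23] construction, but the steps you defer --- global shift maintenance, bounding the number of shifts per node, and randomness reuse --- are exactly the technical content of that paper, so the proposal does not constitute a proof of the stated theorem.
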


First, let us change the parametrization to be consistent with the remainder of this paper:
\begin{corollary}\label{cor:dynamic}
    There exists a dynamic algorithm that, initialized with integers $2 \le \Delta \le n$,
    maintains (initially empty) strings $X,Y\in \Sigma^{\le n}$ subject to character edits and,
    after each update, outputs an $(\log n)^{\Oh(\log_\Delta n)}$-approximation of $\ED(X,Y)$.
    The amortized expected cost of each update is $\Delta \cdot (\log n)^{O(\log_\Delta n)}$,
    and each answer is correct with high probability.
    The probabilistic guarantees hold against an oblivious adversary.
\end{corollary}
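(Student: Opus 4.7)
The plan is to prove \cref{cor:dynamic} as a direct reparametrization of \cref{thm:dynamic}, choosing the parameter $b$ as an appropriate function of $\Delta$ and $n$. The natural choice is $b := \lceil \Delta^{1/2} \rceil$, which satisfies $2 \le b \le n$ under the hypotheses on $\Delta$.

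With this substitution, I would compute $b^2 = \Theta(\Delta)$ and $\log_b n = 2 \log_\Delta n$, so that the amortized expected update cost from \cref{thm:dynamic} becomes
\[
    b^2 \cdot (\log n)^{O(\log_b n)} \;=\; \Delta \cdot (\log n)^{O(\log_\Delta n)},
\]
matching the claim. For the approximation factor, \cref{thm:dynamic} delivers $O(b \log_b n) = O(\sqrt{\Delta} \cdot \log_\Delta n)$, and I would bound this by $(\log n)^{O(\log_\Delta n)}$ via the elementary inequality $(\log \Delta)^2 = O(\log n \cdot \log \log n)$, which holds in the subpolynomial regime of $\Delta$ where the corollary is used throughout the rest of the paper.

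The probabilistic guarantees (high-probability correctness against an oblivious adversary, support for character edits over $\Sigma$) transfer verbatim from \cref{thm:dynamic}, since the reduction is purely a change of variables rather than a new algorithm.

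The only mildly delicate point is the final inequality absorbing $\sqrt{\Delta} \log_\Delta n$ into $(\log n)^{O(\log_\Delta n)}$: for very large $\Delta$ the naive absorption would fail, so one may instead take $b := \max\bigl(\lceil \sqrt{\Delta} \rceil,\ \lceil (\log n)^{c \log_\Delta n} \rceil\bigr)$ for a constant $c$ chosen at least as large as the implicit constant in the theorem's exponent $O(\log_b n)$. This forces the approximation factor into the claimed polylogarithmic form while inflating the cost by at most a $\Delta^{O(1/c)}$ factor that the target bound absorbs. I expect this boundary-case bookkeeping to be the only real obstacle, and it is routine.
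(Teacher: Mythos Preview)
Your update-cost computation with $b=\lceil\sqrt{\Delta}\rceil$ is correct, and you rightly observe that the approximation bound $O(\sqrt{\Delta}\log_\Delta n)\le (\log n)^{O(\log_\Delta n)}$ fails once $\Delta$ is polynomially large in $n$ (at $\Delta=n$ it would require $\sqrt{n}\le(\log n)^{O(1)}$). The gap is in your proposed repair: taking the \emph{max} with $(\log n)^{c\log_\Delta n}$ can only increase $b$, and the approximation factor $O(b\log_b n)$ grows with~$b$. So in precisely the problematic regime $\sqrt{\Delta}>(\log n)^{c\log_\Delta n}$ the max equals $\sqrt{\Delta}$ and you are back to the inequality you already could not establish. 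Enlarging $b$ shrinks the $(\log n)^{O(\log_b n)}$ factor in the cost, but it does nothing to tame the approximation ratio.

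The paper's fix is simply to drop the $\sqrt{\Delta}$ term and set $b=(\log n)^{c\log_\Delta n}$ outright (with $c$ at least the hidden constant in the theorem's cost exponent). This yields the identity $(\log n)^{c\log_b n}=\Delta$, so the update cost becomes $b^2\cdot\Delta=(\log n)^{2c\log_\Delta n}\cdot\Delta$ and the approximation factor $O(b\log_b n)\le (\log n)^{c\log_\Delta n+1}$, both as required. One must then also handle the edge case $b>n$ (which does occur, e.g.\ for $\Delta=2$ and large $n$, and which \cref{thm:dynamic} disallows): here $n\le (\log n)^{O(\log_\Delta n)}$, so the trivial scheme---store $X,Y$ explicitly, compare them after each edit, and return $0$ or $n$---already gives an $n$-approximation in $O(n)$ time per update, meeting both target bounds. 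This second case is not just bookkeeping; it is a genuinely separate algorithm.
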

\begin{proof}
    Let $b = (\log n)^{c \log_\Delta n}$, where the constant $c$ is chosen so that the update time of \cref{thm:dynamic} does not exceed $b^2\cdot (\log n)^{c\log_b n}$.

    If $b\le n$, we simply use the algorithm of \cref{thm:dynamic}.
    Note that $\Delta = (\log n)^{c \log_b n}$, so the approximation ratio is $\Oh(b \log_b n) =  \Oh((\log n)^{c \log_\Delta n}\cdot  \log n) = (\log n)^{\Oh(\log_\Delta n)}$, whereas the update time can be expressed as $b^2 \cdot (\log n)^{c \log_b n} =  (\log n)^{2c \log_\Delta n}\cdot \Delta = \Delta \cdot (\log n)^{\Oh(\log_\Delta n)}$.

    If $b > n$, then $n \le (\log n)^{\Oh(\log_\Delta n)}$, so it suffices to provide $n$-approximations of $\ED(X,Y)$
    with $\Oh(n)$ time per update. For this, we simply maintain $X$ and $Y$, check whether $X=Y$ for upon each update,
    and, depending on the outcome, report $0$ or $n$ as the $n$-approximation of $\ED(X,Y)$.
\end{proof}

We apply \cref{cor:dynamic} to estimate the distances between a pattern $P$ and substrings of a text $T$.

\begin{lemma}\label{lem:approxed-PM}
    Given strings $P$ and $T$ of lengths $m\le n$, respectively,
    and integer $\Delta\in \intervalcc{2}{n}$, one can compute,
    for all $i\in \intervalcc{0}{n-m}$, multiplicative $(\log n)^{\Oh(\log_\Delta n)}$-approximations of $\ED(P, T\intervalco{i}{i+m})$ in total time $n\Delta \cdot (\log n)^{O(\log_\Delta n)}$ correctly with high probability.
\end{lemma}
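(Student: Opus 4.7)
\textbf{Proof plan for \cref{lem:approxed-PM}.}
The plan is to apply the dynamic approximate edit distance algorithm of \cref{cor:dynamic} in a sliding-window fashion, using $X$ to store the pattern $P$ (never modified after initialization) and $Y$ to store the current window $T\intervalco{i}{i+m}$ of the text.

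First I would initialize the dynamic data structure of \cref{cor:dynamic} with parameters $\Delta$ and $n$, and build $X = P$ and $Y = T\intervalco{0}{m}$ by performing $2m$ character insertions. Then, for each $i = 0,1,\dots,n-m$, I would query the current $(\log n)^{O(\log_\Delta n)}$-approximation of $\ED(X,Y) = \ED(P, T\intervalco{i}{i+m})$ and record it. After the query for index $i$, I would advance the window by one position: delete the first character of $Y$ (one update) and append $T\access{i+m}$ at the end of $Y$ (another update). This brings $Y$ from $T\intervalco{i}{i+m}$ to $T\intervalco{i+1}{i+m+1}$ using exactly two character edits, and then we proceed to the next query.

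For the running time, the initialization costs $2m$ updates and the sliding phase costs $2(n-m)$ updates, giving $O(n)$ updates in total. Each update has amortized expected cost $\Delta \cdot (\log n)^{O(\log_\Delta n)}$ by \cref{cor:dynamic}, so the total expected running time is $n\Delta \cdot (\log n)^{O(\log_\Delta n)}$, as required. For correctness, each of the $n-m+1$ queries returns a $(\log n)^{O(\log_\Delta n)}$-approximation correct with high probability; by a union bound over the polynomially many queries (amplifying the per-query success probability by a constant factor in the exponent if necessary), all returned approximations are simultaneously correct with high probability. The probabilistic guarantee against oblivious adversaries is sufficient here because the sequence of updates and queries is fixed in advance, depending only on $P$, $T$, and $\Delta$, and not on the algorithm's random choices.

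There is essentially no technical obstacle: the lemma is a direct reduction to \cref{cor:dynamic}, and the only mild subtlety is verifying that the adversary is oblivious (which holds because our update sequence is completely determined by the input) and that the union bound over $n-m+1$ queries does not spoil the approximation factor or the high-probability guarantee, both of which are immediate.
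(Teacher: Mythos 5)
Your proposal is correct and matches the paper's proof essentially verbatim: both build $X=P$ and $Y=T\intervalco{0}{m}$ with $2m$ edits and then slide the window with one deletion and one insertion per step, relying on the obliviousness of this fixed update sequence. The only detail the paper adds is converting the \emph{expected} running-time bound into a guaranteed one (interrupt after twice the expected time by Markov's inequality and repeat $O(\log n)$ times), since the lemma claims a total-time bound rather than an expected one.
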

\begin{proof}
    We apply \cref{cor:dynamic} with the same parameters. Starting with empty $X$ and $Y$, we use $2m$ edits to set $X:=P$ and $Y:=T\intervalco{0}{m}$.
    Next, we iteratively set $Y:=T\intervalco{i}{i+m}$ for subsequent $i\in \intervalcc{0}{n-m}$ to obtain $(\log n)^{\Oh(\log_\Delta n)}$-approximations of $\ED(P, T\intervalco{i}{i+m})$.
    For this, we note that two edits (one insertion and one deletion) are enough to transform $T\intervalco{i}{i+m}$ to $T\intervalco{i+1}{i+m+1}$.

    Overall, we use \cref{cor:dynamic} for an oblivious sequence of $2n$ edits.
    Consequently, the total expected running time is $n\Delta \cdot (\log n)^{O(\log_\Delta n)}$
    and all the answers are correct with high probability.

    By Markov's inequality, the probability that the algorithm exceeds twice its expected time bound is $1/2$.
    By interrupting the algorithm after this time, and repeating the overall process $O(\log n)$ times, we obtain the lemma statement.
\end{proof}

Our next goal is to improve the running time provided that both strings have a common period.
We start with an auxiliary combinatorial lemma that yields a $3$-approximation of the edit distance between two $p$-periodic strings of the same length.
\begin{lemma}\label{lem:ed-approx-periodic}
    Let $P,Q$ be strings of positive length~$p$, let $n$ be a positive integer,
    and denote $d=\floor{n/p}$ and $r=n \bmod p$.
    Then, the edit distance of strings  $X=P^*\intervalco{0}{n}$ and $Y=Q^*\intervalco{0}{n}$ satisfies
    \[
        \ED(X, Y) \leq \ED(P\intervalco{0}{r}, Q\intervalco{0}{r}) + \min_{s \in \Int} (d \cdot \ED(P, Q^{\circlearrowright s}) + 2|s|) \leq 3\ED(X, Y).
    \]
\end{lemma}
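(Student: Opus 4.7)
The upper bound is constructive. Applying subadditivity to the split $X = P^d \cdot P\intervalco{0}{r}$ and $Y = Q^d \cdot Q\intervalco{0}{r}$, it suffices to show $\ED(P^d, Q^d) \leq d \cdot \ED(P, Q^{\circlearrowright s}) + 2|s|$ for every integer $s$. I would obtain this from the two-step triangle inequality $\ED(P^d, Q^d) \leq \ED(P^d, (Q^{\circlearrowright s})^d) + \ED((Q^{\circlearrowright s})^d, Q^d)$: the first summand is at most $d \cdot \ED(P, Q^{\circlearrowright s})$ by another application of subadditivity, and the second is at most $2|s|$ because $(Q^{\circlearrowright s})^d = Q^*\intervalco{s}{s+dp}$ and $Q^d = Q^*\intervalco{0}{dp}$ differ by a cyclic shift of length $|s|$, realizable by deleting $|s|$ characters from one end and inserting them at the other.

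For the lower bound $\ldots \leq 3\ED(X,Y)$, set $E := \ED(X,Y)$ and fix an optimal alignment, described by a monotone $f\colon \intervalcc{0}{n} \to \intervalcc{0}{n}$. For $i \in \intervalcc{0}{d}$ define $\tau_i := f(ip) - ip$ (so $\tau_0 = 0$), $\sigma_i := \tau_i \bmod p$, $\Delta_i := \tau_{i+1} - \tau_i$, and let $t := \tau_d$. The alignment splits $E = E_1 + E_2$ with $E_1 = \sum_{i=0}^{d-1}\ED(P, Q^*\intervalco{f_i}{f_{i+1}})$ and $E_2 = \ED(P\intervalco{0}{r}, Q^*\intervalco{f_d}{n})$, and standard length-difference bounds give $\sum_i |\Delta_i| \leq E_1$ and $|t| \leq E_2$. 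I plan to choose $s := \tau_{i^*}$, where $i^*$ comes from a pigeonhole: by the periodicity identity $Q^{\circlearrowright \sigma_i} = Q^*\intervalco{f_i}{f_i + p}$, the triangle inequality yields $\ED(P, Q^*\intervalco{f_i}{f_{i+1}}) \geq \ED(P, Q^{\circlearrowright \sigma_i}) - |\Delta_i|$, while the length-difference bound yields $\ED(P, Q^*\intervalco{f_i}{f_{i+1}}) \geq |\Delta_i|$; averaging these two shows $\sum_{i<d}\ED(P, Q^{\circlearrowright \sigma_i}) \leq 2E_1$, so some $i^*$ satisfies $d \cdot \ED(P, Q^{\circlearrowright \sigma_{i^*}}) \leq 2E_1$.

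To bound $|s| = |\tau_{i^*}|$ I extend the trajectory with a virtual endpoint $\tau_{d+1} := 0$; the extended sequence has total variation $\sum_i|\Delta_i| + |t| \leq E_1 + E_2 = E$, and since both endpoints vanish, a bouncing-back argument (each $|\tau_i|$ is at most the smaller of the two one-sided variations) gives $|\tau_i| \leq E/2$ for every $i$, hence $2|s| \leq E$. Finally, a third triangle bound $\ED(P\intervalco{0}{r}, Q\intervalco{0}{r}) \leq E_2 + |t| \leq 2E_2$ (the additive $|t|$ comes from $\ED(Q^*\intervalco{f_d}{n}, Q^*\intervalco{dp}{n}) \leq |t|$, which holds because both sides are suffixes of $Q^*$ ending at position $n$ whose lengths differ by exactly $|t|$) combines everything into $d\cdot\ED(P,Q^{\circlearrowright s}) + 2|s| + \ED(P\intervalco{0}{r}, Q\intervalco{0}{r}) \leq 2E_1 + E + 2E_2 = 3E$. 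The main obstacle I expect is this final bookkeeping: the pigeonhole consumes only the $E_1$ budget, the bounce-back consumes the full $E_1+E_2$ budget, and the tail triangle adds $2E_2$, so the three bounds must line up exactly to give $3E$ (a looser accounting already produces $4E$).
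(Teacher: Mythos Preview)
Your proposal is correct and follows essentially the same approach as the paper. Both proofs handle the first inequality identically, and for the second both decompose an optimal alignment at the block boundaries $ip$, pick an index by averaging so that $d\cdot\ED(P,Q^{\circlearrowright s})\le 2E_1$, bound the tail term by $2E_2$ via a triangle inequality against $Q^*\intervalco{f_d}{n}$, and bound $2|s|\le E$; the only cosmetic difference is that the paper obtains $2|s|=2|y_j-pj|\le \ED(X\intervalco{0}{pj},Y\intervalco{0}{y_j})+\ED(X\intervalco{pj}{n},Y\intervalco{y_j}{n})=E$ directly from prefix/suffix length differences, whereas you phrase the same inequality as a bounded-variation (``bouncing-back'') argument on the trajectory $(\tau_i)$.
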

\newcommand{\rot}[2]{#1^{\circlearrowright #2}}

\begin{proof}
    Let us start with the lower bound. Fix $s\in \Int$
    and observe that the triangle inequality and the subadditivity of edit distance yield
    \begin{align*}
        \ED(X,Y) & \le \ED(P^d, Q^d) + \ED(P\intervalco{0}{r}, Q\intervalco{0}{r}) \\
        & \le \ED(P^d, (\rot{Q}{s})^d) + \ED(Q^d, (\rot{Q}{s})^d) +  \ED(P\intervalco{0}{r}, Q\intervalco{0}{r}) \\
        & \le d \cdot \ED(P, \rot{Q}{s}) + 2|s| + \ED(P\intervalco{0}{r}, Q\intervalco{0}{r}).
    \end{align*}

    To prove upper bound, let us partition $Y = \bigodot_{i=0}^{d} Y\intervalco{y_i}{y_{i+1}}$ so that
    \[\ED(X,Y) = \sum_{i=0}^{d-1} \ED(P, Y\intervalco{y_i}{y_{i+1}})+\ED(P\intervalco{0}{r}, Y\intervalco{y_d}{n}).\]
    Fix $j\in \intervalco{0}{d}$ such that $\ED(P, Y\intervalco{y_j}{y_{j+1}})$ does not exceed the average \[\tfrac{1}{d}\left(\ED(X,Y)-\ED(P\intervalco{0}{r}, Y\intervalco{y_d}{n})\right).\]
    Take $s := y_j - pj$ so that $\rot{Q}{s}=Y\intervalco{y_j}{y_j+p}$. Thus, by the triangle inequality,
    \begin{align*}
    \ED(P, \rot{Q}{s}) &\le \ED(P, Y\intervalco{y_j}{y_{j+1}}) + \ED(Y\intervalco{y_j}{y_{j+1}}, Y\intervalco{y_j}{y_j+p})
    \\&= \ED(P, Y\intervalco{y_j}{y_{j+1}}) + |y_{j+1}-y_j - p| \\
    &\le 2 \ED(P, Y\intervalco{y_j}{y_{j+1}}) \\
    &\le \tfrac{2}{d} \left(\ED(X,Y)- 2\ED(P\intervalco{0}{r},Y\intervalco{y_d}{n})\right).
    \end{align*}
    Similarly, $Q\intervalco{0}{r}=Y\intervalco{pd}{n}$ implies
    \begin{align*} \ED(P\intervalco{0}{r}, Q\intervalco{0}{r}) & \le \ED(P\intervalco{0}{r},Y\intervalco{y_d}{n}) + \ED(Y\intervalco{pd}{n},Y\intervalco{y_d}{n}) \\
        & \le \ED(P\intervalco{0}{r},Y\intervalco{y_d}{n})+ |y_d-pd| \\
        &\le 2\ED(P\intervalco{0}{r},Y\intervalco{y_d}{n}).
    \end{align*}
    At the same time,
    \[
        2|s| = 2|y_j-pj|\le \ED(X\intervalco{0}{pj}, Y\intervalco{0}{y_j}) + \ED(X\intervalco{pj}{n}, Y\intervalco{y_j}{n})= \ED(X,Y).\]
    Overall, we have
    $\ED(P\intervalco{0}{r}, Q\intervalco{0}{r}) + d \cdot \ED(P, Q^{\circlearrowright s}) + 2|s| \leq  2\ED(P\intervalco{0}{r},Y\intervalco{y_d}{n}) + d\cdot\frac{2}{d}(\ED(X,Y)- \ED(P\intervalco{0}{r},Y\intervalco{y_d}{n})) + \ED(X,Y) = 3\ED(X, Y)$.
\end{proof}

We are now ready to prove \cref{lem:approxed-periodic-PM}, whose statement is repeated below for readers' convenience.
\lemapproxexperiodic*
\begin{proof}
    Let us first argue that, without loss of generality, we can change the length of $T$ to $m+p-1$ while preserving its string period $T\intervalco{0}{p}$.
    \begin{itemize}
        \item If $n < m+p-1$, then the distances $\ED(P, T\intervalco{i}{i+m})$ for $i\in\intervalcc{0}{n-m}$ are unaffected by the extension. Moreover, extending $T$ does not affect the claimed asymptotic running time because the $\Oh(n-m)$ term is dominated by $\Oh(p)$.
        \item If $n > m+p-1$, then  the distances $\ED(P, T\intervalco{i}{i+m})$ for $i\in\intervalco{0}{p}$ are unaffected by the reduction.
        The remaining distances satisfy $\ED(P, T\intervalco{i}{i+m})=\ED(P, T\intervalco{i- p}{i- p+m})$,
        so the approximation of the latter distance can be returned as the approximation of the former.
        This post-processing costs $\Oh(n-m)$ extra time.
    \end{itemize}
    Thus, we henceforth assume $n=m+p-1$.

    We apply \cref{lem:approxed-PM} twice:
    for $P\intervalco{0}{p}$ and $T\intervalco{0}{2p-1}$,
    as well as for $P\intervalco{0}{r}$ and $T\intervalco{0}{p+r-1}$, where $r = m \bmod p$.
    Suppose that, for each $i\in \intervalco{0}{p}$, this yields values $e_i$ and $f_i$
    that are $(\log p)^{\Oh(\log_\Delta p)}$-approximations of $\ED(P\intervalco{0}{p}, T\intervalco{i}{i+p})$
    and  $\ED(P\intervalco{0}{r}, T\intervalco{i}{i+r})$, respectively.
    For every $i\in \intervalco{0}{p}$, our algorithm returns
    \[g_i := f_i + \min_{j\in \intervalco{0}{p}} (\floor{m/p}\cdot e_j+2\min(|i-j|,p-|i-j|)).\]
    The values $g_i$ can be computed as $u$-to-$w_i$ distances in a weighted graph $G$ consisting of $2p+1$ vertices
    $u,v_0,\ldots,v_{p-1},w_{0},\ldots,w_{p-1}$ and the following edges for each $i\in \intervalco{0}{p}$:
    \begin{itemize}
        \item $u \to v_i$ of length $\floor{m/p}\cdot e_i$;
        \item $v_i \leftrightarrow v_{(i+1)\bmod p}$ of length $2$;
        \item $v_i\to w_i$ of length $f_i$.
    \end{itemize}
    This is because the shortest path of the form $u \to v_j \leadsto v_i \to w_i$
    is of length precisely $\floor{m/p}\cdot e_j + 2\min(|i-j|,p-|i-j|) + f_i$.

    The running time is $p\Delta \cdot (\log p)^{O(\log_\Delta p)}$ for the applications of \cref{lem:approxed-PM}
    plus $\Oh(p\log p)$ for Dijkstra's single-source shortest paths algorithm. The first of these terms dominates.

    The correctness stems from \cref{lem:ed-approx-periodic}, which implies that, for every $i\in \intervalco{0}{p}$,
    the following quantity is a $3$-approximation of $\ED(P, T\intervalco{i}{i+m})$:
    \[\ED(P\intervalco{0}{r},T\intervalco{i}{i+r})+\min_{j\in \intervalco{0}{p}} \left(\floor{m/p} \cdot \ED(P\intervalco{0}{p},T\intervalco{j}{j+p}) + 2\min(|i-j|,p-|i-j|)\right).\]
    If we replace the two edit distances by their $(\log p)^{\Oh(\log_\Delta p)}$-approximations,
    we get a $3\cdot (\log p)^{\Oh(\log_\Delta p)}$-approximation of $\ED(P, T\intervalco{i}{i+m})$.
\end{proof}

\bibliographystyle{halpha}
\bibliography{references}

\end{document}